\newcommand{\ignore}[1]{}
\newcommand{\ncom}[1]{}
\newtheorem{lemma}{Lemma}
\newtheorem{claim}{Claim}
\newtheorem{theorem}{Theorem}
\newtheorem*{theorem*}{Theorem}
\begin{document}
\title{Kmerlight: fast and accurate $k$-mer abundance estimation}

\author{
    \IEEEauthorblockN{Naveen Sivadasan\IEEEauthorrefmark{1}, Rajgopal Srinivasan\IEEEauthorrefmark{1}, Kshama Goyal\IEEEauthorrefmark{2}}\\
    \IEEEauthorblockA{\IEEEauthorrefmark{1}TCS Innovation Labs Hyderabad, India
    \\{\{naveen, raj\}@atc.tcs.com}}\\
    \IEEEauthorblockA{\IEEEauthorrefmark{2}kshama.goyal@gmail.com}
}

\maketitle



\begin{abstract}
$k$-mers (nucleotide strings of length $k$) form the basis of several algorithms in computational genomics.
In particular, $k$-mer abundance information in sequence data is useful in read error correction, parameter estimation for genome assembly, digital normalization etc. 
We  give a streaming algorithm Kmerlight for computing the $k$-mer abundance histogram from sequence data. Our algorithm is fast and uses very small memory footprint. We provide analytical bounds on the error guarantees of our algorithm.  Kmerlight can efficiently process genome scale and metagenome scale data using standard desktop machines. Few applications of abundance histograms computed by Kmerlight are also shown. We use abundance histogram for de novo estimation of repetitiveness in the genome based on a simple probabilistic model that we propose. We also show estimation of $k$-mer error rate in the sampling using abundance histogram.
Our algorithm can also be used for abundance estimation in a general streaming setting.
The Kmerlight tool is written in C++ and is available for download and use from https://github.com/nsivad/kmerlight.
\end{abstract}

\section{Introduction}
$k$-mers (nucleotide strings of length $k$)	 play a fundamental role in computational genomics. $k$-mers form the basis of many assembly and alignment algorithms. 
Understanding abundance of $k$-mers in sequence reads have applications in read quality estimation, read error correction \cite{ChaissonPevzner2008, chaisson2004, Pevzner2001}, parameter estimation for genome assembly \cite{chikhi2013Genie} and digital normalization \cite{brown2012reference}, etc. 
We consider the problem of computing $k$-mer abundance in a sequence or in a read collection. Specifically, the goal is to compute the count of distinct $k$-mers occurring in the input as well as counts of distinct $k$-mers occurring in the input with given multiplicity (frequency) values. The histogram of such $k$-mer counts for different multiplicity values is referred to as $k$-mer abundance histogram. 

$k$-mer abundance computation has several applications in genome analysis.
In high throughput sequencing, $k$-mer abundance helps in assessing quality of sequence reads and in identifying $k$-mers originating from erroneous reads.
This approach is typically used in spectral alignment techniques for read error detection and correction \cite{kelley2010quake, ChaissonPevzner2008, chaisson2004, Pevzner2001}.
\emph{de Bruijn} graphs of $k$-mers are used in several assembly algorithms. 
Quality of de Bruijn based assemblers crucially depend on the value of $k$-mer size and $k$-mer abundance histogram is helpful in choosing 
appropriate $k$-mer size \cite{chikhi2013Genie}.  
$k$-mer abundance computation also finds application in estimation of $k$-mer error rate in the reads, which helps in understanding sampling error rate \cite{kmerstream}. 
Understanding abundance of  $k$-mers in the genome provides insights into the sequence repetitiveness in the genome. 
Variations in shape of $k$-mer abundance histogram of reads, in particular the positions of peaks in the histogram, have relations to sequencing bias and to the presence of highly polymorphic genomes that contain large number of heterozygous locations in their haplotypes\cite{butler2008allpaths}. 

There are several existing techniques for $k$-mer counting and $k$-mer abundance computation. 
In $k$-mer counting, count of each $k$-mer present in the input is computed.
Algorithms for $k$-mer counting can be used for computing $k$-mer abundance. 
Existing $k$-mer counting tools include  Tallymer \cite{kurtz2008Tally}, Jellyfish \cite{marccais2011jellyfish}, KMC2 \cite{KMC2015}, MSPKmerCounter \cite{li2015mspkmercounter}, DSK \cite{rizk2013dsk}, BFCounter \cite{melsted2011BFCounter}, Khmer \cite{zhang2014khmer}, KAnalyze \cite{audano2014kanalyze}, KmerGenie \cite{chikhi2013Genie}  and Turtle \cite{roy2014turtle}. These tools perform either exact counting or approximate counting. Exact counting techniques rely on large main memory or on a combination of external memory and main memory to handle massive memory requirement. BFCounter uses Bloom filter as a pre-filter to reduce memory requirement. Kmergenie uses random sampling to do approximate counting with reduced memory and compute requirements \cite{chikhi2013Genie}.
It is however known that distinct ($k$-mer) count estimates computed by sampling based approaches are known to have large variance in general unless the sample size is close to the input size \cite{charikar2000}.  Khmer uses CountMin sketch\cite{zhang2014khmer} for approximate counting, for which error margins could be large in general.
KmerStream \cite{kmerstream} is a streaming algorithm for in-memory estimation of the number of distinct $k$-mers and the number of unique $k$-mers. Their approach is an extension of streaming algorithms for estimation of distinct count \cite{barYosef2002}. 
Estimation of $k$-mer counts with larger multiplicities was left as an open problem in \cite{kmerstream}. 


\subsection{Our Contribution}
We present a streaming algorithm Kmerlight to estimate the total number of distinct $k$-mers in the input, denoted as $F_0$, as well as the histogram of total number of $k$-mers with  multiplicity $i$, denoted as $f_i$. 
Our algorithm is an extension of streaming algorithms for count distinct problem \cite{barYosef2002} and for counting unique $k$-mers \cite{kmerstream}. 
Streaming algorithm for computing provable estimates for  $f_i$ for  $i > 1$ was left as an open problem in \cite{kmerstream} and we solve this problem.
Kmerlight is very fast and it uses small in-memory data structures to compute estimates for $F_0$ and $f_i$ values with high accuracy. 
It uses logarithmic space and runs in linear time. 
We also provide analytical bounds on the error margins achieved by Kmerlight. 
To the best of our knowledge, our algorithm is the first streaming algorithm to efficiently compute $F_0$ as well as $f_i$ values with analytical guarantees. 

We conducted several experiments to measure the accuracy and performance of Kmerlight. For instance, with less than 500 MB RAM (and no disk space), Kmerlight achieved 2\% relative error. We provide a multi-threaded C++ implementation of our tool which is available for download and use. 
Kmerlight can be run on a standard desktop machine and it scales well to genome scale and metagenome scale data. 
Resource frugal nature of Kmerlight allows simultaneous computation of abundance histograms with different values of $k$, which for instance is required in parameter estimation for genome assembly. 

We also exhibit few applications of Kmerlight. 
We use Kmerlight for de novo estimation of $k$-mer repetitiveness in the underlying genome from reads. We propose  a simple probabilistic  model for $k$-mer abundance histogram and use it for the estimation. Analyzing $k$-mer repetitiveness  helps in understanding sequence repetitiveness in the genome. We also use Kmerlight for estimation of $k$-mer error rate in the reads and for estimation of the genome size.

Our algorithm can be used for abundance estimation in a general setting.
To the best of our knowledge, our algorithm is the first streaming algorithm to solve this problem using only sublinear space and query time and with analytical bounds. This we believe is of independent theoretical and practical interest. 


\section{Methods}

Let $F_0$ denote the total number of distinct $k$-mers in the input. For $i \ge 1$, let $f_i$ denote the total number of $k$-mers each occurring with multiplicity (frequency) exactly $i$ in the 
input.
The histogram of $f_i$ values constitutes the $k$-mer abundance histogram.

\subsection{ Algorithm }
We present a streaming algorithm Kmerlight to estimate $F_0$ and $f_i$ for $i \ge 1$. 
Our algorithm maintains a `sketch' of the input seen so far. The space used for this sketch is sublinear in the input size. 
Experiments show that the sketch size in the range of  500 MB to 1 GB RAM can provide high accuracy estimates. The sketch is updated upon seeing each $k$-mer in the input. At any stage, estimates for $F_0$ and $f_i$ values are computed from the sketch. 

Our algorithm is an extension of streaming algorithms for count distinct problem \cite{barYosef2002} and for counting unique $k$-mers \cite{kmerstream}. 
We first give a brief overview of the KmerStream algorithm of \cite{kmerstream} for estimation of $F_0$ and $f_1$. Their algorithm is an adaptation of the $F_0$ estimation algorithm in \cite{barYosef2002}. The general idea 
is to perform multi-level sampling of the input $k$-mer stream. 
Each distinct $k$-mer is sampled exclusively by one of the levels and all occurrences of this $k$-mer is assigned to the same level. Sampling is such that level $j$ samples all occurrences of roughly $1/2^j$ fraction of all distinct $k$-mers in the input. 
At each sampling level, an array of counters is maintained.
When a $k$-mer is assigned to a level, it is further hashed to one of the counters in the counter array and the destination counter is incremented by one. 

The number of counters with value zero is used to estimate $F_0$. Estimated value of $F_0$ along with number of counters containing value 1 is used to estimate $f_1$.
Consider a level $w \ge 1$ and let $t_0$ denote the number of counters at level $w$ that contains value zero. Let $r$ denote the total number of counters at level $w$. 
Under the assumption that ideal hash functions are used, multi-level sampling has following properties. 
The expected number of distinct $k$-mers sampled in level $w$ is given by $N_w = F_0/2^w$. 
Furthermore, the expected number of counters with value zero at level $w$ is given by $r  (1 - 1/r)^{N_w}$. Using $t_0$ as an estimate for this, estimate $\hat{F}_0$ can be computed as 
$$
\hat{F}_0 = 2^{w} \frac{ \ln\left(t_0/r\right)}{\ln\left(1 - \frac{1}{r}\right)}
$$
Finally a `good' level $w$ for estimating $F_0$ is chosen with the property that value of $t_0$ at this level is close to $r/2$.
It was shown in \cite{barYosef2002} that this approach yields $\hat{F}_0$ with the property that  $(1-\epsilon) F_0 \le \hat{F}_0 \le (1+ \epsilon) F_0$ with probability at least $ 1 - \delta$. The algorithm has $O(1)$ update time and uses $O(\frac{1}{\epsilon^2} \log(1/\delta) \log(F_0))$ memory.

\ignore{

	Let $t_1$, be the  number of counters in level $w$ each with value 1. The expected number of sampled $k$-mers in level $w$ with multiplicity $1$ in the input is given by $f_1/2^w$. The expected value of $t_1$ is given by
	$
	\frac{f_1}{2^w} \left(1- \frac{1}{R}\right)^{\frac{F_0}{2^w} - 1}.
	$
	Using $t_1$, estimate $\hat{f}_1$ can be computed as
	$$
	\hat{f}_1 = t_1  \cdot {2^w} \left(1- \frac{1}{R}\right)^{1 - \frac{F_0}{2^w}}.
	$$

}

Let $t_i$ be the number of counters at level $w$ each with value $i$ and is collision free. That is, no two distinct $k$-mers hash to any of these counters. The  expected value of $t_i$ is given by 
$
\frac{f_i}{2^w} \left(1- \frac{1}{r}\right)^{\frac{F_0}{2^w} - 1}.
$
Knowing the value of $t_i$, we can thus estimate $f_i$ as
$$
\hat{f}_i = t_i  \cdot {2^w} \left(1- \frac{1}{r}\right)^{1 - \frac{F_0}{2^w}}
$$
Determining $t_0$ and $t_1$ are easy as counters holding values either $0$ or $1$ are collision free by definition.
On the other hand, determining $t_i$ for $i \ge 2$ is difficult because several of the counters with value $i$ could have collisions.  Furthermore, for larger values of $i$, there are several collision possibilities that results in a counter value of $i$ and analyzing each of these possibilities to bound the error is extremely complicated. This makes the algorithm in \cite{kmerstream} inadequate to estimate $f_i$ for $i \ge 2$. 
In \cite{kmerstream}, authors prove that if $f_1 \ge F_0/\lambda$, their estimate $\hat{f}_1$ is such that, $(1-\epsilon) f_1 \le \hat{f}_1 \le (1+\epsilon) f_1$ with probability at least $1-\delta$ 
using $O(\frac{\lambda^2}{\epsilon^2} \log(1/\delta) \log(F_0))$ memory and with 
$O(1)$ update time.

In the following we discuss our Kmerlight algorithm for estimating $f_i$ for $i \ge 2$ also in addition to $F_0$ and $f_1$ and with similar theoretical guarantees.
\ignore{
The sketch contains an array of counters where each counter is associated with a `dirty' flag. If a collision of different $k$-mers on a counter is detected then that counter is marked `dirty'. The algorithm need not detect all collisions. There could be false positive counters, which are the set of counters marked `non dirty' in spite of collisions. We show that there are `good' levels with only few false positive counters. This allows using the number of non-dirty counters with value $i$ as estimate for $t_i$. 
Details Kmerlight sketch and its update and estimation procedures are given below.}
Kmerlight sketch has $t$ instances, where $t$ is a parameter, and each instance has the following structure.
An instance has $M$ arrays $T_1, \ldots, T_M$ where each of these $M$ arrays correspond to $M$ different sampling levels. We choose $M=64$ in our implementation, which is adequate for  counting up to $2^{64}$ distinct $k$-mers.
Each of these $M$ arrays have $r$ counters, where $r$ is the second parameter for the algorithm.  Each counter of the array is a tuple of the form $\langle v, p\rangle$ where $v\ge 0$ is the counter value and $p$ is a number from $\{0, \ldots, u-1\}$, where $u$ is the third parameter for our algorithm. We use a special counter value `$-1$' to indicate that the counter  is `dirty'.
For counter $T_w[i]$, we use $T_w[i].v$ and $T_w[i].p$ to indicate its $v$ and $p$ values respectively. Initially, all $T_w[i].v$ values are set to zero and all $T_w[i].p$ values are set to `undefined'. 
Sampling level $w \in \{1, \ldots, M\}$, for a $k$-mer is computed using the same approach  as in \cite{barYosef2002, kmerstream}, where a $k$-mer is first hashed using a pairwise independent hash function $h()$. The number of trailing bits in the binary representation of the hashed value starting from the least significant `1' bit is used as its sampling level. For the special case of hashed value being $0$, the $k$-mer is assigned level $M$.
Each instance uses an independent hash functions. 
Let $z$ be the hashed value of a $k$-mer and $w$ be the sampling level assigned to it. Let $x = z/2^w$. 
\ignore{
	Suppose there were no collisions in any of the counters. That is, in any of the  sampling level, a counter is incremented by multiple appearances of a single $k$-mer. Thus the final counter value equals the multiplicity of the $k$-mer which was hashed to it.  Under this assumption, at a given sampling level $w$, let $t_i$ denote the number of counters holding value $i$. The expected number of counters holding value $i$ at level $w$ is given by 
	$
	\frac{f_i}{2^w} \left(1- \frac{1}{R}\right)^{\frac{F_0}{2^w} - 1},
	$
	where $f_i$ is the number of distinct $k$-mers in the input with multiplicity $i$.
	Using $t_i$ as an estimate for this, we can solve for $f_i$ by considering a `good' level $w$ as
	$$
	\hat{f}_i = t_i  \cdot {2^w} \left(1- \frac{1}{R}\right)^{1 - \frac{F_0}{2^w}}.
	$$
	The drawback  of this approach is the unbounded error in the estimation due to collision. Suppose there are two $k$-mers, one with multiplicity $m_1$ and other with multiplicity $m_2$ collide at a counter, then the resulting counter value is $m_1 + m_2$. It is erroneous to consider this counter value towards estimation of number distinct $k$-mers each with multiplicity $m_1 + m_2$. In fact, final value of a counters can be result of collisions of two or more $k$-mers with different combinations of respective multiplicities. This makes error bounding extremely difficult. 
	In order to address this, instead of hashing a kmer to an array location $i$ from $\{0, \ldots, r-1\}$ of some level, 
	}
We map the $k$-mer to a pair of values $(c, j) \in \{0, \ldots, r-1\} \times \{0, \ldots, u-1\}$  given by
$$
c =  \lfloor x/u \rfloor \mod r \mbox{~~~~~~~~~~~and~~~~~~~} j = x \mod u.
$$
where $c$ is the index of the counter at level $w$ and $j$ is the auxiliary information. If the counter $T_w[c]$ is not `dirty' (i.e., $T_w[c].v \ge 0$) then it is updated in the following manner: counter value $T_w[c].v$ is incremented by one if either $T_w[c].p$ equals $j$ or $T_w[c].v$  equals $0$.  In the later case, $T_w[c].p$ is additionally initialized to $j$. If neither of these two conditions hold then the counter $T_w[c]$ is marked `dirty' by assigning $T_w[c].v=-1$. As a consequence, `dirty' counters are discarded from future updates.

Note that every occurrence of the same $k$-mer is mapped to the same $(c, j)$ pair. A counter is marked `dirty' when a collision is detected. The $T_w[c].p$ values help in detecting collisions. Clearly not all collisions can be detected because two different $k$-mers could map to the same $(c, j)$ pair at the same level. A false positive counter is a counter that remains non dirty in spite of collisions.  
We analytically show that the level $w^*$ chosen by the algorithm for estimation is a `good' level such that the set of non dirty counters at level $w^*$ each holding value $i$ have only few false positives. Hence its cardinality is a good estimate for $t_i$. Moreover, the $t_i$ value at this level can be used to estimate $f_i$ with high accuracy. In order to reduce the error probability further, we maintain $t$ independent instances of the above sketch, where $t$ is a parameter, such that all $t$ instances are updated for every input $k$-mer. The $f_i$ estimates are computed from each of these instances and their median value is used as the final estimate $\hat{f}_i$. 

Details of the update and estimation methods are given in Algorithm \ref{algorithm1}. Error bounds achieved by our algorithm are given in Theorem \ref{thm:1}. The proof is given in the supplementary material (Section D) and it assumes truly random hash functions. 
Our algorithm uses logarithmic space, logarithmic time for query and $O(1)$ time per update (for fixed $\lambda$ and $\delta$).
We note that the time and space complexities given in Theorem \ref{thm:1} is for the simultaneous estimation of all $f_i$s with $f_i \ge F_0/\lambda$. We refer the reader to the supplementary material for more discussions on the time and space complexities.

\begin{algorithm} 
\begin{algorithmic}[1]
\STATE \textbf{function} Update($k$-mer a)
\STATE Do the following for each of the $t$ instances of the sketch:
\STATE $z \leftarrow h(a)$
\STATE $w \leftarrow$ $1 + $ Number of trailing zeroes in $z$
\STATE $x \leftarrow \frac{z}{2^{w}} $
\STATE $c \leftarrow \lfloor x/u \rfloor \mod r$
\STATE $j \leftarrow x \mod u$
\STATE Let $T_w[c]$ be $\langle v, p \rangle$
\STATE if $(v \ge 0)$ then 
\STATE ~~~if ($v = 0$) then $T_w[c] = \langle 1, j \rangle$
\STATE ~~~else if $(p \not= j)$ then $T_w[c] = \langle -1, p \rangle$
\STATE ~~~else $T_w[c] = \langle v+1, p \rangle$
\vspace*{0.3cm}

\STATE \textbf{function} Estimate $\hat{F}_0$
\STATE For each instance $l \in \{1, \ldots, t\}$, compute estimate $\hat{F}_0^{(l)}$ as follows:
\STATE $w^* \leftarrow \arg\min_w ||\{c: T_w[c].v = 0\}| - \frac{1}{2}|$
\STATE $p_0 \leftarrow \frac{|\{c: T_{w^*}[c].v = 0\}|}{r}$
\STATE $\hat{F}_0^{(l)} \leftarrow 2^{w^*} \left(\frac{\ln(p_0)}{\ln(1 - 1/r)} \right)$
\STATE $\hat{F}_0 \leftarrow$ median of $\hat{F}_0^{(1)} \ldots \hat{F}_0^{(t)}$.
\RETURN $\hat{F}_0$ 
\vspace*{0.3cm}

\STATE \textbf{function} Estimate $\hat{f}_i$ 
\STATE For each instance $l \in \{1, \ldots, t\}$, compute estimate $\hat{f}_i^{(l)}$ as follows:
\STATE $w^* \leftarrow \arg\max_w |\{c: T_w[c].v = i\}|$
\STATE $p_0 \leftarrow \left(1 - \frac{1}{r} \right)^{\hat{F}_0/2^{w^*}}$
\STATE $p_i \leftarrow \frac{|\{j: T_{w^*}[j].v = c\}|}{r}$
\STATE $\hat{f}_i^{(l)} \leftarrow 2^{w^*} \left( \frac{(r-1) p_i}{p_0} \right)$
\STATE $\hat{f}_i \leftarrow$ median of $\hat{f}_i^{(1)} \ldots \hat{f}_i^{(t)}$.
\RETURN $\hat{f}_i$ 
\end{algorithmic} 
\caption{Computation of $F_0$ and $f_i$ for $i \ge 1$. (Parameters: $t, r, u$)}
\label{algorithm1}
\end{algorithm}

\begin{theorem}
\label{thm:1}
With probability at least $1 - \delta$, estimate $\hat{F}_0$ for $F_0$ and estimates $\hat{f}_i$ for every $f_i$ with $f_i \ge F_0/\lambda$ can be computed 
such that $(1-\epsilon) F_0 \le \hat{F}_0 \le (1+\epsilon) F_0$ and  $(1-\epsilon) f_i \le \hat{f}_i \le (1+\epsilon) f_i$ using
$O(\frac{\lambda}{\epsilon^2} \log(\lambda/\delta) \log(F_0))$ memory and $O(\log(\lambda/\delta))$ time per update.
\end{theorem}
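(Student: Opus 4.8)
The plan is to decompose the analysis into two parts: the estimate $\hat F_0$, which essentially inherits the guarantee of the count-distinct algorithm of Bar-Yossef et al. \cite{barYosef2002}, and the novel estimates $\hat f_i$, whose analysis must contend with undetected collisions (false positives). Throughout I would fix the parameters as $r = \Theta(\lambda/\epsilon^2)$ counters per level, auxiliary range $u = \Theta(\lambda/\epsilon)$, number of instances $t = \Theta(\log(\lambda/\delta))$, and $M = \Theta(\log F_0)$ levels, and I would verify at the end that these choices produce the claimed memory and update-time bounds.

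For $\hat F_0$ I would invoke the analysis of \cite{barYosef2002}: at the level $w^\ast$ selected so that the number of empty counters is closest to $r/2$, the sampled load satisfies $N_{w^\ast} = \Theta(r)$, and with $r = \Omega(1/\epsilon^2)$ a single instance returns a $(1\pm\epsilon)$ estimate with constant probability, which the median over $t$ instances boosts to probability $1-\delta'$. For $\hat f_i$ I would first treat the idealized collision-free model. Assuming truly random hashing, at level $w$ the $N_w = F_0/2^w$ distinct $k$-mers are spread over $r$ counters, so the expected number of collision-free counters holding value $i$ is $\frac{f_i}{2^w}(1-1/r)^{N_w-1}$. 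Substituting this into the estimator of Algorithm \ref{algorithm1} shows the estimator is unbiased, i.e. it recovers $f_i$ exactly under these expectations. Optimizing over $w$, this count is maximized when $N_w = \Theta(r)$, where it equals $\Theta(f_i r / F_0) = \Omega(r/\lambda) = \Omega(1/\epsilon^2)$ using $f_i \ge F_0/\lambda$; this is precisely the regime in which relative error $\epsilon$ is attainable.

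The crux of the argument is controlling false positives at the data-dependent level $w^\ast$ chosen by the argmax rule. A non-dirty counter that actually received two or more distinct $k$-mers must have had all of them drawn with the same auxiliary color, an event of probability $1/u$ per colliding pair. Hence the expected number of false-positive counters at level $w$ is at most $\binom{N_w}{2}\frac{1}{ru} = O(N_w^2/(ru))$, which at the good level $N_w = \Theta(r)$ is $O(r/u)$. Choosing $u = \Theta(\lambda/\epsilon)$ makes this $O(\epsilon r/\lambda)$, i.e. an $\epsilon$-fraction of the true value-$i$ count, so false positives perturb the estimate by at most a $(1\pm\epsilon)$ factor. Simultaneously I would show that the true value-$i$ count concentrates around its mean: since assigning distinct $k$-mers to counters is a balls-in-bins process, the relevant indicators are negatively associated, so a Chernoff/Chebyshev bound with mean $\Omega(1/\epsilon^2)$ yields $(1\pm\epsilon)$ concentration with constant probability. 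To legitimize the data-dependent choice of $w^\ast$, I would take a union bound over all $M = O(\log F_0)$ levels so that every level's count is simultaneously concentrated, forcing the argmax into the $N_w = \Theta(r)$ regime.

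Finally I would amplify and combine. Taking the median of the $t$ instances drives the per-quantity failure probability below $\delta/(\lambda+1)$ once $t = \Theta(\log(\lambda/\delta))$. Since $\sum_i f_i = F_0$, at most $\lambda$ indices satisfy $f_i \ge F_0/\lambda$, so a union bound over $\hat F_0$ and these $\le \lambda$ estimates keeps the total failure probability below $\delta$. The memory tally is $t \cdot M \cdot r$ counters, each of $O(1)$ words (storing $v$ together with a color in $\{0,\dots,u-1\}$, where $\log u = O(\log(\lambda/\epsilon))$), giving $O(\frac{\lambda}{\epsilon^2}\log(\lambda/\delta)\log F_0)$; each update touches all $t$ instances in $O(1)$ time apiece, i.e. $O(\log(\lambda/\delta))$ per update. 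I expect the main obstacle to be the false-positive analysis at the data-dependent level: rigorously bounding undetected collisions, handling counters that receive three or more $k$-mers, and coupling this with the concentration of the value-$i$ count so that the chosen argmax level is provably usable for all relevant $i$ at once.
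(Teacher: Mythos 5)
Your overall architecture is the same as the paper's: a window of good levels where the sampled load is $\Theta(r)$, a second-moment bound ($\mathrm{Var}(X_w)\le E(X_w)$, the paper's Claim~\ref{cl:expvar}) plus Chebyshev inside the window, an argument that out-of-window levels cannot overtake the argmax, and median amplification with a union bound over the at most $\lambda$ relevant indices. The genuine gap is in the false-positive analysis --- exactly the step you flagged as the crux. You bound the expected number of false-positive counters at level $w$ by pair counting, $\binom{N_w}{2}\frac{1}{ru}$, and evaluate it only at the good level $N_w=\Theta(r)$. But $w^*$ is an argmax over \emph{all} levels, so you must also control $\Delta_w$ at shallow levels where $N_w\gg r$; there your bound grows like $N_w^2/(ru)$ (at level $1$ it is roughly $F_0^2/(ru)$) and becomes vacuous, so nothing in your sketch prevents false positives from hijacking the argmax at a shallow level. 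The pair bound cannot be repaired locally, because it overcounts: a counter containing a monochromatic pair plus a third element with a different auxiliary value is \emph{dirty}, not false positive, yet your count includes it. The paper's Lemma~\ref{lem:fp} instead does a per-counter balls-in-bins analysis of the precise surviving event --- two or more elements all carrying the same auxiliary value $j$ \emph{and} no element with any other value --- which yields a probability of order $\left(\frac{np}{u}\right)^2 e^{-3np/4}$ per color with load $np=N_w/r$; the exponential factor tames the quadratic growth at high load and gives the uniform bound $f_p\le 1/u$ at every level, which is what legitimizes the data-dependent argmax.

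Two secondary issues. First, your $u=\Theta(\lambda/\epsilon)$ is too small: since only Markov is available for $\Delta_w$, controlling the false positives simultaneously at all $O(\log F_0)$ out-of-window levels costs a union bound, and the paper takes $u=O(\lambda\log(F_0)/\epsilon^2)$ (see (\ref{eq:prime1}) and (\ref{eq:prime2})); this is harmless for the stated memory bound because $u$ only contributes $\log u$ bits inside each counter word. Second, the estimator for $f_i$ uses $\hat F_0$, not $F_0$, in the exponent $(1-1/r)^{1-\hat F_0/2^{w^*}}$; an $\alpha$-relative error in $\hat F_0$ becomes a multiplicative $e^{\pm O(\alpha)}$ factor since $F_0/2^{w^*}=O(r)$ at the chosen level, so one must require $\hat F_0$ accurate to $\alpha=c_1\epsilon$ and propagate this through (the paper's $\rho$-factor computation); your sketch treats the estimator as unbiased with the true $F_0$ plugged in and never carries this dependence. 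Relatedly, ``every level's count is simultaneously concentrated'' cannot mean multiplicative concentration --- out-of-window means are tiny, so relative concentration is unavailable; what is actually needed, and what the paper proves in Lemma~\ref{lem:badrows} via the geometric decay of $H(w)$ (Claim~\ref{cl:geo}), is the one-sided absolute bound $X_w<(1-\epsilon)E(X_{w'})$ for all $w$ outside $[l,u]$.
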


\subsection{Repetitive Regions in the Genome} \label{sec:repeat}
We show applications of $k$-mer abundance histograms computed by Kmerlight. In this section, we apply $k$-mer abundance histogram of reads for de novo estimation of $k$-mer repetitiveness in the underlying genome. 
If segments in the genome are repetitive, then $k$-mers from these segments would have correspondingly scaled multiplicity in the reads under uniform read coverage.
Conversely, large fraction of $k$-mers having higher multiplicity values in the read collection is indicative of sizable repetitive regions in the genome. 
Thus, understanding $k$-mer repetitiveness in the genome for varying values of $k$ can provide useful insights into the repetitive nature of the genome.
Studying with multiple values of $k$ is useful because large values of $k$ would help in excluding short repeats (length less than $k$) from the analysis. 

For a fixed $k$-mer length $k$,  we define the multiplicity (repetitiveness) of any given location $s$ in the genome as the number of times the $k$-mer starting at location $s$ is present across the whole genome.
Let $g_t$ denote the total number of positions each with multiplicity $t$ in the genome and let $g$ denote the genome size. Clearly, $g = \sum_t g_t$.  
Thus, $g_t/g$ is the fraction of the genome locations each with repetitiveness $t$. In other words, $g_t/g$ is the fraction of the genome locations such that the $k$-sized fragment occurring at any of these locations occur in total $t$ times in the genome.
In the following, we propose a method to estimate $g_t$ values from reads. For this, we propose a  simple probabilistic model for $k$-mer abundance histogram of reads. The model parameters include $g_t$ values, which can thus be inferred from observed histogram.

\vspace*{0.2cm}
\noindent
\textbf{Generative Model for $k$-mer Abundance Histogram} 

We propose a simple probabilistic model for the $k$-mer abundance histogram of sequence reads.
For this we follow the generative model proposed in \cite{kmerstream} where the model assumes uniform coverage of the genome and that the $k$-mers are generated from each position of the genome are Poisson distributed as $Poi(\lambda)$. Furthermore, the true $k$-mers are generated at each position as $Poi(\lambda')$ and the erroneous $k$-mers are generated as $Poi(\lambda - \lambda')$. 

With these assumptions, we give a simple model for the abundance histogram of true $k$-mers.
Parameter $\lambda$ is related to total number of $k$-mers $N$ from the reads as
$ \lambda = N/g,$ where $g$ is the genome size. The $k$-mer error rate is given by $\lambda - \lambda'$.
We use the notations $c$ for the read coverage, $l$ for read length, $n$ for total number of reads, $N$ for total number of $k$-mers and $g$ for the genome size. Clearly 
$
N = n(l-k+1).
$
Since $c = nl/g$, we also obtain that 
\begin{eqnarray}\label{eq:lambda}
\lambda = N/g = c (l-k+1)/l
\end{eqnarray}
and
\begin{eqnarray}\label{eq:g}
g = \frac{N l }{c (l-k+1)}
\end{eqnarray}

Let $G_m$ for $m \ge 1$ denote the set of all distinct $k$-mers each occurring with multiplicity $m$ in the underlying genome. Recalling that $g_m$ denote the total number of positions in the genome each with multiplicity $m$, it follows that $g_m = m\cdot |G_m|$. 
Consider a $k$-mer $x$ belonging to $G_m$.
Let $X_1, X_2, \ldots, X_m$ be $m$ random variables each distributed as $Poi(\lambda')$ and denoting number of times $x$ was sampled from each of its $m$ locations in the genome. 
Let $X=X_1 + \cdots + X_m$ denote the total number of occurrences of $x$ in the final collection of true $k$-mers.  By linearity of Poisson, it follows that $X$ is Poisson distributed as $Poi(m\lambda')$. 
That is, each true $k$-mer with multiplicity $m$ in the genome is Poisson distributed as $Poi(m \lambda')$ in the reads.
We use the known fact that $Poi(\lambda)$ has  peak probability at value $\lambda$ with corresponding probability value $\lambda^\lambda e^{-\lambda}/\lambda! \approx 1/\sqrt{2\pi\lambda}$. 
We consider the abundance histogram of all $k$-mers from $G_m$ present in the reads.
It follows that this abundance histogram is expected to have peak at $m \lambda'$ with peak value $|G_m|/\sqrt{2\pi m \lambda'} = g_m / (m \sqrt{2 \pi m \lambda})$.
Consequently, corresponding to  the $k$-mers in the genome with multiplicity $m$, for $m = 1, 2, \ldots$, 
there would be a peak in the abundance histogram
at $m\lambda'$ with peak value $g_m /(m \sqrt{2\pi m \lambda'})$.
In a diploid case, the first peak (at $\lambda'$) would correspond to the heterozygous $k$-mers in the genome.

From this model, $g_m$ values can be easily computed using the corresponding peak positions and peak values in the histogram.
The fractions $g_m/g$ can be computed using estimate for genome size $g$.
Estimate for $g$ can also be obtained from the  $k$-mer abundance histogram, which we discuss in section \ref{sec:kmererror}. 
By using Kmerlight for computing abundance histograms, estimates of $g_m$ for various $k$-mer sizes can be computed efficiently.

%

\vspace*{0.2cm}
\noindent
\textbf{Erroneous $k$-mers}

The generation of erroneous $k$-mers is modelled in \cite{kmerstream} as following $Poi(\lambda-\lambda')$.  It is further assumed in \cite{kmerstream} that erroneous $k$-mers result from single position errors. In other words, an erroneous $k$-mer at a position could be any one of the $3k$ possible candidates.
Consequently,
most erroneous $k$-mers occur only once in the reads and the cardinality of those with multiplicity two or more are significantly lesser, as typically observed in practice.
Hence erroneous $k$-mers contribute to an initial sharp peak at position $1$ in the abundance histogram, followed by peaks due to true $k$-mers with varying multiplicities. This is illustrated in Figure  \ref{fig_spectrum}. Typically, the fraction of  genome with a given $k$-mer multiplicity decreases with increasing multiplicity value and hence the abundance histogram will typically have peaks with decreasing peak values with increasing multiplicities. 
Since $\lambda$ is related to $c$ and $l$ as in eq (\ref{eq:lambda}),  increasing the coverage $c$ results in increased value of  $\lambda$ and of $\lambda'$,  resulting in a wider gap between the peak due to erroneous $k$-mers and remaining peaks. 

\begin{figure}[htbp] 
\begin{center}
\scalebox{0.3}{
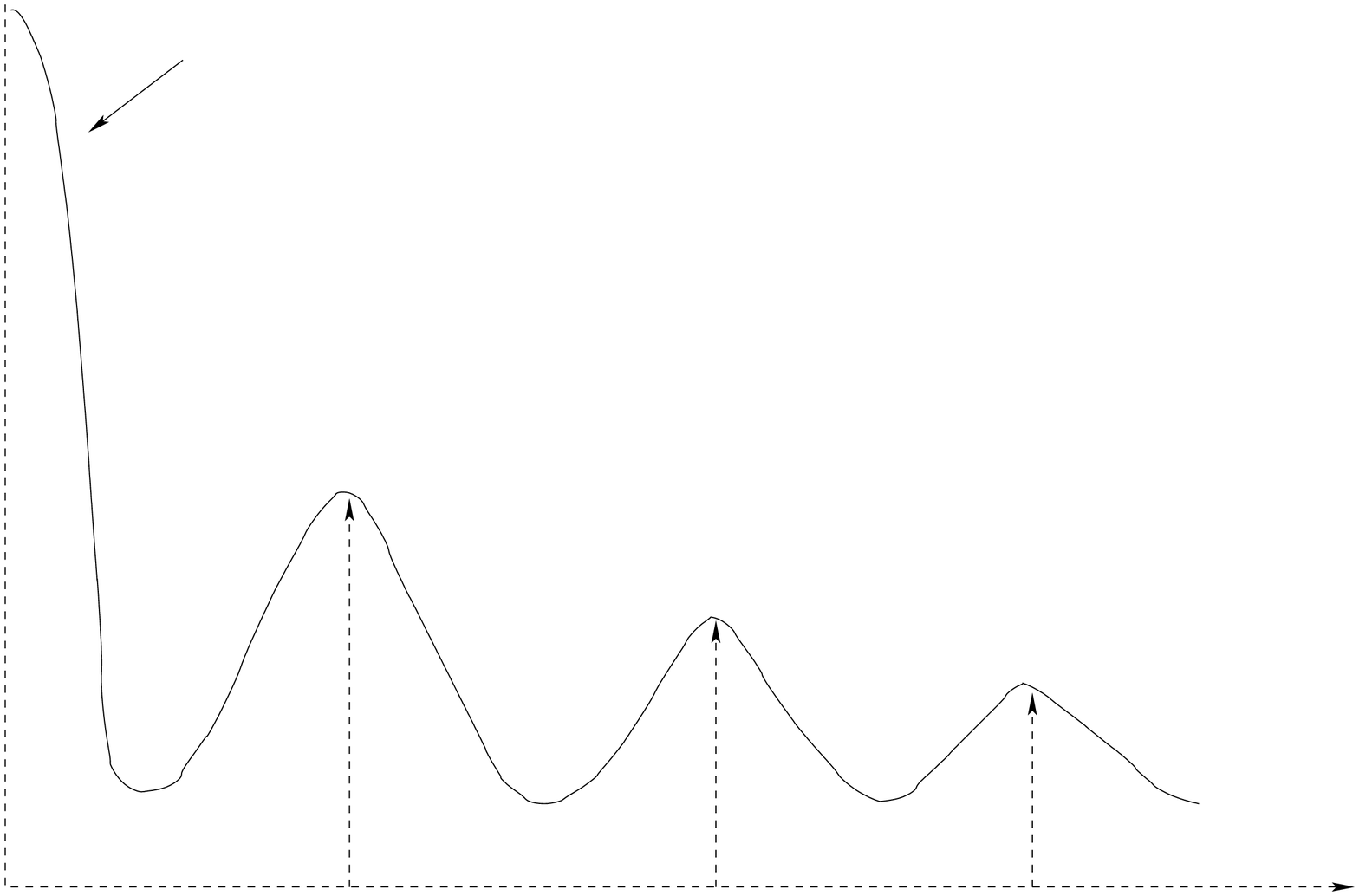}
\caption{Typical $k$-mer abundance histogram of reads.  $x$-axis corresponds to multiplicity values and $y$-axis corresponds to $k$-mer counts. Initial sharp peak is due to read errors and remaining peaks are due to true $k$-mers in the genome with different multiplicities.}
\label{fig_spectrum}
\end{center}
\end{figure}


\subsection{$k$-mer Error Rate Estimation}  \label{sec:kmererror}

$k$-mer error rate is the ratio of total number of erroneous $k$-mers in the sequence reads to the genome size. Understanding $k$-mer error rate gives insights into sequencing 
errors. Fast estimation of $k$-mer error rate from reads was proposed in \cite{kmerstream} by using estimates for $F_0$ and $f_1$ values of $k$-mers in the reads.
For this purpose, a model for erroneous $k$-mer generation was proposed in \cite{kmerstream} and $k$-mer error rate is inferred from this model by numerically solving a set of non linear equations involving estimates for $F_0$ and $f_1$.
Alternatively, 
$k$-mer error rate can be estimated in a straightforward manner if the abundance histogram is available, which can be efficiently computed using Kmerlight.
Let $N'$ and $N_e$ denote the total number of true $k$-mers and total number of erroneous $k$-mers respectively in the sequence reads. Total number of $k$-mers $N$ is thus given by $N=N' + N_e$ and the true $k$-mer rate $\lambda'$ and $k$-mer error rate $\lambda_e$ are given by $\lambda' = N'/g$ and $\lambda_e = N_e/g$ respectively. 
It follows that $\lambda_e$ is given by 
\begin{equation}
\label{eq:ne}
\lambda_e = \frac{\lambda' N_e}{(N-N_e)}
\end{equation}

We recall from  the  model for $k$-mer abundance histogram proposed in the previous section that $\lambda'$ is easily obtained from the peak positions corresponding to true $k$-mers in the histogram. 
Value of $N_e$ can be estimated from the initial peak in the histogram due to erroneous $k$-mers as follows. Estimate $\hat{N}_e$ for $N_e$ is given by $\hat{N}_e = \sum_{i=1}^t i \cdot f_i$. Typically the $f_i$ values for erroneous $k$-mers exhibit sharp decline as $i$ increases. Hence good estimates for $N_e$ can be obtained by considering only first few initial histogram values. 

We remark that this approach avoids detailed modeling of erroneous $k$-mer generation process and solving complex non linear equations as in \cite{kmerstream}.
From $\lambda'$ and $\lambda_e$, we obtain $\lambda = \lambda' + \lambda_e$. Knowing $\lambda$, we can estimate coverage $c$ using eq (\ref{eq:lambda}). Estimate for $c$ can subsequently be used for estimation of $g$ using eq (\ref{eq:g}).


\section{Results}

\subsection {Run time}
We compared run time performance of Kmerlight with the KMC2 $k$-mer counting tool \cite{KMC2015}.  
Kmerlight tool has multi-threaded C++ implementation.
KMC2 was chosen for comparison because KMC2 was shown to outperform other state of the art $k$-mer counting tools \cite{KMC2015} 
in terms of time and space requirements. 
We remark that KMC2 tool computes frequency counts for individual $k$-mers as against the $k$-mer abundance histogram. 
Abundance histogram can nevertheless be computed from individual frequency counts.
Number of threads were kept seven for both tools. KMC2 we considered only the time taken for $k$-mer counting and excluded the additional time for computing the abundance
 histogram.  The run times are given in Table \ref{tab:run1}. 
KMC2 was run with strict memory mode. We give RAM and HDD usage for KMC2. For Kmerlight, only RAM usage is shown since it does not use any HDD space.

The first input contained 328 million reads with read length 90. 
The tools were run on a dual core desktop machine with 4 GB RAM and HDD where each core is a 3.10 GHz Intel i5 processor.   
The input contained  23 billion $21$-mers and 9.2 billion $63$-mers and the respective run times are given in the first two rows of Table \ref{tab:run1}.
KMC2  failed to process the $63$-mers using 1GB RAM and aborted with error.
There is only a minor variation in the Kmerlight run time for different RAM settings. Kmerlight run times for $63$-mers with different memory settings are given in the Table.
The input FASTQ file size was about 77.55 GB and the run times indicate that  Kmerlight processed the file at speeds in the range of 65 MBps to 79 MBps on a desktop machine. These speeds are comparable to the access speed of modern day 7200 rmp hard disks. Kmerlight scales well to genome scale date.
Kmerlight was used to process reads from whole human genome (GRCh38). The reads were generated using ART tool \cite{art2012}  with read length 100 and coverage 50. 
The generated FASTQ file was 320 GB in size and contained 1.45 billion reads. Kmerlight processed 55 billion $63$-mers present in this in 1hr and 20 minutes.

\begin{table}
\caption{Run time comparison of KMC2 and Kmerlight}
\label{tab:run1}
\begin{tabular}{| c | c | c|}
  \hline			
  Input & KMC2  & Kmerlight \\
\hline
\hline
23 billion &	2350 sec -  & 1395 sec (120 MB RAM)  \\
$21$-mers &	(1GB RAM, 20GB HDD) &    \\\hline
9.2 billion  &	Abort - (1GB RAM) & 995 sec (120 MB RAM)\\ 
$63$-mers &	3150 sec -   & 1200 sec (500 MB RAM) \\
 & (2GB RAM, 42GB HDD)	 & 1291 sec (960 MB RAM) \\\hline
144 billion  &	4.5 hr - &  2 hr 10 min (120 MB RAM)\\
$63$-mers & (32 GB RAM, 264GB HDD) & \\\hline
61 billion  &	Abort - (32GB RAM) &  1 hr 10 min (120 MB RAM)\\
$85$-mers &	2 hr 20 min - & \\
& (64GB RAM, 250GB HDD) & \\\hline
\end{tabular}
\end{table}


Next we compared the performance on metagenome scale data. Reads were generated using ART tool \cite{art2012} from about 188039 reference genomes available from the NIH Human Microbiome Project \cite{NIHMicrob}. Read length and coverage were 100 and 50 respectively. The resulting read file was 930 GB in size and contained about 3.8 billion reads. 
These experiments were conducted on a multi-core server with large memory due to larger memory requirement for KMC2 tool. 
The input contained  144 billion $63$-mers and 61 billion $85$-mers and the respective run times are given in the last two rows of Table \ref{tab:run1}.
KMC2  failed to process the $85$-mers using 32GB RAM and aborted with error.

\subsection{Accuracy}
To measure the accuracy of Kmerlight, we compared Kmerlight output with the exact histogram.
In particular, we compared output for different $k$-mer sizes and for different memory usages by Kmerlight.
Reads were generated from human reference chromosome Y (GRCh38) using ART tool \cite{art2012} with read length $100$ and coverage $50$ respectively. 
Figure \ref{figure:histo_true} shows the accuracy of Kmerlight output for $f_4, \ldots, f_{60}$, after processing the $15$-mers  in the input using 500 MB RAM. 
Detailed error analysis of $F_0$ and $f_1, \ldots, f_3$ are given later.
The left plot shows the Kmerlight histogram along with the exact histogram. The shaded bars correspond to exact values. 
Kmerlight values are mean values from 1000 trials. The right side histogram shows the Kmerlight mean output values along with one standard deviation bars.


\begin{figure}[htbp]
\centering
\begin{minipage}[t] {.48\linewidth}
\includegraphics[height=1.05in]{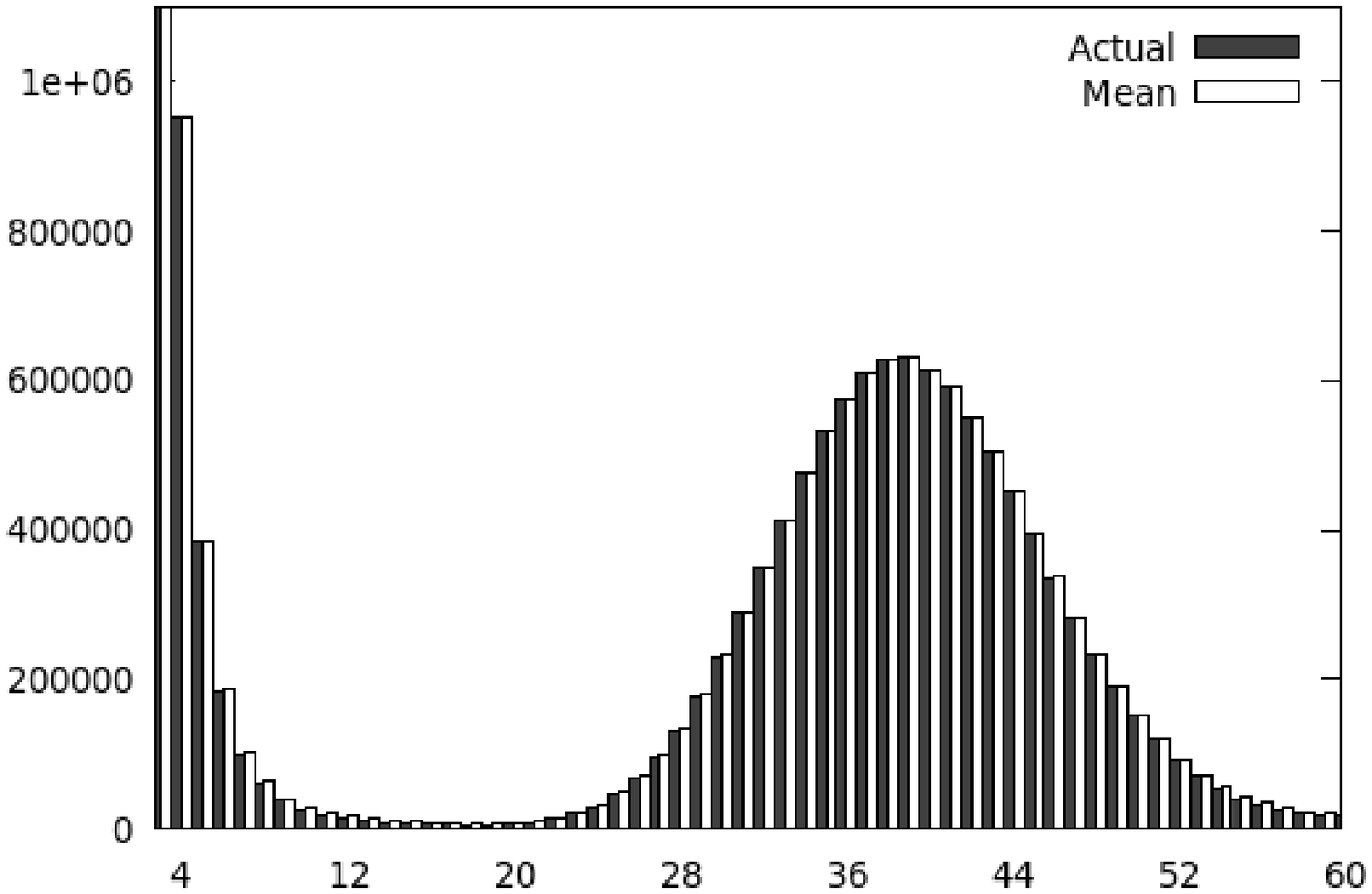}
\end{minipage}
\begin{minipage}[t] {.48\linewidth}
\includegraphics[height=1.05in]{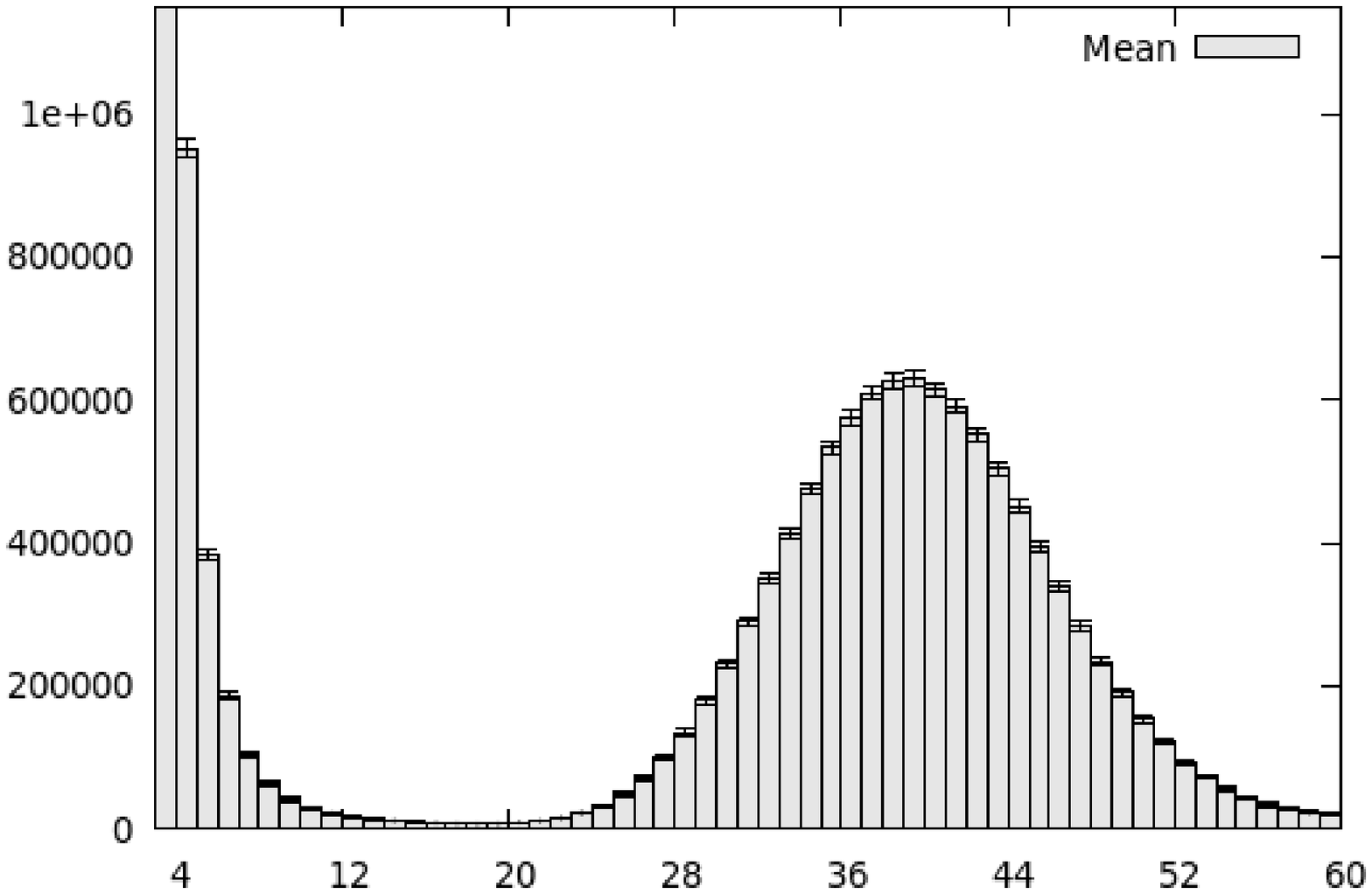}
\end{minipage}
\caption{Histograms of true values and mean output values of Kmerlight. True values are shown side by side as shaded bars.} 
\label{figure:histo_true}
\end{figure}

Figure \ref{figure:hist_rel1} 
shows separate histograms for relative estimate of $F_0, f_1, f_2$ and $f_3$ values of Kmerlight over 1000 trials with respect to true values. Kmerlight used 500 MB RAM setting.
Accuracy plots showing dependence of Kmerlight accuracy other memory settings and on $\lambda$ values are provided in the supplementary material (Section A).

%

\begin{figure}[htbp]
\centering
\begin{minipage}[t]{.48\linewidth}
\centering
\includegraphics[height=1.5in]{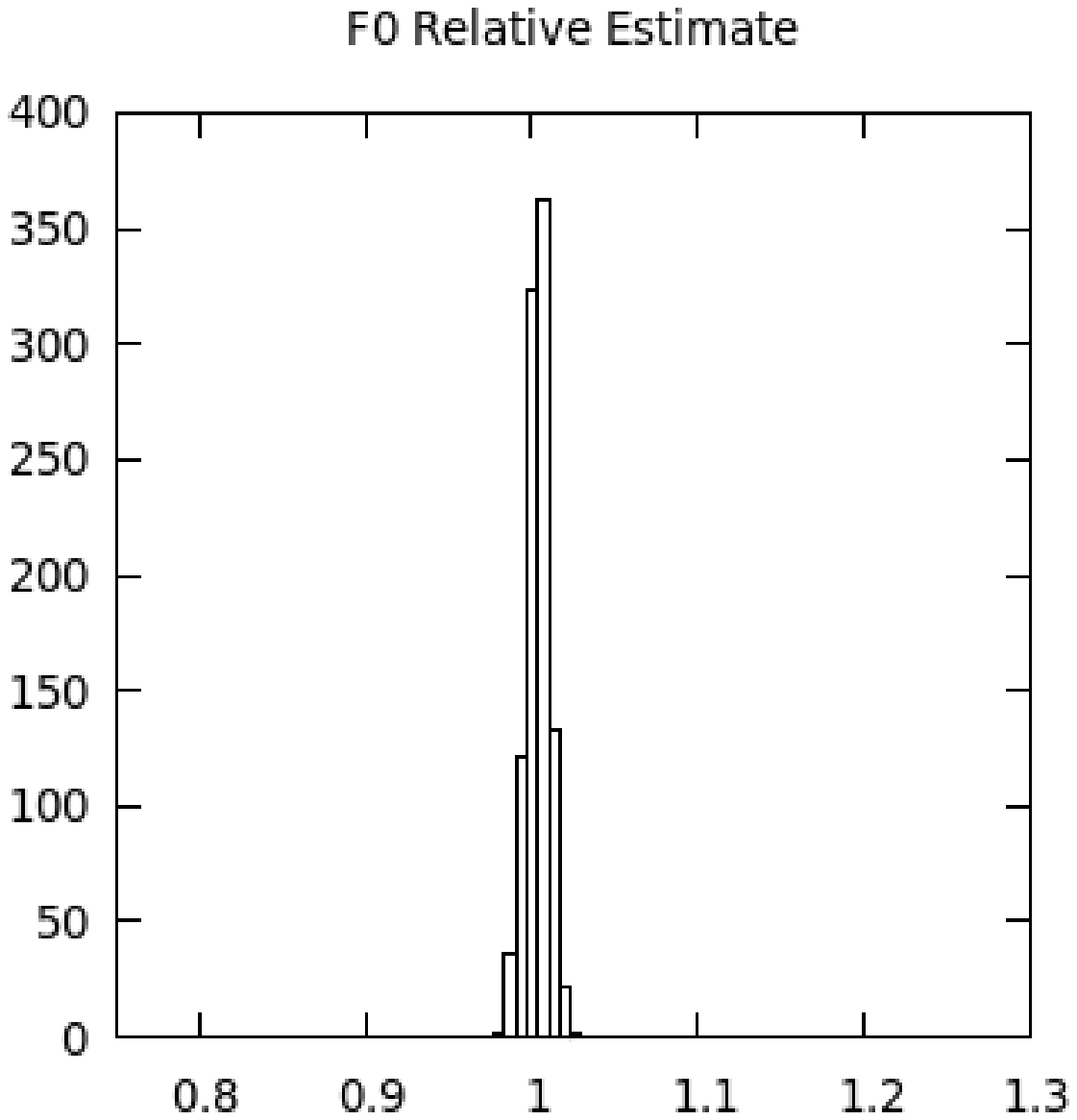}
\end{minipage}
\hfil
\begin{minipage}[t]{.48\linewidth}
\centering
\includegraphics[height=1.5in]{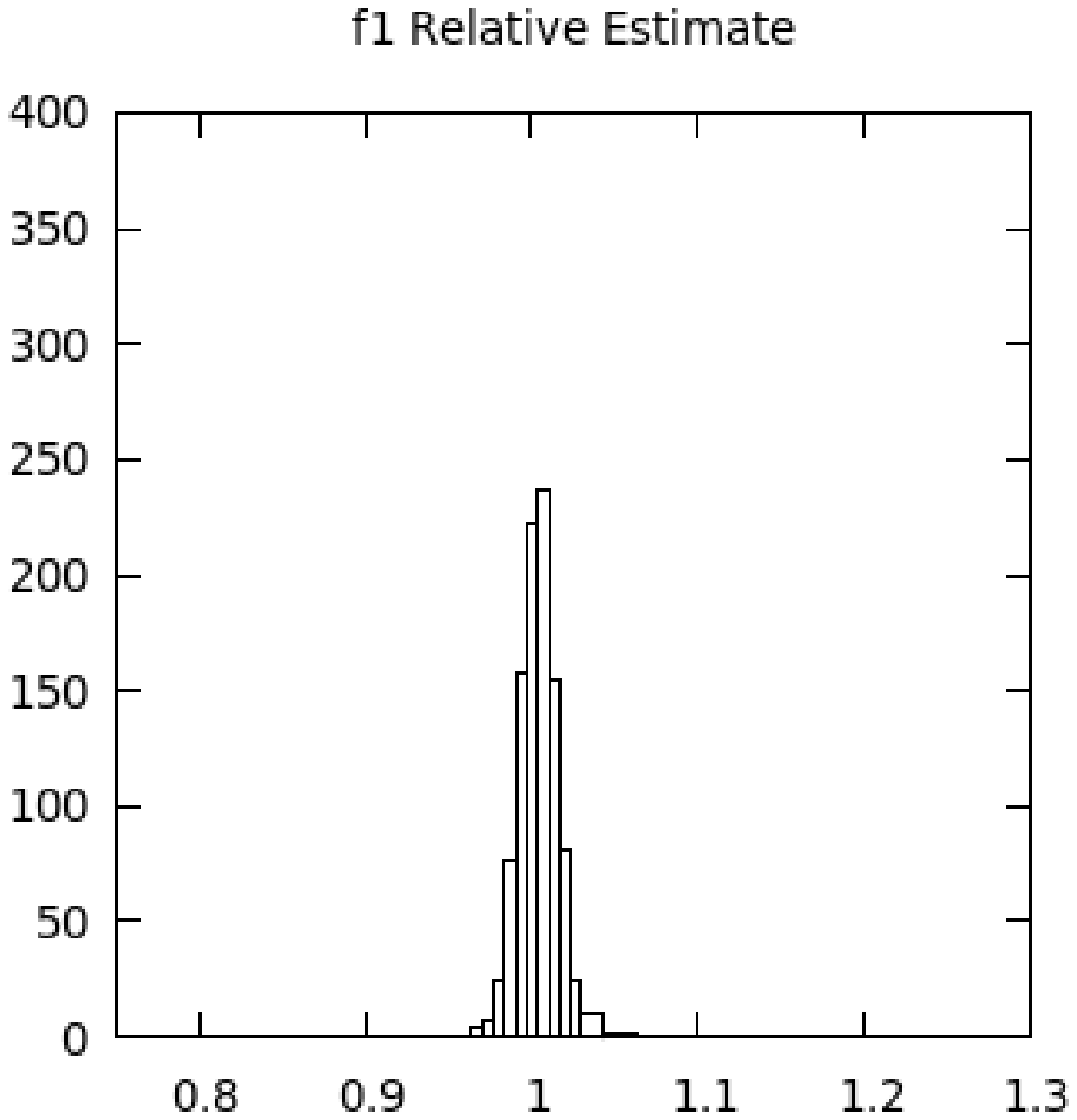}
\end{minipage}
\begin{minipage}[t]{.48\linewidth}
\centering
\includegraphics[height=1.5in]{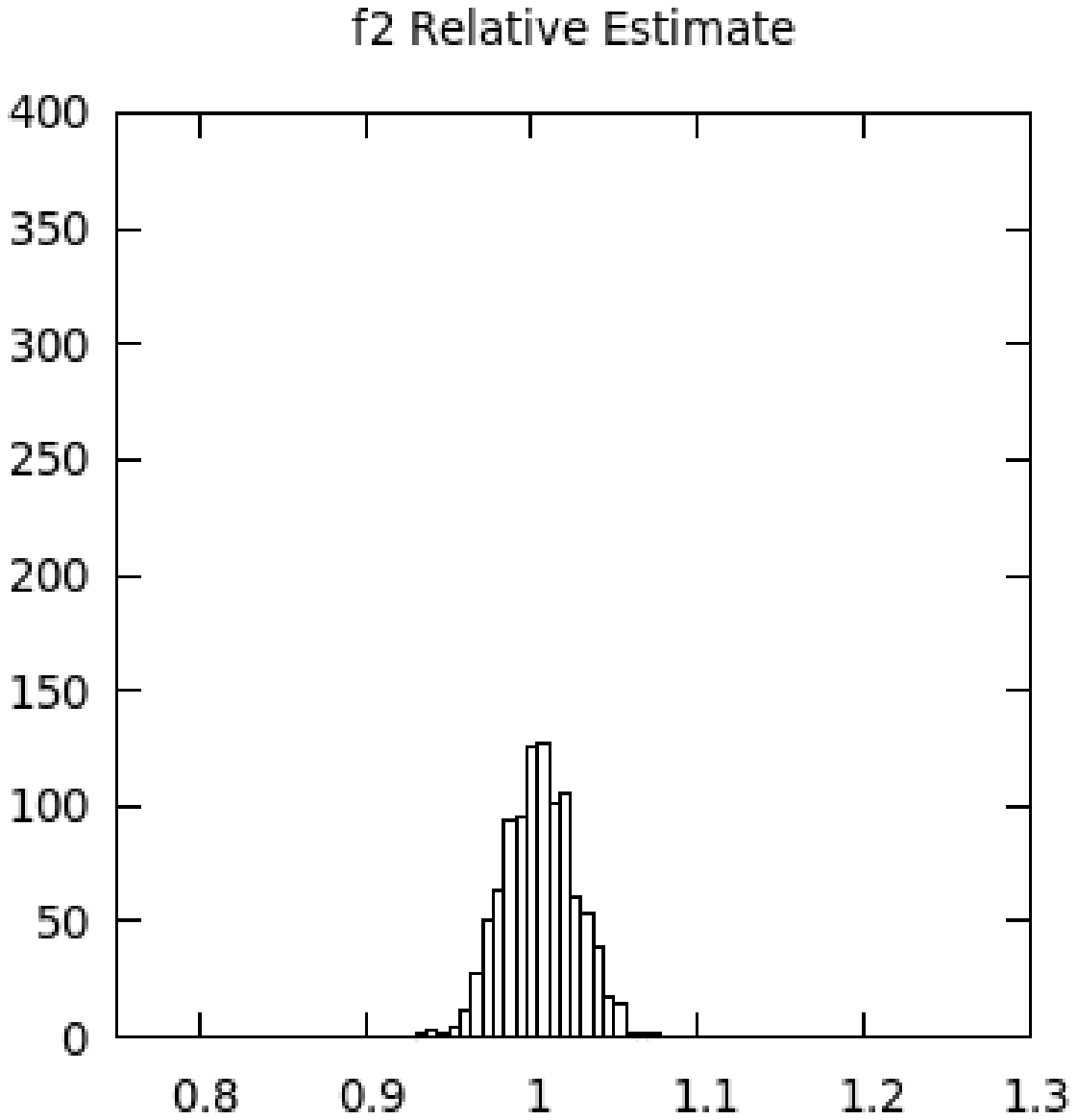}
\end{minipage}
\hfil
\begin{minipage}[t]{.48\linewidth}
\centering
\includegraphics[height=1.5in]{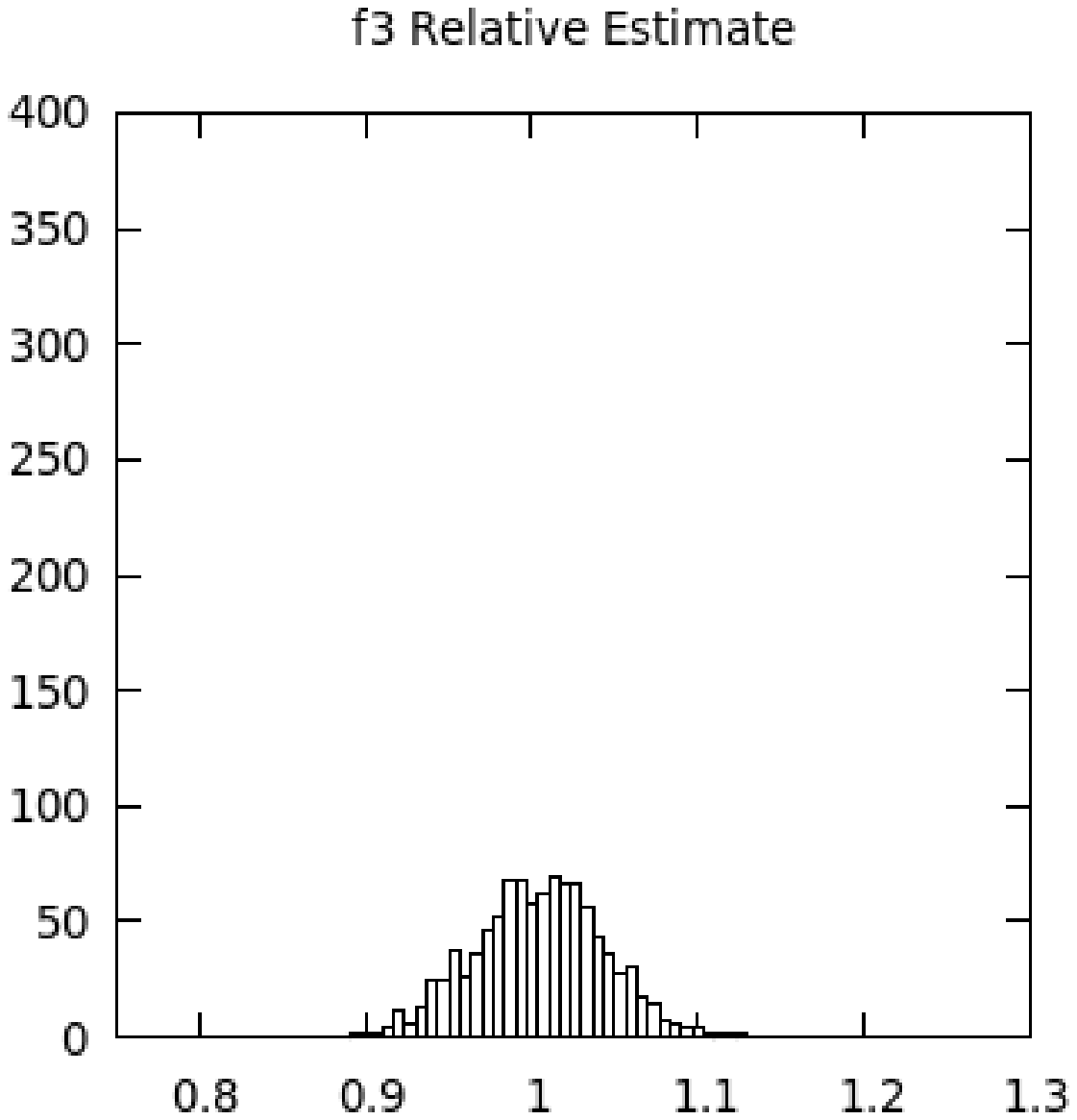}
\end{minipage}
\caption{Histograms showing relative estimate of Kmerlight output for $F_0, f_1, f_2$ and $f_3$ respectively for 1000 trials by Kmerlight with 500 MB RAM setting.}
\label{figure:hist_rel1}
\end{figure}

\ignore{

	\begin{figure*}[t!]
	\centering
	\begin{subfigure}[t]{.48\textwidth}
	\centering
	\includegraphics[height=2in]{f_r16_k21.eps}
	\caption{Kmer spectra }
	\label{figure:spectra}
	\end{subfigure}
	\begin{subfigure}[t]{.48\textwidth}
	\centering
	\includegraphics[height=2in]{f_r18_k21.eps}
	\caption{Kmer spectra }
	\label{figure:spectra}
	\end{subfigure}
	\caption{Mean values for $f_3, \cdots f_6$ averaged over 1000 trials. For each $f_i$, the true value is also shown side by side in red color. The left hand side plot corresponds to  120 MB sketch size and right hand side plot corresponds to 500 MB sketch size. Value of $k$ is 21.}
	\end{figure*}
}

\subsection{Repeats in Chromosome Y}
We used the probabilistic model and the approach discussed in Section \ref{sec:repeat} for de novo estimation of $k$-mer repeats in human reference chromosome Y (GRCh38) from reads.
Validity of the model and the effectiveness of using histogram computed by Kmerlight in place of exact histogram were studied in these experiments.
Reads from chromosome Y were generated using ART tool with read length $100$ and coverage value $50$. 
The built-in profile of Illumina HiSeq 2500 system was used by the ART tool.

Table \ref{tab:gvals} gives the relative errors in estimation of $g_1, g_2$ and $g_3$ values from reads using the model for $k=15$. The estimated values are compared against the true values computed directly from the chromosome Y sequence data. Relative errors are provided separately for estimation using the exact abundance histogram and using Kmerlight computed histograms with different memory settings.
Kmerlight errors are averaged over 1000 trials and the standard deviations are also indicated in the table.
These results indicate that our model is suitable for de novo estimation of $g_i$ values and Kmerlight can be used for the efficient computation of abundance histograms for this purpose.
We refer to the supplementary material (Section B) for additional details including histogram plots, comparison of true peak values and peak values inferred using the model, true $g_i$ values etc.


\begin{table}
\caption{Relative errors in $g_i$ estimation using exact histogram and Kmerlight computed histograms with 500 MB and 940 MB memory settings.
Standard deviations are given inside the brackets.}
\begin{tabular}{| l | c | c | c | }
  \hline			
  $g$ &  Rel. Error  & Rel. Error  & Rel Error \\
   & (exact histogram) &  Kmerlight (500MB) & Kmerlight (940 MB) \\
   &  &   (S.D.) &  (S.D)\\
\hline
\hline
$g_1$ & 0.02  &  0.02 (0.009) & 0.02 (0.011)\\\hline
$g_2$ & 0.02 &  0.042 (0.028)  & 0.02 (0.015) \\\hline
$g_3$ & 0.028 &  0.208 (0.062) & 0.11 (0.043) \\\hline
\end{tabular}
\label{tab:gvals}
\end{table}


\subsection{$k$-mer Error Rate Estimation}
We used Kmerlight to estimate $k$-mer error rate from reads. Reads were generated from human reference chromosome Y (GRCh38) using ART tool with coverage 50 and read length 100. 
The built-in profile of Illumina HiSeq 2500 system was used by the ART tool for read generation.
 The $k$-mer abundance histogram  was then computed using Kmerlight with $k=15$. 
Initial values of this histogram were then used to estimate $N_e$, which denotes the total number of erroneous $k$-mers. In particular, initial 10 histogram values were used for $N_e$ estimation. The first true k-mer peak was observed at $\lambda' = 39$ in the abundance histogram. The $k$-mer error rate $\lambda_e$ was then estimated using eq (\ref{eq:ne}). The average value of $\lambda_e$  over 1000 trials  was obtained as $3.58$. Using exact histogram was used in place of Kmerlight histogram, value of  $\lambda_e$ obtained was 3.52, which is close to the Kmerlight estimate.
 To validate the $k$-mer error rate estimate, we estimated the genome length $g$ using eq (\ref{eq:lambda}) and using the relation $\lambda = \lambda' + \lambda_e$, with $\lambda'=39$ and $\lambda_e=3.58$. 
Alternatively, we used eq (\ref{eq:g}) to estimate $g$ by using the known values $c=50$ and $l=100$ used for read generation. 
The total number of valid nucleotides in the input chromosome Y data was also counted. All the three values for $g$ were closeby and 
the difference between any two of them was within $1\%$. This
indicates high accuracy of the estimated genome length and also the estimated $k$-mer error rate.

\section{Discussions}

\ignore{
	12. Shi H, Schmidt B, Liu W, Muller-Wittig W: A parallel algorithm for error
	correction in high-throughput short-read data on CUDA-enabled
	graphics hardware. J Comput Biol 2010, 17:603-615.
	13. Li R, Zhu H, Ruan J, Qian W, Fang X, Shi Z, Li Y, Li S, Shan G, Kristiansen K,
	Li S, Yang H, Wang J, Wang J: De novo assembly of human genomes
	with massively parallel short read sequencing. Genome Res 2010,
	20:265-272.
	14. Yang X, Dorman K, Aluru S: Reptile: representative tiling for short read
	error correction. Bioinformatics 2010, 26:2526-2533.
}

We proposed the first streaming algorithm Kmerlight for efficient generation of $k$-mer abundance histogram. We provide analytical bounds for the error margins. We also show applications of Kmerlight in the de novo estimation of $k$-mer repeats in the genome using a model that we propose and also in the estimation of $k$-mer error rate and genome length.
We discuss here few additional applications of Kmerlight.

Spectral alignment techniques for read error correction 
in de novo sequencing projects
\cite{ChaissonPevzner2008, chaisson2004, Pevzner2001} 
 usually depend on the set of `trusted' $k$-mers $G^m_k$ in the reads as an approximation to the spectrum $G_k$ which is the set of all $k$-mers in the underlying genome. The set $G^m_k$ denotes the set of all $k$-mers whose frequency is above threshold $m$. The threshold $m$ is usually determined from the region of erroneous $k$-mers in the abundance histogram. Kmerlight allows efficient generation of abundance histograms for various $k$ values.

For choosing appropriate $k$-mer length in de Bruijn based assemblers, Chikhi  et al. \cite{chikhi2013Genie} propose that the most appropriate choice of $k$ is the one that provides maximum number of distinct true $k$-mers to the assembler. This has been shown to yield excellent assembly on a diverse set of genomes. Recalling that $F_0$ denotes the total number of distinct $k$-mers in the sequence reads, $F_0$ can be written as $F_0 = F^e_0 + F'_0$, where $F^e_0$ and $F'_0$ are the total number of distinct erroneous $k$-mers and true $k$-mers respectively. Thus, the objective is to choose $k$ that maximizes $F'_0$. 
In order to do this efficiently, they compute an approximate $k$-mer abundance histogram using down sampling. This is done to overcome the computational overhead of computing exact abundance histograms for different $k$ values. 
This method has been incorporated into the KmerGenie tool. 
Alternatively,  $F'_0$ can be estimated using estimates of $F_0$ and $F^e_0$ as $F'_0 = F_0 - F^e_0$.
Estimate for $F_0^e$ can be obtained by summing values in the abundance histogram region due to erroneous $k$-mers. Kmerlight can be used here for efficient generation of histograms for multiple $k$ values with high accuracy without requiring any down sampling. More details are provided in the supplementary material (Section C).
We remark that estimating $F_0^e$ say from $k$-mer error rate is infeasible because $k$-mer error rate estimates only the total number of erroneous $k$-mers as opposed to the total number of distinct erroneous $k$-mers. 



\bibliographystyle{IEEEtran}
\bibliography{ref}

\clearpage
\newpage
\onecolumn

\centerline{\huge{{Supplementary Material}}}
\vspace*{0.3cm}
\centerline{\Large{{Kmerlight: fast and accurate $k$-mer abundance estimation}}}


\appendices
\section{Additional Error Plots}

We provide additional error plots for the Kmerlight output. 
Input reads were generated from human reference chromosome Y (GRCh38) using ART tool \cite{art2012} with read length $100$ and coverage $50$ respectively. 

Figure \ref{figure:histo_sd} shows the accuracy of the Kmerlight output for $f_4, \ldots, f_{60}$, after processing the $15$-mers  in the input using 120MB RAM setting. 
Detailed error analysis of $F_0$ and $f_1, \ldots, f_3$ are given later.
The left plot shows the Kmerlight histogram along with the exact histogram. The shaded bars correspond to exact values. 
Kmerlight values are mean values from 1000 trials. The right side histogram shows the Kmerlight mean output values along with one standard deviation bars.

\begin{figure}[htbp]
\centering
\begin{minipage}[t] {.48\linewidth}
\includegraphics[height=2in]{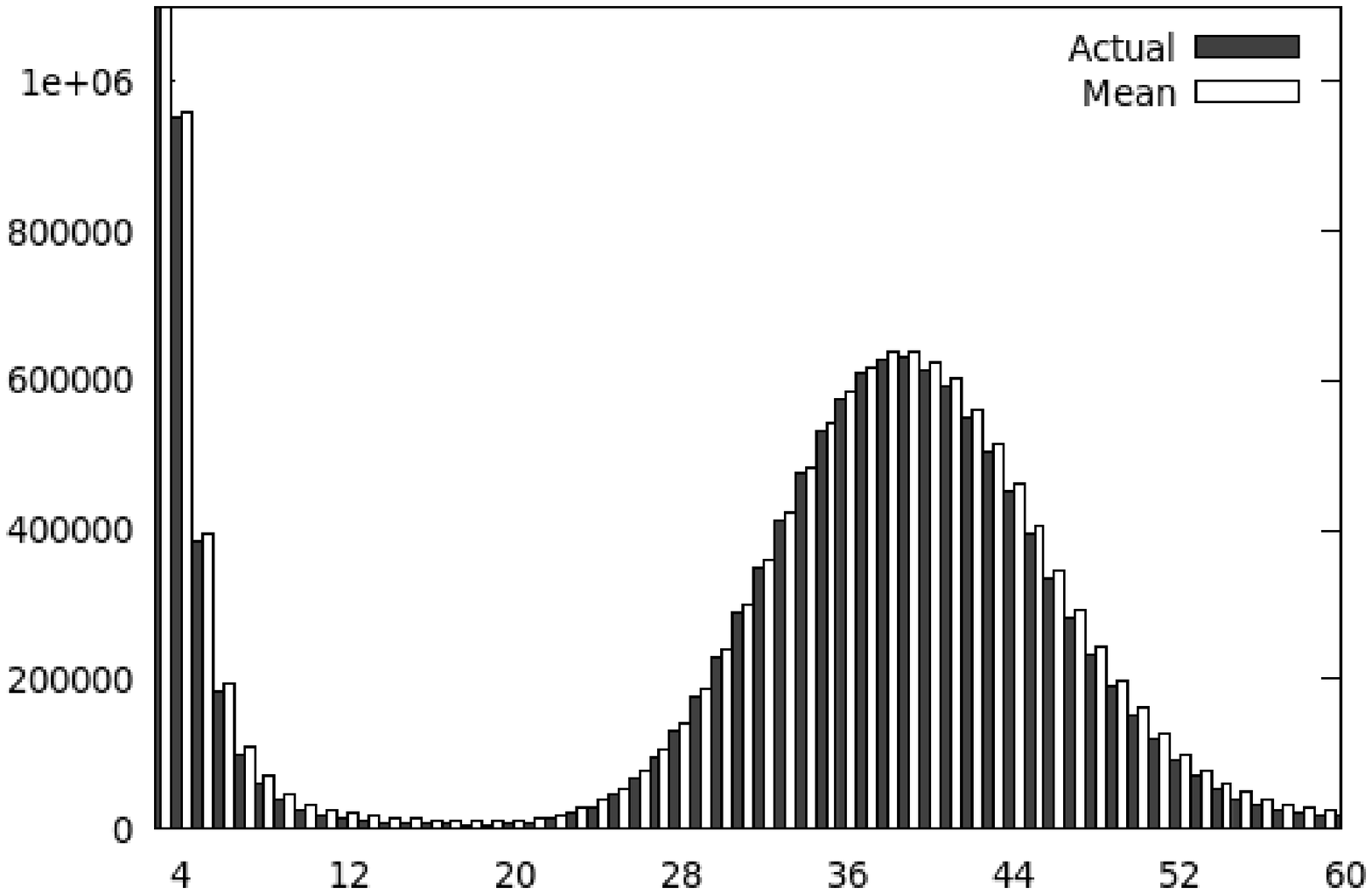}
\end{minipage}
\begin{minipage}[t] {.48\linewidth}
\includegraphics[height=2in]{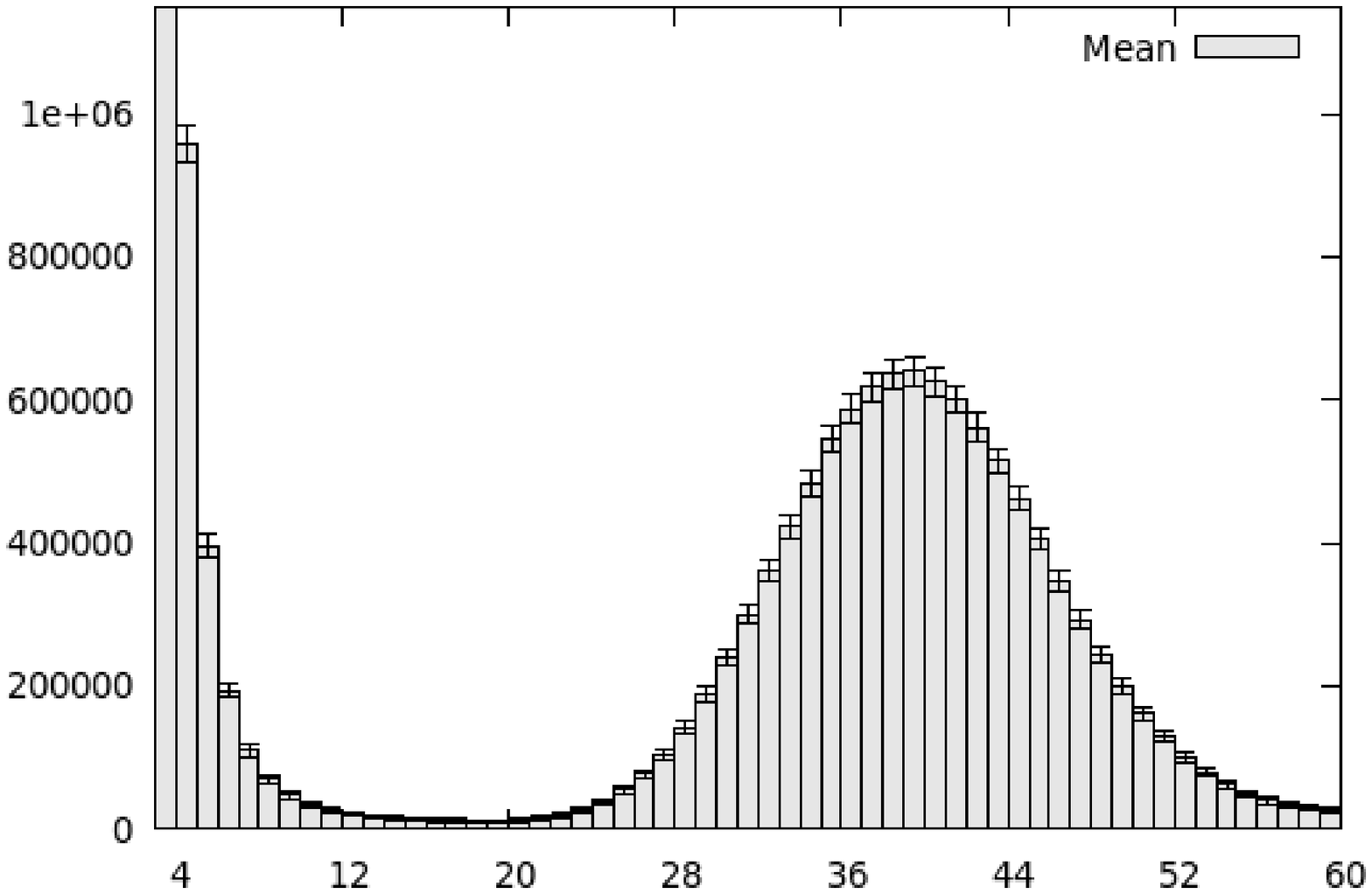}
\end{minipage}
\caption{Histograms of true values and mean output values of Kmerlight using 120MB RAM setting. True values are shown side by side as shaded bars.} 
\label{figure:histo_sd}
\end{figure}

Figure \ref{figure:hist_rel2} 
shows separate histograms for relative estimate of $F_0, f_1, f_2$ and $f_3$ values of Kmerlight over 1000 trials with respect to true values. Kmerlight used 120MB MB RAM setting.

\begin{figure}[htbp]
\centering
\begin{minipage}[t]{.48\linewidth}
\centering
\includegraphics[height=1.8in]{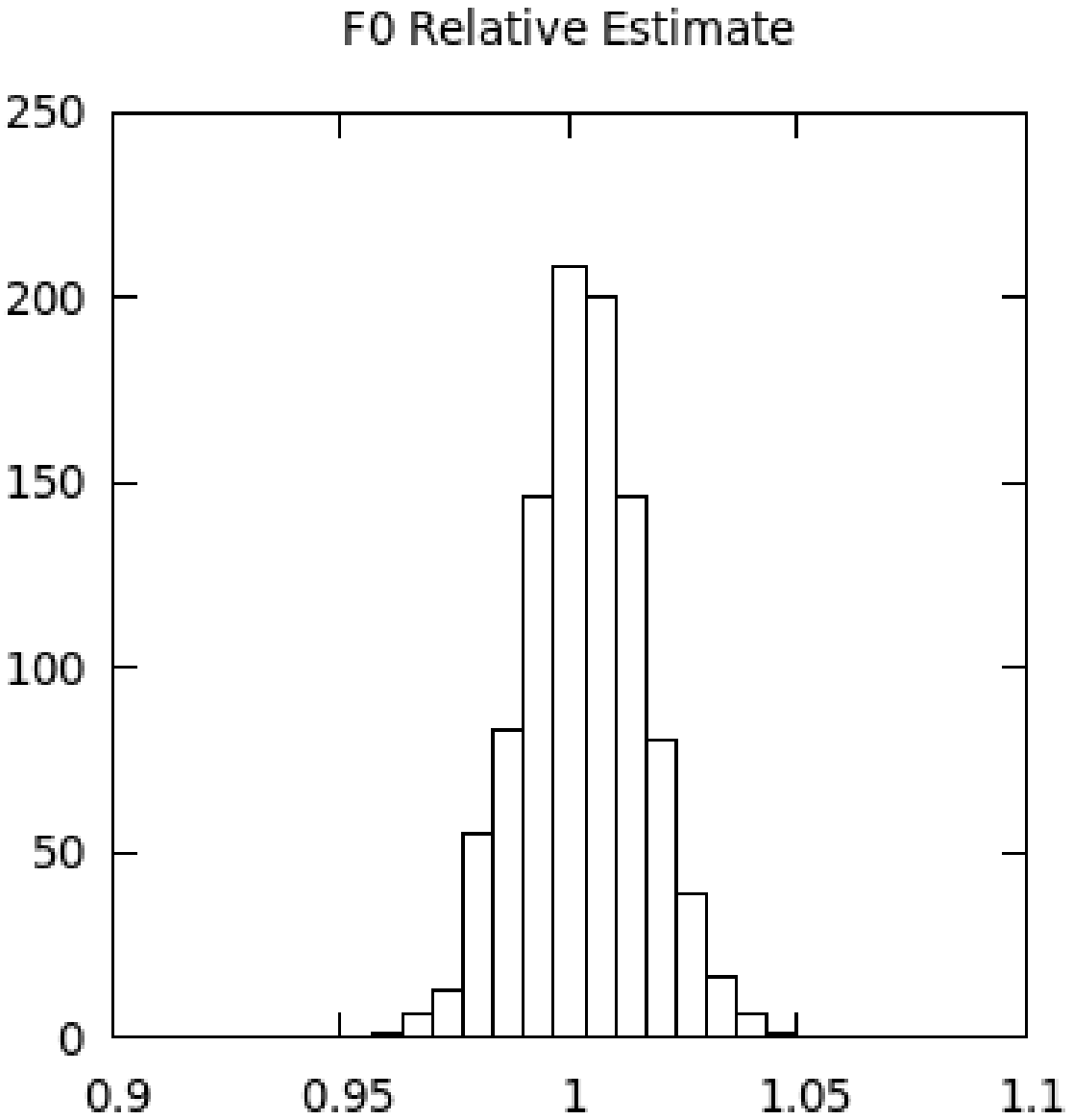}
\end{minipage}
\hfil
\begin{minipage}[t]{.48\linewidth}
\centering
\includegraphics[height=1.8in]{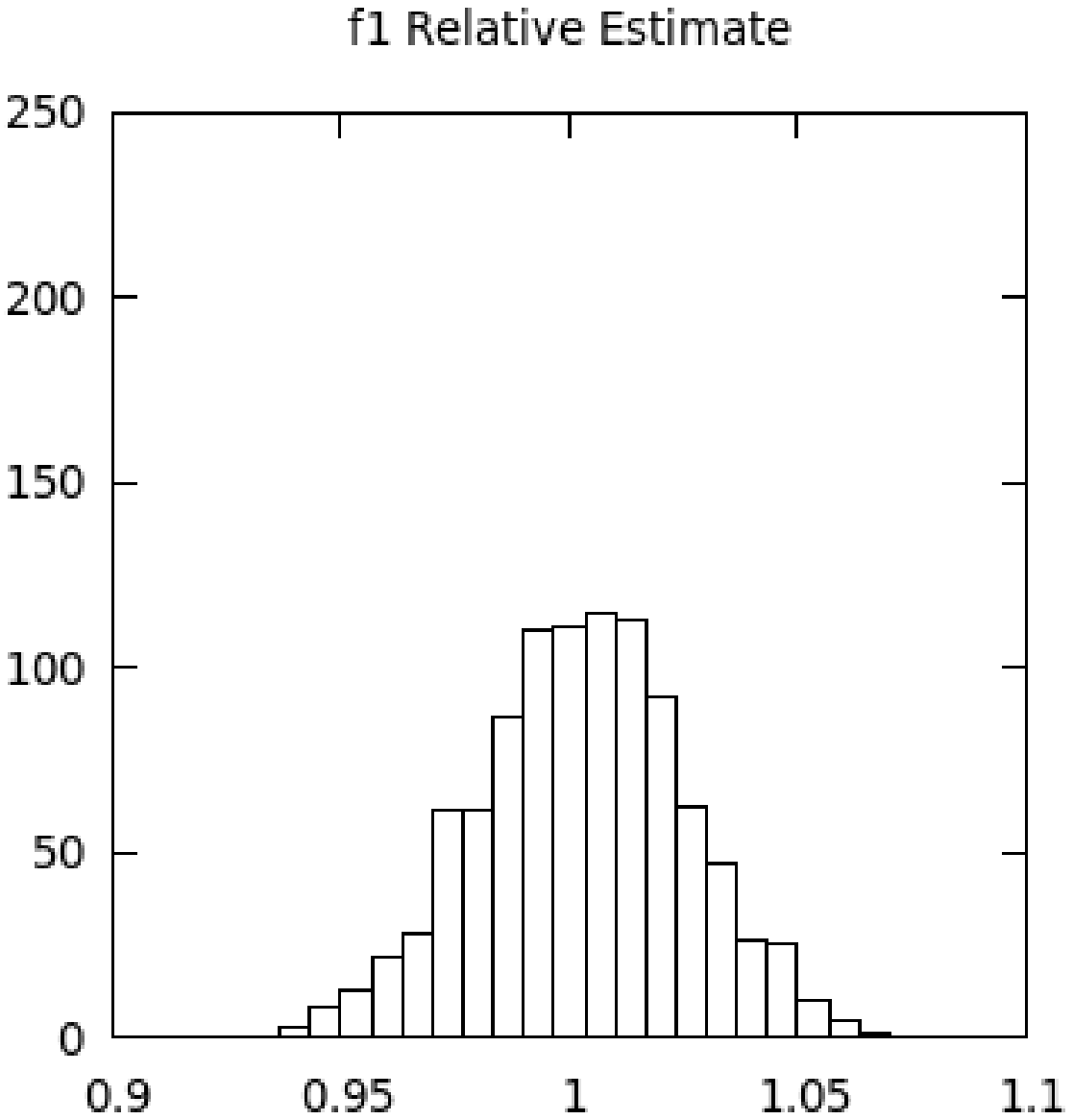}
\end{minipage}
\begin{minipage}[t]{.48\linewidth}
\centering
\includegraphics[height=1.8in]{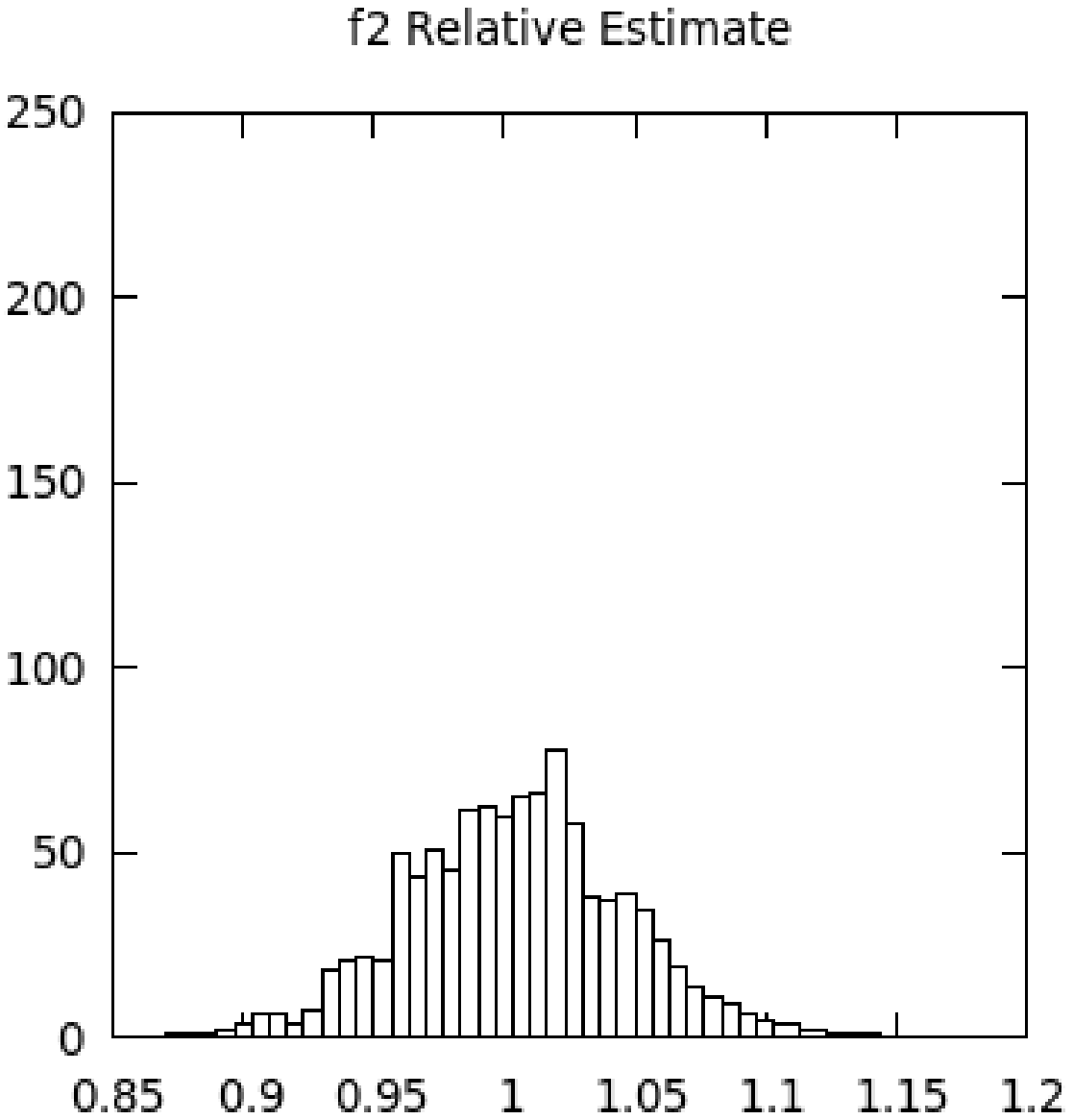}
\end{minipage}
\hfil
\begin{minipage}[t]{.48\linewidth}
\centering
\includegraphics[height=1.8in]{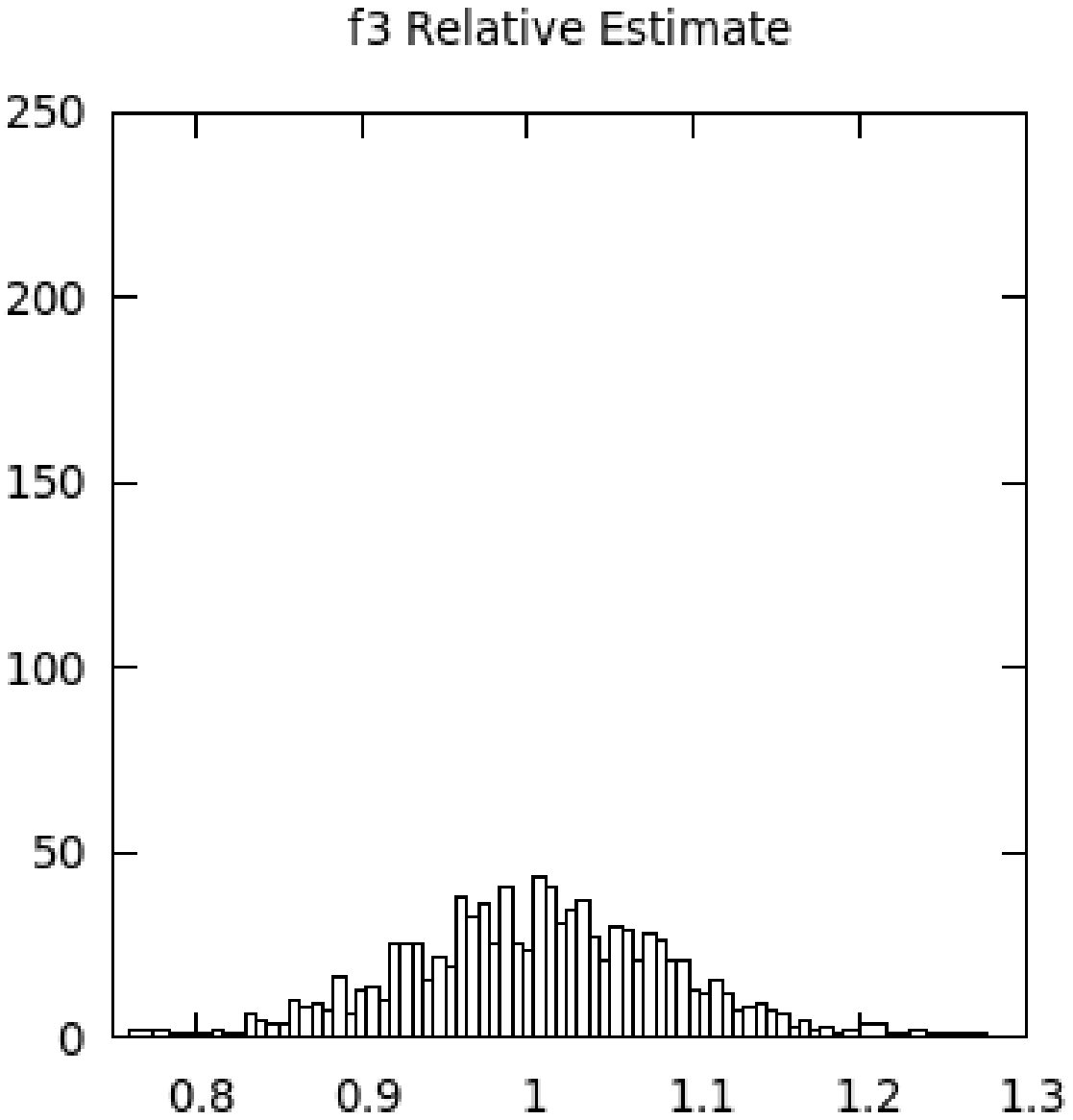}
\end{minipage}
\caption{Histograms showing relative estimate of Kmerlight output for $F_0, f_1, f_2$ and $f_3$ respectively for 1000 trials by Kmerlight with 120 MB RAM setting.}
\label{figure:hist_rel2}
\end{figure}


\vspace*{0.2cm}
\noindent
\emph{Dependence on $\lambda$}
\vspace*{0.2cm}

By Theorem \ref{thm:1}, errors in  $f_i$ estimates decrease with increasing Kmerlight sketch size and increase with increasing $\lambda = F_0/f_i$.  
Estimation accuracy of $f_i$ values were further analyzed with respect to their associated $\lambda = \lceil F_0/f_i \rceil$ ratios. 
Values $f_1, \ldots, f_{1000}$ were considered for this and Figures \ref{figure:lambda} and \ref{figure:lambda-scatter} illustrate the dependence of their estimation accuracy on $\lambda$.
In these figures, the $f_1, \ldots, f_{1000}$ values computed by Kmerlight were segregated with respect to their associated $\lambda = \lceil F_0/f_i \rceil$ ratios. We note that the $\lambda$ associated with an $f_i$ value is obtained by taking the ratio of $F_0$ to the true $f_i$ value. 
Range of $\lambda$ was restricted to $[1, \ldots, 500]$.  

In Figure \ref{figure:lambda}, two plots are provided to show the dependence of estimation error on $\lambda$. The left side plot is for $k=15$ and the right side plot is for $k=21$.
In each plot, the mean relative error as well as its standard deviation for all $f_i$ values associated with a given $\lambda$ value are plotted. For any fixed $\lambda$, estimates from all 1000 trials for each $f_i$ value associated with $\lambda$ were included in the mean and standard deviation calculations.  All $\lambda$ values need not have associated $f_i$ values and hence a piece wise linear plot of the values for $\lambda$s with non empty set of $f_i$ values associated with them is given.
Different Kmerlight memory settings, viz., 120MB ($r=16$) and 500MB ($r=18$), are considered in each of the two plots. As seen in Figure \ref{figure:lambda}, relative errors increase with increasing $\lambda$ and decrease with increasing  memory size. 

\ignore{
	\begin{figure}[htbp]
	\centering
	\includegraphics[height=2.8in]{lambda_k15_r16_r18.eps}
	\caption{Piece wise linear plots for mean and standard deviation of relative errors segregated by $\lambda = F_0/f_i$ values. Solid lines correspond to parameter $r=16$ (120MB RAM) and dotted lines correspond to $r=18$ (500MB RAM) for Kmerlight.}
	\label{figure:lambda}
	\end{figure}

	\begin{figure}[htbp]
	\centering
	\begin{minipage}[t]{.48\linewidth}
	\includegraphics[height=2.4in]{lambda_k15_r16_r18.eps}
	\end{minipage}
	\begin{minipage}[t]{.48\linewidth}
	\centering
	\includegraphics[height=2.4in]{lambda_k21_r16_r18.eps}
	\end{minipage}
	\caption{Piece wise linear plots for mean and standard deviation of relative errors segregated by $\lambda = F_0/f_i$ values. Solid lines correspond to parameter $r=16$ (120MB RAM) and dotted lines correspond to $r=18$ (500MB RAM) for Kmerlight.}
	\label{figure:lambda}
	\end{figure}

}

\begin{figure}[htbp]
\centering
\begin{minipage}[t]{.48\linewidth}
\includegraphics[height=2.4in]{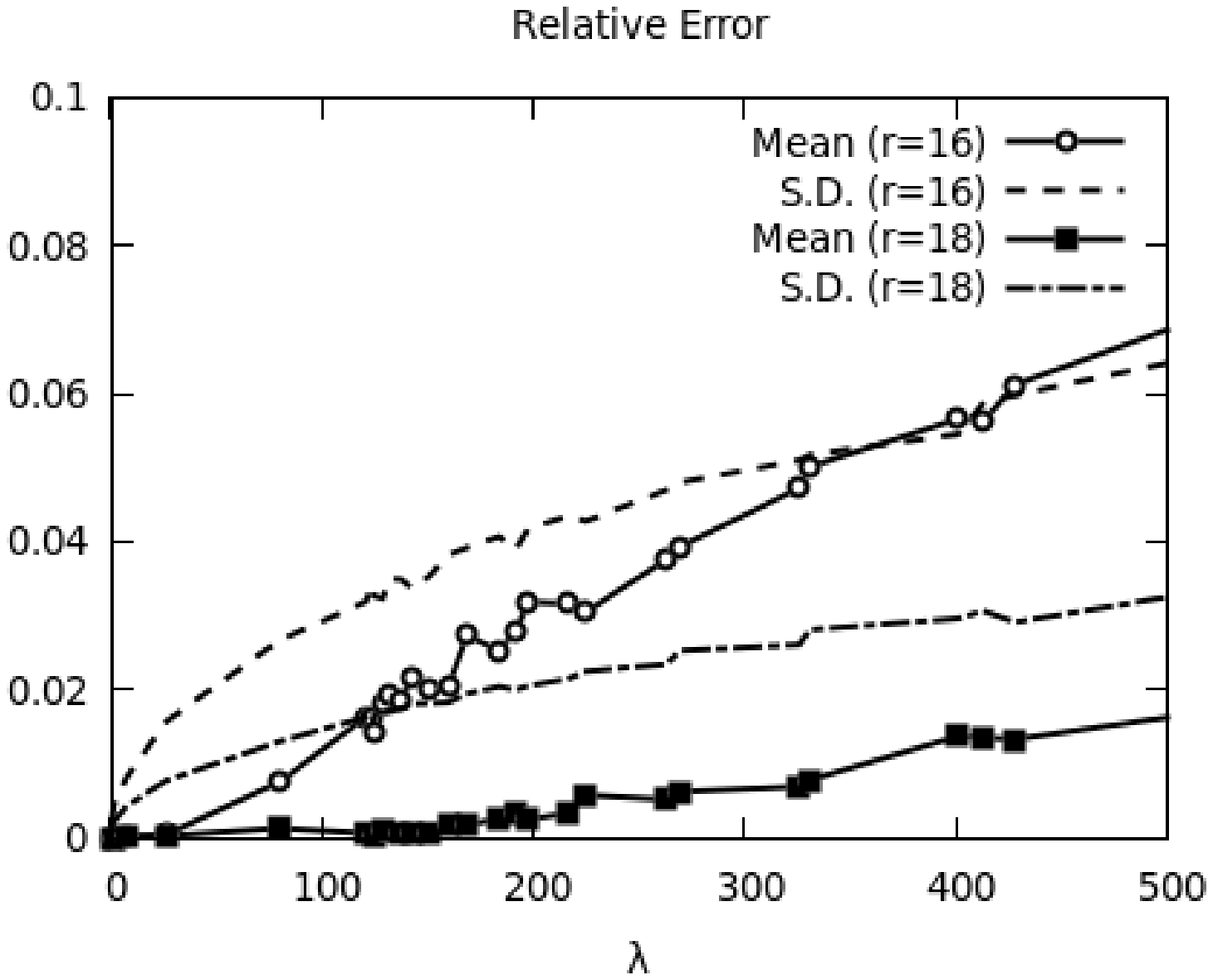}
\end{minipage}
\begin{minipage}[t]{.48\linewidth}
\centering
\includegraphics[height=2.4in]{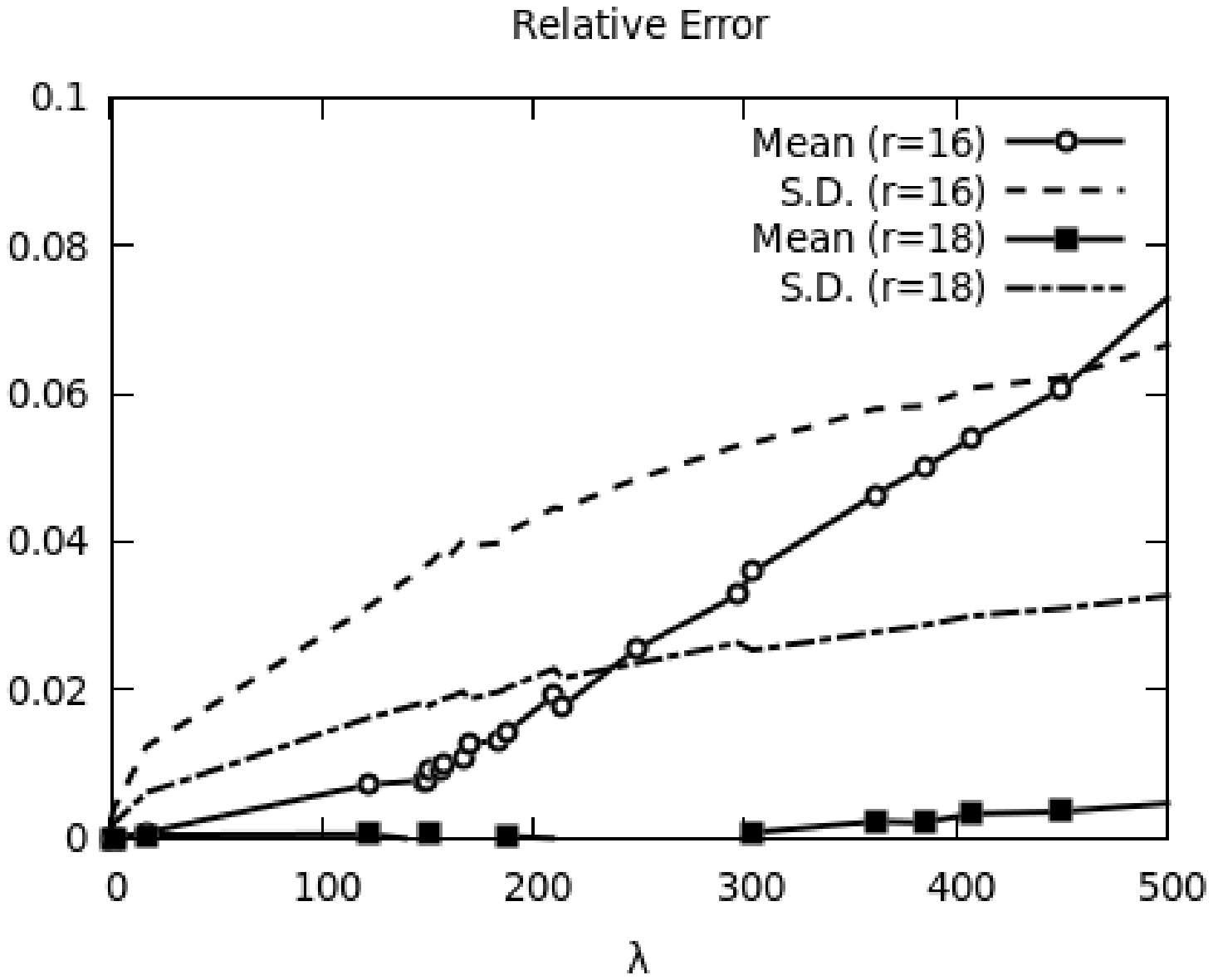}
\end{minipage}
\caption{Piece wise linear plots for mean and standard deviation of relative errors segregated by $\lambda = F_0/f_i$ values. Parameters  $r=16$ and $r=18$ correspond to 120MB and 500MB RAM settings respectively for Kmerlight.}
\label{figure:lambda}
\end{figure}

Figure \ref{figure:lambda-scatter} provides scatter plots of the mean and standard deviation values plotted in Figure \ref{figure:lambda}. The left side and right side scatter plots given in Figure \ref{figure:lambda-scatter} are for memory settings 120MB ($r=16$) and 500MB ($r=18$) respectively of Kmerlight. Both plots contain scatter plots for $k=15$ and $k=21$.  
These plots indicate that already with memory size of 500MB RAM, $f_i$ estimates have high accuracy.

\ignore{

	\begin{figure}[htbp]
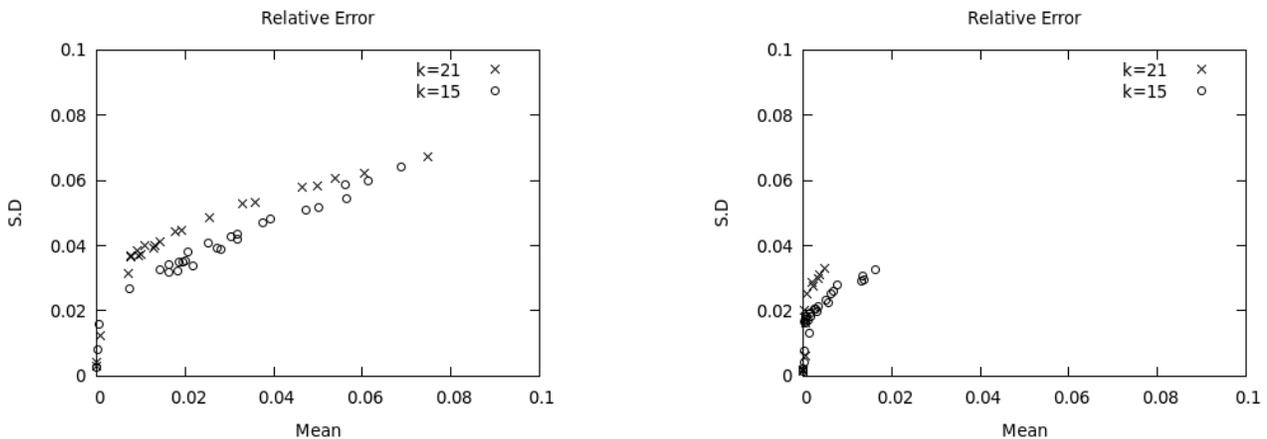

	\centering
	\begin{minipage}[t]{.48\linewidth}
	\includegraphics[height=2.4in]{lambda_points_r16_k15_k21.eps}
	\end{minipage}
	\begin{minipage}[t]{.48\linewidth}
	\centering
	\includegraphics[height=2.4in]{lambda_points_r18_k15_k21.eps}
	\end{minipage}
	\caption{
	Scatter plots for mean and standard deviation of relative errors plotted in Fig \ref{figure:lambda}.  Left and right plots correspond to $r=16$ (120MB) and $r=18$ (500MB)  memory settings respectively for Kmerlight. Plots for both $k=15$ and $k=21$ are provided in each.}
	\label{figure:lambda-scatter}
	\centering
	\end{figure}

}

\begin{figure}[htbp]
\centering
\begin{minipage}[t]{.48\linewidth}
\includegraphics[height=2.4in]{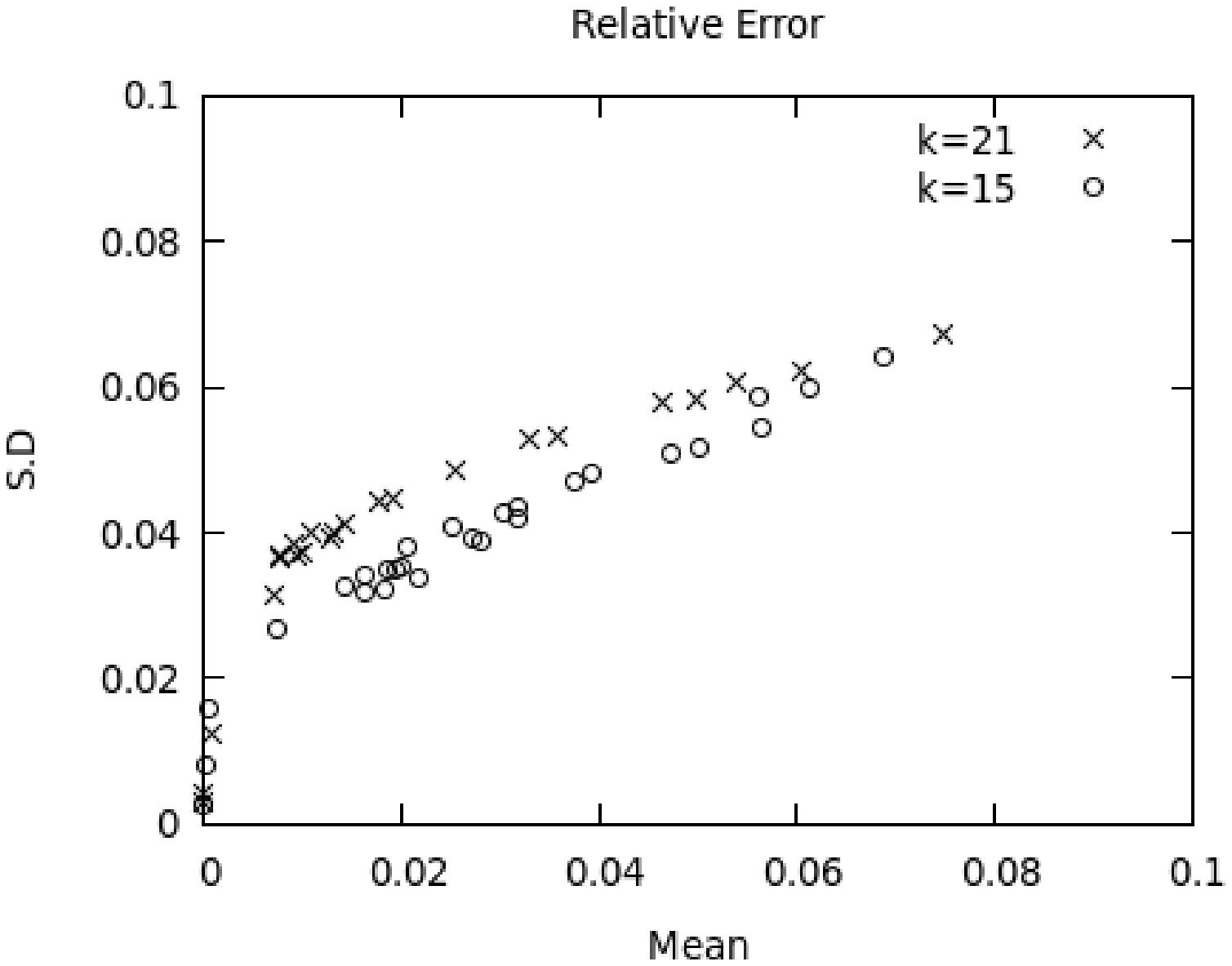}
\end{minipage}
\begin{minipage}[t]{.48\linewidth}
\centering
\includegraphics[height=2.4in]{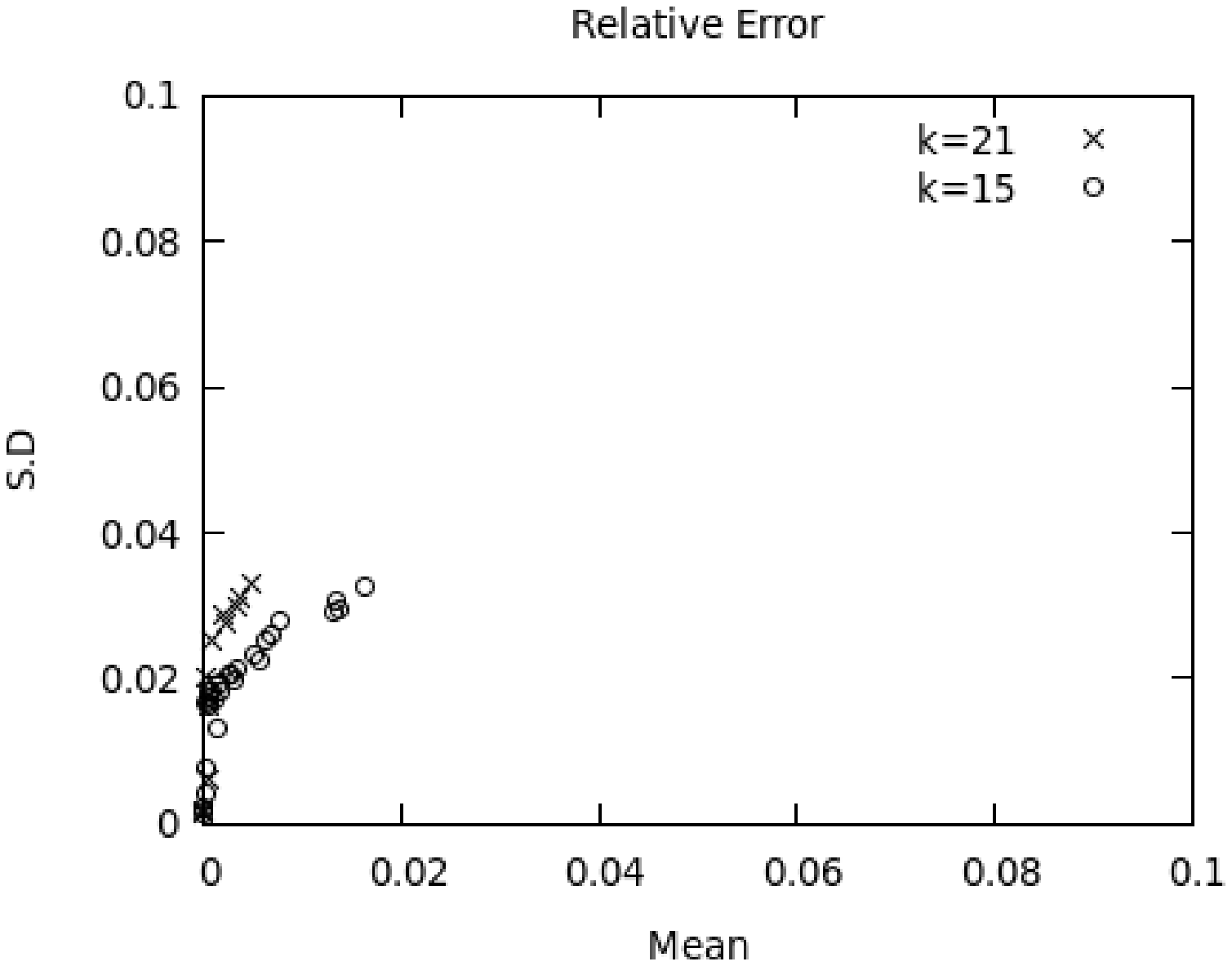}
\end{minipage}
\caption{
Scatter plots for mean and standard deviation of relative errors plotted in Fig \ref{figure:lambda}.  Left and right plots correspond to $r=16$ (120MB) and $r=18$ (500MB)  memory settings respectively for Kmerlight. Plots for both $k=15$ and $k=21$ are provided in each.}
\label{figure:lambda-scatter}
\centering
\end{figure}

\newpage
\section{Estimation of $k$-mer Repeats in Chromosome Y}

Table \ref{tab:ychrome} give frequency statistics of $15$-mers, viz. $g_1, \ldots, g_3$, of human reference chromosome Y (GRCh38) obtained using exact counting.  
There were in total $13493236$ distinct $15$-mers in the Y chromosome. 

\begin{table}[htp]
\centering
\begin{tabular}{| l | c |}
  \hline			
  Freq. $i$ & $g_i$ \\
\hline
\hline
1 &	10076349 \\\hline
2 &	1983449 \\\hline
3 &	526740\\\hline
\end{tabular}
\caption{$k$-mer repeat statistics ($k=15$) of chromosome Y.}
\label{tab:ychrome}
\end{table}

Reads from chromosome Y were generated using ART tool with read length $100$ and coverage value $50$. 
The built-in profile of Illumina HiSeq 2500 system was used by the ART tool.
Table \ref{tab:ychromestat} gives the read statistics.

\begin{table}[htp]
\centering
\begin{tabular}{| l | c |}
\hline
   Reads                 &    11817866\\
\hline
   $k$-mers              &  1016336476\\
\hline
   Distinct $k$-mers       &     75712987\\
\hline

\end{tabular}
\caption{Read collection statistics.}
\label{tab:ychromestat}
\end{table}

Exact $k$-mer abundance histogram of the reads was computed using exact counting. 
Second and third columns of Table \ref{tab:exacthist} gives the first three peak positions and corresponding values observed in the exact histogram, ignoring the initial peak due to erroneous $k$-mers. For these peak positions, the peak values were also inferred using the model, which are given by $g_i/\sqrt{2\pi i \lambda'}$. The $g_i$ values from Table \ref{tab:ychrome} were used. Value of $\lambda'$ was obtained from peak positions since peak positions are given by $i \lambda'$ according to the model.
Estimated peak values and the relative errors are given in third and last columns respectively of  Table \ref{tab:exacthist}.
Small relative errors indicate that the model agrees well with the observed histogram. 

\begin{table}[htp]
\centering
\begin{tabular}{| l | c | c | c | c | }
  \hline			
  Peak No. & Peak position&  Peak value & Inferred & Rel. Error \\
   &  in the &  in the  &  peak value  & \\
   &  histogram&  histogram &    & \\
\hline
\hline
1st & 39 & 629675 & 643696 &  0.022 \\\hline
2nd & 78 & 87496 & 89595 &  0.024\\\hline
3rd & 117 & 18881 & 19427 & 0.029\\\hline
\end{tabular}
\caption{Observed and inferred peak values in the $k$-mer abundance histogram ($k=15$) for human chromosome Y.}
\label{tab:exacthist}
\end{table}

Since computing exact histograms are resource intensive, we used histograms computed by Kmerlight in place of the exact histogram for the inferencing of $g_i$ values as given above.
Figure \ref{fig:kmlhistcoll} shows the abundance histogram computed by Kmerlight alongside the exact histogram. 
Kmerlight histograms computed with three different memory settings viz., 120 MB, 460 MB and 960 MB, are shown.
Values for $f_5, \ldots, f_{200}$ are  plotted in each histogram. Plot of $f_1, \ldots, f_4$ values, which are part of the initial sharp peak due to erroneous $k$-mers,  are omitted from the plots because including them require a very large y-axis range which reduces the resolution of the remaining plot.

\begin{figure}[htbp]
\begin{minipage}[t]{.48\linewidth}
\centering
\includegraphics[height=2.1in]{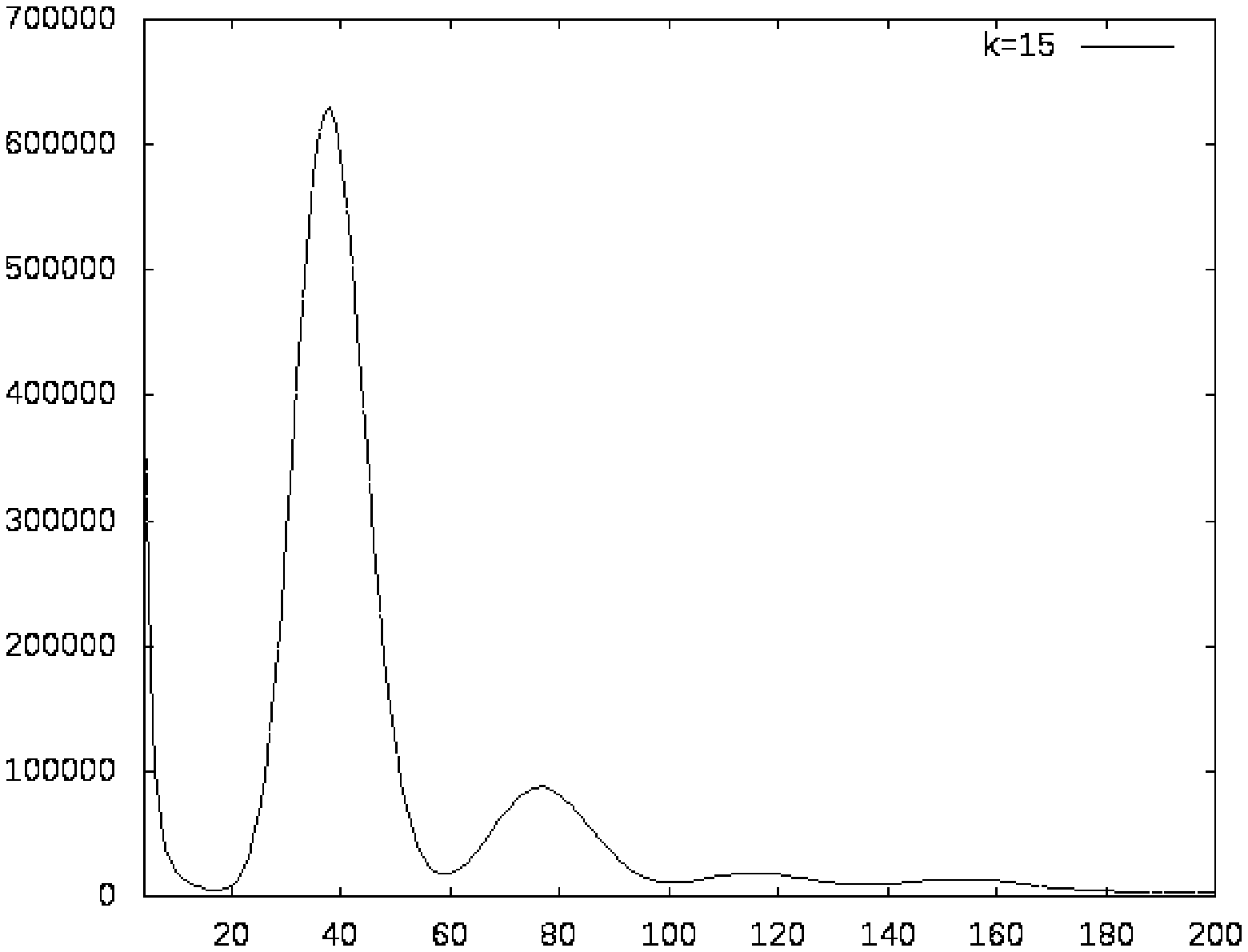}
\end{minipage}
\begin{minipage}[t]{.48\linewidth}
\centering
\includegraphics[height=2.1in]{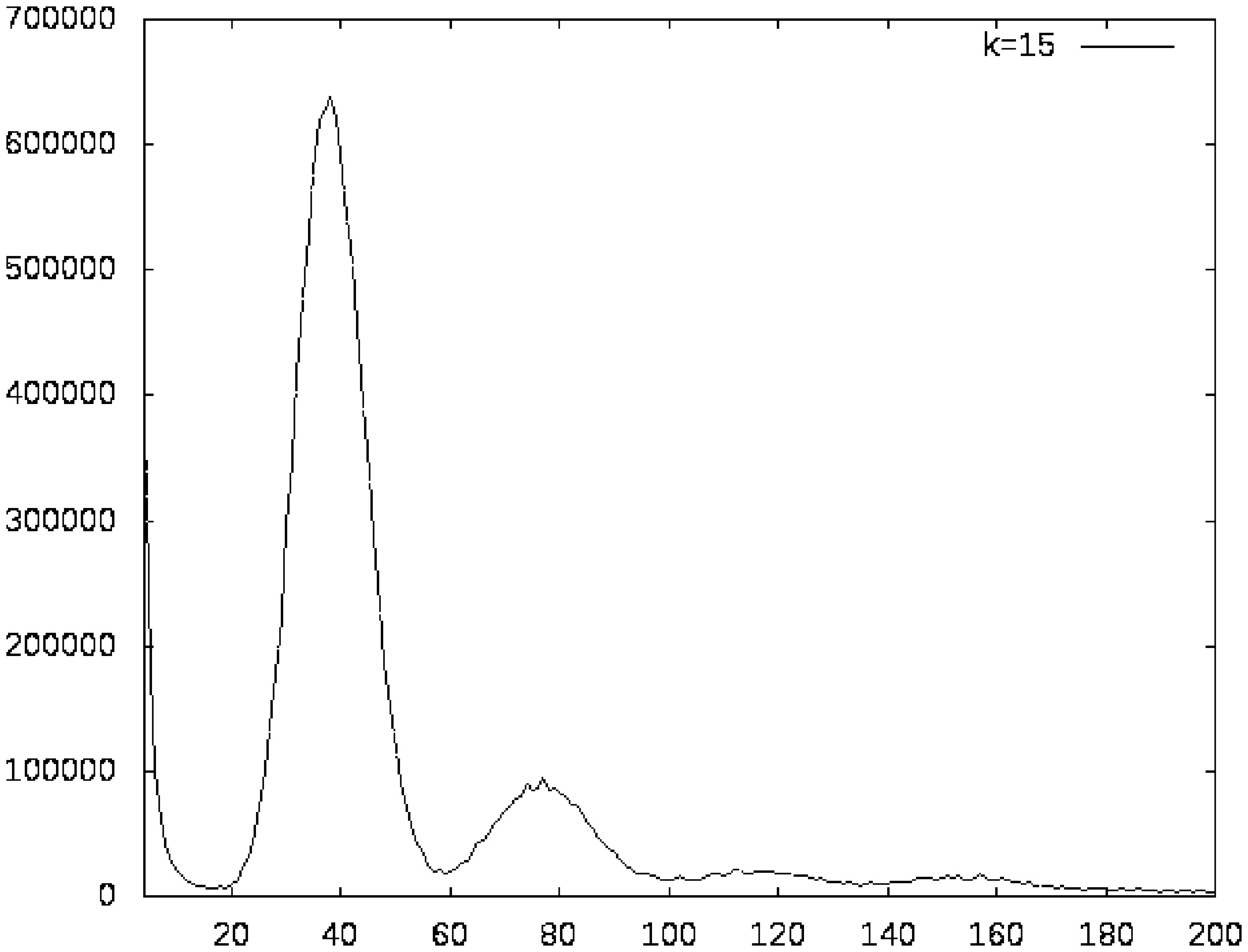}
\end{minipage}

\begin{minipage}[b]{.48\linewidth}
\centering
\includegraphics[height=2.1in]{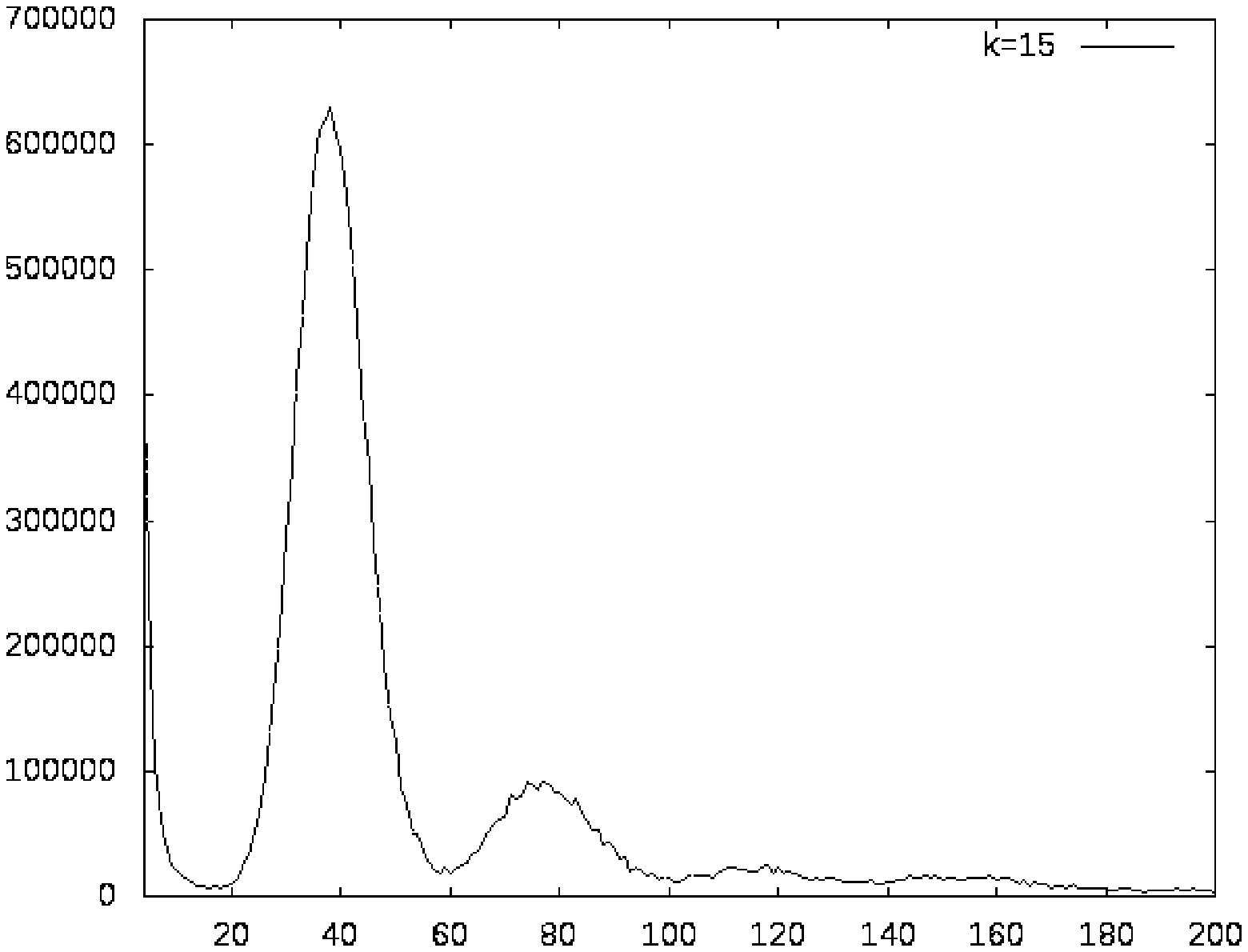}
\end{minipage}
\begin{minipage}[b]{.48\linewidth}
\centering{
\includegraphics[height=2.1in]{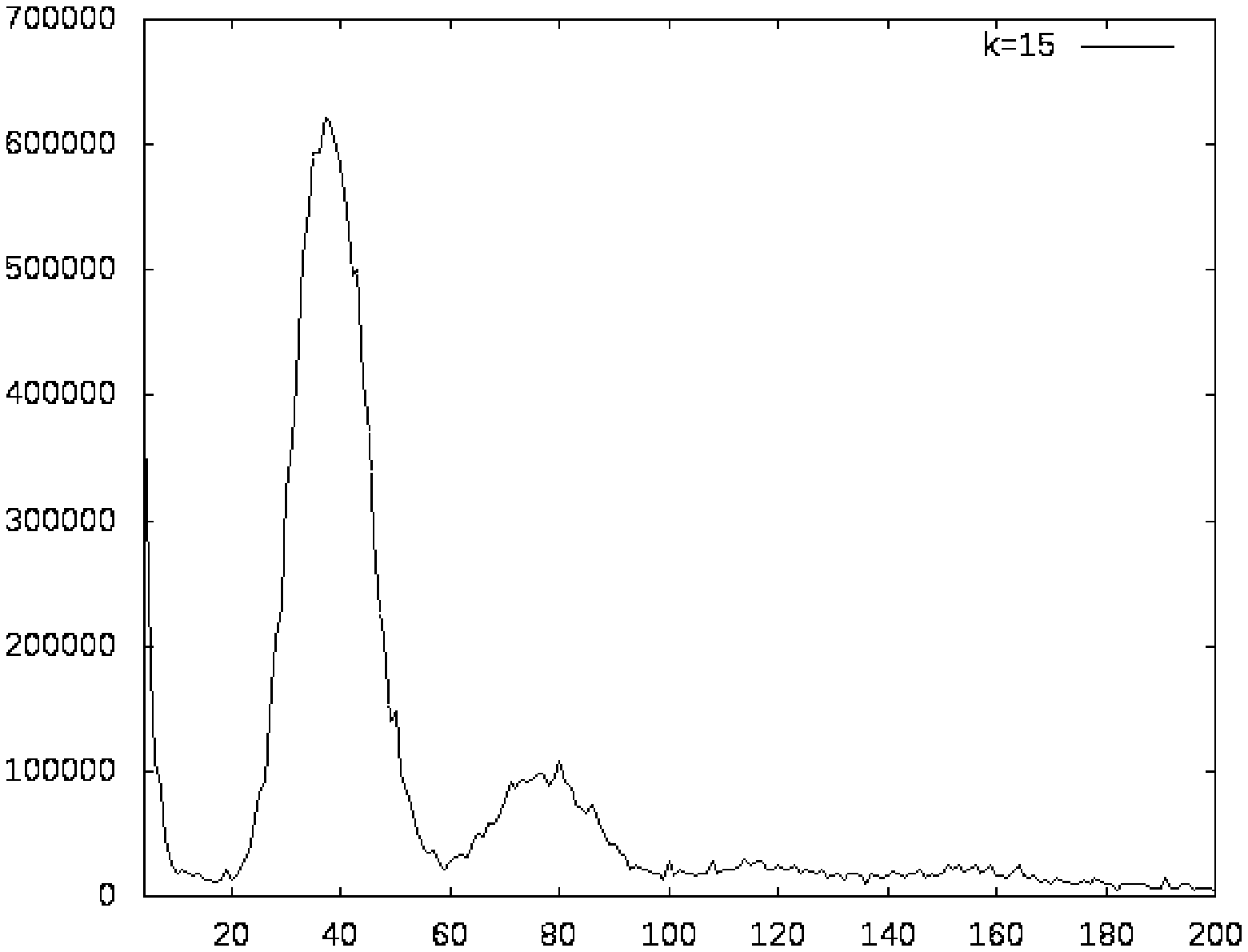}}
\end{minipage}
\caption{$k$-mer abundance histogram for reads from human chromosome Y for $k=15$.  Read length is $100$ and read coverage is $50$. Values of $f_i$ for $i=5, \ldots, 200$ are plotted. Exact histogram is given in top left. Kmerlight histograms for memory settings 960MB, 460MB and 120MB are given in top right, bottom left and bottom right respectively.}
\label{fig:kmlhistcoll}
\end{figure}

Table \ref{tab:gvals1} compares the relative errors in the $g_i$ estimation using exact histogram and using histograms computed by Kmerlight. Third and fourth columns of Table \ref{tab:gvals} give the relative errors in the estimation using histograms computed using Kmerlight with 500 MB and 940 MB memory respectively. 
Kmerlight errors are averaged over 1000 trials.
The standard deviations are also given alongside these values. As seen in Table \ref{tab:gvals1}, relative errors in $g_i$ estimation are similar when histogram computed by Kmerlight is used in place of exact histogram.

\begin{table}[htp]
\centering
\begin{tabular}{| l | c | c | c | }
  \hline			
  $g$ &  Rel. Error  & Rel. Error  & Rel Error \\
   &  in estimation  & in estimation & in estimation  \\
   & (exact histogram) &  Kmerlight (500MB) & Kmerlight (940 MB) \\
   &  &   (S.D.) &  (S.D)\\
\hline
\hline
$g_1$ & 0.02  &  0.02 (0.009) & 0.02 (0.011)\\\hline
$g_2$ & 0.02 &  0.042 (0.028)  & 0.02 (0.015) \\\hline
$g_3$ & 0.028 &  0.208 (0.062) & 0.11 (0.043) \\\hline
\end{tabular}
\caption{Relative errors in $g_i$ estimation using exact histogram and Kmerlight computed histograms with 500 MB and 940 MB memory settings.
Standard deviations are given inside the brackets.}
\label{tab:gvals1}
\end{table}

\newpage

\section{Choosing $k$-mer Length in Genome Assembly}
For choosing appropriate $k$-mer length in de Bruijn based assemblers, Chikhi  et al. \cite{chikhi2013Genie} propose that the most appropriate choice of $k$ is the one that provides maximum number of distinct true $k$-mers to the assembler. This has been shown to yield excellent assembly on a diverse set of genomes. 
Recalling that $F_0$ denotes the total number of distinct $k$-mers in the sequence reads, $F_0$ can be written as $F_0 = F^e_0 + F'_0$, where $F^e_0$ and $F'_0$ are the total number of distinct erroneous $k$-mers and true $k$-mers respectively. Thus, the objective is to choose $k$ that maximizes $F'_0$. 
Estimation of $F'_0$ can be done using estimates of $F_0$ and $F^e_0$ as $F'_0 = F_0 - F^e_0$.  
Summing the $f_i$ values in the initial histogram peak due to erroneous $k$-mers provide an estimate of $F^e_0$. This along with $F_0$ estimate from Kmerlight can be used to estimate $F'_0$.  There could be overlap between the region of erroneous $k$-mers and true $k$-mers in the histogram, which for instance could happen in case of low coverage read generation. This can introduce error in $F^e_0$ estimation obtained by summing  the initial $f_i$ values. In this case, either by using regression techniques or by fitting models for erroneous $k$-mer generation, $f_i$ values corresponding to only erroneous $k$-mers in the overlap region can be inferred.

\section{Proof of Theorem 1}
\vspace*{0.5cm}


\ignore{
	\begin{theorem}
	\label{thm:1}
	With probability at least $1 - \delta$, estimate $\hat{F}_0$ for $F_0$ and estimates $\hat{f}_i$ for every $f_i$ with $f_i \ge F_0/\lambda$ can be computed 
	such that $(1-\epsilon) f_i \le \hat{f}_i \le (1+\epsilon) f_i$ for all such $f_i$, using
	$O(\frac{\lambda}{\epsilon^2} \log(\lambda/\delta) \log(F_0))$ memory and $O(\log(\lambda/\delta))$ time per update.
	\end{theorem}
}

\begin{theorem*}
With probability at least $1 - \delta$, estimate $\hat{F}_0$ for $F_0$ and estimates $\hat{f}_i$ for every $f_i$ with $f_i \ge F_0/\lambda$ can be computed 
such that $(1-\epsilon) F_0 \le \hat{F}_0 \le (1+\epsilon) F_0$ and  $(1-\epsilon) f_i \le \hat{f}_i \le (1+\epsilon) f_i$ using
$O(\frac{\lambda}{\epsilon^2} \log(\lambda/\delta) \log(F_0))$ memory and $O(\log(\lambda/\delta))$ time per update.
\end{theorem*}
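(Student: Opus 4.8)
The plan is to reduce the whole statement to a single-instance analysis followed by a median-of-$t$ boosting argument, in the distinct-count framework of \cite{barYosef2002, kmerstream}. I would fix parameters $r = \Theta(\lambda/\epsilon^2)$, $u = \Theta(\lambda/\epsilon)$, and $t = \Theta(\log(\lambda/\delta))$, and first check that these yield the claimed resource bounds: each of the $t$ instances stores $M = O(\log F_0)$ arrays of $r$ counters, giving $O(r M t) = O(\tfrac{\lambda}{\epsilon^2}\log(\lambda/\delta)\log F_0)$ space, and a constant amount of work per instance gives $O(\log(\lambda/\delta))$ update time. Since $\sum_i f_i \le F_0$, at most $\lambda$ indices satisfy $f_i \ge F_0/\lambda$, so it suffices to guarantee each individual estimate with failure probability $\delta/(\lambda+1)$ and take a union bound; the median of $t$ independent instances converts a constant per-instance success probability into $1 - \delta/(\lambda+1)$.

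The $\hat{F}_0$ estimate is essentially the estimator of \cite{barYosef2002}: at the level $w^*$ chosen so that the number of empty counters is nearest $r/2$ one has $N_{w^*} = F_0/2^{w^*} = \Theta(r)$, the empty-counter count concentrates around $r(1-1/r)^{N_{w^*}}$ by a balls-in-bins (negative-association) Chernoff bound with $r = \Omega(1/\epsilon^2)$, and inverting the formula yields $(1\pm\epsilon)F_0$ per instance with constant probability. I would quote this and concentrate the new work on $\hat{f}_i$.

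For $\hat{f}_i$ I would proceed in four steps. First, characterize a non-dirty value-$i$ counter: because every occurrence of a fixed $k$-mer maps to the same pair $(c,j)$, a counter stays non-dirty and reaches value exactly $i$ precisely when all distinct $k$-mers landing in it share one auxiliary value $j$ and their multiplicities sum to $i$; the collision-free sub-case (one distinct $k$-mer of multiplicity $i$) has expectation $E[t_i] = \tfrac{f_i}{2^{w^*}}(1-1/r)^{N_{w^*}-1}$, and since the algorithm selects $w^* = \arg\max_w |\{c:T_w[c].v=i\}|$, the product $\tfrac{f_i}{2^{w}}(1-1/r)^{N_w}$ is maximized near $N_{w^*} = \Theta(r)$, whence $E[t_i] = \Omega(r/\lambda)$ using $f_i \ge F_0/\lambda$. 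Second, bound the bias from \emph{false positives}: the observed count equals $t_i$ plus the number of non-dirty value-$i$ counters receiving $\ge 2$ distinct $k$-mers, and each such counter witnesses an undetected collision, i.e.\ a pair of distinct $k$-mers mapping to the same $(c,j)$; the expected number of such pairs at a level is at most $\binom{N_w}{2}/(ru) = O(r/u)$, which the choice $u = \Theta(\lambda/\epsilon)$ forces below $\epsilon\,E[t_i]$, so the multiplicative bias is $1\pm O(\epsilon)$. Third, concentrate the count, again as a sum of negatively associated counter-indicators with mean $\Omega(r/\lambda) = \Omega(1/\epsilon^2)$, so $r = \Theta(\lambda/\epsilon^2)$ gives relative fluctuation $O(\epsilon)$ with constant probability. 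Fourth, propagate the $\hat{F}_0$ error through $p_0 = (1-1/r)^{\hat{F}_0/2^{w^*}}$: since the exponent is $N_{w^*} = \Theta(r)$ and $\hat{F}_0 = (1\pm\epsilon)F_0$, the factor changes by $(1-1/r)^{\pm\epsilon N_{w^*}} = e^{\mp\Theta(\epsilon)} = 1\pm O(\epsilon)$, so the estimator $\hat{f}_i = 2^{w^*}(r-1)p_i/p_0 = 2^{w^*}\,\mathrm{count}\,(1-1/r)^{1-\hat{F}_0/2^{w^*}}$ inherits overall relative error $O(\epsilon)$; rescaling $\epsilon$ by a constant closes the bound.

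The main obstacle I anticipate is controlling everything at the \emph{data-dependent} level $w^*$ rather than a fixed one. I would handle this by showing that, with high probability, each of the $O(1)$ levels in the $\Theta(r)$-balanced band simultaneously enjoys count-concentration and the $O(r/u)$ false-positive bound (a union bound over those levels), together with a separate deterministic/in-expectation argument that the $\arg\max$ must lie in this band — levels with $N_w \gg r$ are killed by collisions while levels with $N_w \ll r$ have $f_i/2^w$ too small — so that conditioning on $w^*$ is harmless. The one genuinely new ingredient over \cite{kmerstream} is the auxiliary-value/false-positive trade-off of the second step, which is precisely what extends the analysis past $f_1$ to all $f_i$ with $f_i \ge F_0/\lambda$.
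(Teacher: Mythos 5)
Your overall architecture matches the paper's closely: a per-instance, constant-probability analysis at a data-dependent argmax level inside a $\Theta(r)$-balanced band, an auxiliary-value mechanism for undetected collisions, propagation of the $\hat{F}_0$ error through the $(1-1/r)^{\hat{F}_0/2^{w^*}}$ factor, and median boosting with a union bound over the at most $\lambda$ indices with $f_i \ge F_0/\lambda$. The genuine gap is in your false-positive treatment (your step two and the argmax-control paragraph). Your pair-counting bound $\binom{N_w}{2}/(ru)$ equals $O(r/u)$ only inside the balanced band; at shallow levels, where $N_w \gg r$, it blows up, so it gives you nothing precisely where you need protection against the argmax being hijacked by inflated counts. (It also overcounts at crowded levels: a non-dirty false-positive counter must in addition receive \emph{no} element carrying a different auxiliary value, which is what keeps the true probability small there.) The paper instead proves a \emph{uniform} per-counter false-positive probability of at most $1/u$, valid at every level, via an explicit balls-and-bins computation for the event that two or more balls land in one auxiliary bin while the other $u-1$ bins stay empty; that uniformity is what makes a Markov-plus-union-bound over all $\Theta(\log F_0)$ levels possible.

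Second, your parameter choice $u=\Theta(\lambda/\epsilon)$ is quantitatively too small for that union bound even granting the uniform bound $E(\Delta_w)\le r/u$: Markov against the relevant threshold $\Theta(\epsilon\, r/\lambda)$ gives per-level failure probability about $\lambda/(u\epsilon)$, which with your $u$ is a constant per level and $\Theta(\log F_0)$ after summing over the off-band levels. The paper takes $u = O(\lambda\log(F_0)/\epsilon^2)$ exactly so that each off-band level fails with probability at most $1/(16\log F_0)$ — and this costs nothing in the stated memory bound, since $u$ enters only as $\log u$ bits inside each counter word, not as a count of words. Within the band your choice is salvageable with a large constant, and for the true-positive part your ``killed by collisions / $f_i/2^w$ too small'' heuristic is exactly what the paper makes rigorous via a geometrically decaying envelope $H(w)$ on $E(X_w)$ combined with Chebyshev and $Var(X_w)\le E(X_w)$; your negative-association Chernoff route would serve equally there, though negative association for these value-$i$ occupancy indicators would itself need proof. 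To close your argument, either adopt the paper's uniform $1/u$ bound with $u$ inflated by a $\log(F_0)/\epsilon$ factor, or establish a per-level false-positive bound that decays geometrically away from the band and sum it.
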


We make the simplifying assumption that all hash functions used are truly random. We refer the reader to \cite{mitzenmacher2008simple}
where it is shown that standard hash families such as $2$-universal hash families `well approximates' ideal hash functions when the data entropy is high. In this case, the probabilistic bounds obtained using ideal hash functions are `close' to the bounds achievable using the universal hash families. We assume that parameters are suitably adjusted in such a way that the final failure probabilities accommodate for the additional difference arising due to use of universal hash families in practice.

Fix any $f_k$ for $k \ge 1$ such that $f_k \ge F_0/\lambda$. Consider the Kmerlight data structure and consider any fixed array $T_w$  corresponding to level $w \ge 1$. We recall that counters of array $T_w$ can be either `dirty' or `non dirty'. 
Our algorithm considers only non dirty counters.
The set of non dirty counters can contain both false positive counters and true positive counters. 
 However, the algorithm cannot distinguish between these false positive and true positive counters. If a false positive counter has value $k$ then it can erroneously contribute to $f_k$ calculation.  Let $X_w$ denote the number of true positive counters in array $T_w$ each having value $k$ after seeing the input. We will later bound the estimation error due to false positive counters. 

\begin{claim}\label{cl:expvar}
$E(X_w) = \frac{f_k}{2^w} \left( 1 - \frac{1}{r \cdot 2^w } \right)^{F_0 - 1}$ 
and $Var(X_w) \le E(X_w)$.
\end{claim}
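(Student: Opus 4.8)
The plan is to write $X_w$ as a sum of indicator variables, one per $k$-mer of multiplicity $k$, compute the mean by linearity, and then exploit negative correlation among the indicators to bound the variance by the mean. First I would fix the idealized sampling model: under truly random hashing, each distinct $k$-mer is independently sent to level $w$ with probability $2^{-w}$, and conditioned on reaching level $w$ it occupies a counter $c$ chosen uniformly from $\{0, \ldots, r-1\}$, with every occurrence following the same $k$-mer. In particular, a fixed $k$-mer lands in a fixed slot $(w,c)$ with probability $1/(r2^w)$. For each of the $f_k$ $k$-mers $a$ of multiplicity $k$, let $Y_a = 1$ if $a$ reaches level $w$ and no other distinct $k$-mer shares its counter at level $w$, so that the counter is a collision-free true positive holding value exactly $k$; otherwise $Y_a = 0$. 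Then $X_w = \sum_a Y_a$.

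For the expectation I would compute $E(Y_a)$ by conditioning: with probability $2^{-w}$ the $k$-mer $a$ reaches level $w$ and sits in some counter $c$, and each of the remaining $F_0 - 1$ distinct $k$-mers independently avoids slot $(w,c)$ with probability $1 - \frac{1}{r2^w}$. Hence $E(Y_a) = 2^{-w}\left(1 - \frac{1}{r2^w}\right)^{F_0 - 1}$, and summing over the $f_k$ terms gives the claimed $E(X_w)$.

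For the variance I would expand $Var(X_w) = \sum_a Var(Y_a) + \sum_{a \neq a'} Cov(Y_a, Y_{a'})$. Each $Y_a$ is Bernoulli, so $Var(Y_a) = E(Y_a)(1 - E(Y_a)) \le E(Y_a)$ and $\sum_a Var(Y_a) \le E(X_w)$; it then suffices to show every covariance is nonpositive. This negative-correlation step is the main obstacle. Writing $q = 1/(r2^w)$, a direct count gives $E(Y_a Y_{a'}) = 2^{-2w}(1 - 1/r)(1 - 2q)^{F_0 - 2}$, the factors corresponding to both $k$-mers reaching level $w$, landing in distinct counters ($1 - 1/r$), and each of the other $F_0 - 2$ $k$-mers avoiding both of those two counters (probability $1 - 2q$ each). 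Meanwhile $E(Y_a)E(Y_{a'}) = 2^{-2w}(1-q)^{2F_0 - 2}$.

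So I need the inequality $(1 - 1/r)(1-2q)^{F_0-2} \le (1-q)^{2F_0 - 2}$. Using $(1-q)^2 \ge 1 - 2q \ge 0$ (valid since $q = 1/(r2^w) \le 1/2$) I would write $(1-q)^{2F_0-2} = \left[(1-q)^2\right]^{F_0-2}(1-q)^2 \ge (1-2q)^{F_0-2}(1-q)^2$, which reduces the claim to the clean bound $(1-q)^2 \ge 1 - 1/r$. The latter holds for all $w \ge 1$ because $\left(1 - \frac{1}{r2^w}\right)^2 = 1 - \frac{1}{r2^{w-1}} + \frac{1}{r^2 2^{2w}} \ge 1 - \frac{1}{r}$, using $2^{w-1} \ge 1$. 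This gives $Cov(Y_a, Y_{a'}) \le 0$ for every pair, and combining with the earlier bound yields $Var(X_w) \le \sum_a Var(Y_a) \le E(X_w)$, as required.
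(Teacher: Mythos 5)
Your proof is correct, and while it rests on the same second-moment skeleton as the paper's, it takes a genuinely different route in two respects, one of which actually improves on the published argument. The paper indexes indicators by the $r$ counters of $T_w$, writes $E(X_w^2) = E(X_w) + \sum_{i \neq j}\Pr(x_i = 1 \wedge x_j = 1)$, and pushes the cross term below $E^2(X_w)$ through a chain of lossy bounds ($\binom{f_k}{2} \le f_k^2/2$, number of counter pairs $\le r^2$, $\frac{r}{2(r-1)} \le 1$, and $1-2x \le (1-x)^2$). You instead index by the $f_k$ $k$-mers of multiplicity $k$ --- an equivalent decomposition, since true-positive counters of value $k$ are in bijection with multiplicity-$k$ $k$-mers that reach level $w$ and sit alone in a counter --- and prove exact pairwise negative correlation: your computation $E(Y_a Y_{a'}) = 2^{-2w}(1-1/r)(1-2q)^{F_0-2}$ versus $E(Y_a)E(Y_{a'}) = 2^{-2w}(1-q)^{2F_0-2}$ is correct, and your reduction via $(1-q)^2 \ge 1-2q$ to the clean fact $(1-q)^2 \ge 1-1/r$ (valid since $w \ge 1$) checks out. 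What your framing buys is exact bookkeeping: the paper's stated $\Pr(x_i = 1 \wedge x_j = 1)$ in fact misses a factor of $2$, since an ordered pair of counters $(i,j)$ admits $f_k(f_k-1) = 2\binom{f_k}{2}$ ordered assignments of the two multiplicity-$k$ $k$-mers, and it is precisely this omission that the paper's spare factor $\frac{1}{2}$ silently absorbs; with the factor restored, the paper's chain closes only if one keeps the exact ordered-pair count $r(r-1)$ in place of $r^2$ --- which is exactly the $(1-1/r)$ factor your covariance computation makes explicit. So your proof doubles as a repair of the constant bookkeeping in the paper's version, using no extra machinery; both arguments otherwise share the same idealized-hashing model in which each distinct $k$-mer independently occupies slot $(w,c)$ with probability $1/(r \cdot 2^w)$.
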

\begin{proof}
Let $x_1, x_2, \ldots, x_r$ denote $r$ indicator random variables where $x_i = 0$ if the counter $i$ in array $T_w$ is true positive and holds value $k$. Thus $X_w = \sum_{i=1}^r x_i$.
For any $i \in \{1, \ldots, r\}$, we have
\begin{eqnarray*}
E(x_i) = \frac{f_k}{2^w \cdot r} \left( 1 - \frac{1}{r \cdot 2^w } \right)^{F_0 - 1}
\end{eqnarray*}
Thus
\begin{eqnarray*}
E(X_w) = \frac{f_k}{2^w} \left( 1 - \frac{1}{r \cdot 2^w } \right)^{F_0 - 1}
\end{eqnarray*}

We use the relation $Var(X_w) = E(X_w^2) - E^2(X_w)$. We bound $E(X_w^2)$ as
\begin{eqnarray*}
E(X_w^2) & = & E((x_1 + \ldots + x_r)^2) \nonumber\\
	& = & \sum_{i=1}^r E(x_i^2) + \sum_{i \not= j} E(x_i x_j) \nonumber\\
	& = & E(X_w) + \sum_{i \not= j} E(x_i x_j) \nonumber\\
	& = & E(X_w) + \sum_{i \not= j} \Pr(x_i = 1 \wedge x_j = 1) \nonumber\\
	& = & E(X_w) + \sum_{i \not= j} {f_k \choose 2} \left(\frac{1}{r \cdot 2^w }\right)^2 \left(1 - \frac{2}{r \cdot 2^w}\right)^{F_0 - 2} \nonumber\\
	& \le & E(X_w) + r^2 {f_k \choose 2} \left(\frac{1}{r \cdot 2^w}\right)^2 \left(1 - \frac{2}{r \cdot 2^w}\right)^{F_0 - 2} \nonumber\\
	& \le & E(X_w) + \left(\frac{1}{2}\right)  \left(\frac{f_k}{2^w}\right)^2 \left(1 - \frac{2}{r \cdot 2^w}\right)^{F_0 - 2} \nonumber\\
	& \le & E(X_w) + \frac{1}{2}\left(\frac{r}{r-1}\right)  \left(\frac{f_k}{2^w}\right)^2 \left(1 - \frac{2}{r \cdot 2^w}\right)^{F_0 - 1} \nonumber\\
	& \le & E(X_w) + \left(\frac{f_k}{2^w}\right)^2 \left(1 - \frac{1}{r \cdot 2^w}\right)^{2(F_0 - 1)} \nonumber
\end{eqnarray*}
The last inequality follows from facts $r/2(r-1) \le 1$ for $r \ge 2$ and $1-2x \le (1-x)^2$. Observing that the second term of the last inequality is $E^2(X_w)$, 
we conclude that $Var(X_w) = E(X_w^2) - E^2(X_w) \le E(X_w)$.
\end{proof}

\ignore{

	Let $l$ be the smallest level with $F_0/2^l \le 16r$ and let $u$ be the largest row with $F_0/2^u \ge r/24$. That is,  $r/24 \le F_0/2^w \le 16r$ for any row $w \in [l,u]$. We have the following:
	\begin{eqnarray} 
	\label{eq:lower}
	F_0/2^{l-1} & > & 16r\\
	\label{eq:upper}
	F_0/2^{u+1} & < & r/24
	\end{eqnarray}

	\begin{lemma}
	When $r \ge 18\cdot e^8 \cdot \lambda$, with probability at most $1/4$, row $w$ with maximum $X_w$ lies outside $[u, l]$.
	\end{lemma}
	\begin{proof}

	For integer $w \ge 0$, define function $H(w)$ as 
	$$
	H(w) = \frac{f_k}{2^{w-1}} \exp\left( - 8 \cdot 2^{(l-w)}\right).
	$$
	First we show that $E(X_w) \le H(w)$. Using Claim \ref{cl:expvar}, we obtain
	\begin{eqnarray*}
	E(X_w) & = &  \frac{f_k}{2^w} \left( 1 - \frac{1}{r \cdot 2^w } \right)^{F_0 - 1}\\
	 & \le &  \frac{f_k}{2^w} \left(\frac{2r}{2r-1}\right) \left( 1 - \frac{1}{r \cdot 2^w } \right)^{F_0}\\
	 & \le &  \frac{f_k}{2^{w-1}} \exp\left( - \frac{F_0}{r \cdot 2^w } \right)\\
	 & \le &  \frac{f_k}{2^{w-1}} \exp\left( - 8\cdot 2^{(l-w)} \right)\\
	\end{eqnarray*}
	where the last inequality follows from (\ref{eq:lower}).

	Also, 
	\begin{eqnarray} 
	E(X_{w}) & \ge &  \frac{f_k} {2^{w}} \left( 1 - \frac{1}{r \cdot 2^{w}}\right)^{F_0}  \nonumber \\
	\label{eq:exlower}
		 & \ge &  \frac{f_k} {2^{w}} \exp\left( \frac{-2 F_0}{r \cdot 2^{w}}\right) 
	\end{eqnarray}
	since  $1-x \ge e^{-2x}$  for  $x \le 1/2$.

	For any $w \le l$, we have 
	\begin{eqnarray} 
	\label{eq:gp}
	\frac{H(w)}{H(w-1)} = \frac{1}{2} \cdot \exp\left(8 \cdot 2^{l-w}\right) \ge 2
	\end{eqnarray} 
	Hence, for any $w < l$, $E(X_w)$ can be bounded as 
	$
	E(X_w) \le H(l-1) = \frac{f_k}{2^{l-2}} \cdot e^{-16}.
	$

	Using Chebyshev inequality, it follows that for any fixed $w \le l-1$, 
	\begin{eqnarray*}
	\Pr[X_w \ge 2 H(l) ] & \le  & \Pr[|X_w - E(X_w)| \ge (2H(l) - E(X_w))]\\
	 & \le  & Var(X_w)/(2H(l) - E(X_w))^2  \\
	& \le &  E(X_w)/H^2(l) \\
	& \le & H(w) /H^2(l)
	\end{eqnarray*}

	\noindent
	It follows that
	$\Pr[$ there exists row $w$ in $[1, l-1]$ with $X_w \ge 2H(l)] \le \sum_{w=1}^{l-1} H(w)/H^2(l)$.
	From (\ref{eq:gp}), it we know that $H(w)$ follows geometric progression in $[1, l]$. Hence we obtain

	$$
	\sum_{w=1}^{l-1} H(w)/H^2(l) 	~\le~  H(l)/H^2(l) ~\le~ 1/H(l).
	$$
	Hence 
	\begin{eqnarray}
	\label{eq:maxub1}
	\Pr[\mbox{there exists row}~ w ~ \mbox{in}~ [1,l-1] ~ \mbox{with} ~ X_w \ge 2H(l)] ~\le~ 1/H(l).
	\end{eqnarray}

	Similarly, we obtain that
	for any $w \ge u$, we have 
	\begin{eqnarray} 
	\label{eq:gp2}
	\frac{H(w)}{H(w+1)} ~=~ 2 \cdot \exp\left(-8 / 2^{w-l+1}\right) ~\ge~ 2 \cdot \exp\left(-8 / 2^{u-l+1}\right) ~\ge~ 3/2, 
	\end{eqnarray} 
	where last inequality follows from the fact that  $2^{u-l} \ge 96$, which is obtained by dividing inequality (\ref{eq:lower}) by inequality (\ref{eq:upper}).

	\noindent
	Hence we obtain that 
	$\Pr[$ there exists row $w \ge u+1$ with $X_w \ge 2H(l)] \le \sum_{w \ge u+1} H(w)/H^2(l)$ $\le 2H(u)/H^2(l) \le 63/H(l)$, where the last inequality follows from the fact $2^{u-l} \ge 96$.
	Combining with (\ref{eq:maxub1}), we obtain
	\begin{eqnarray}
	\label{eq:maxub2}
	\Pr[\mbox{there exists row}~ w ~ \mbox{outside}~ [l,u] ~ \mbox{with} ~ X_w \ge 2H(l)] ~\le~ 64/H(l).
	\end{eqnarray}

	Now we show a special row $w^* \in [l,u]$ such that $\Pr[X_{w^*} \le 2H(l)] \le 1/H(l)$.
	For this, we consider a row $w^* \in [l,u]$ such that $r/3 \le F_0/2^{w^*} \le 2r/3$. 
	We have
	\begin{eqnarray*} 
	E(X_{w^*}) & \ge &  \frac{f_k} {2^{w^*}} \exp\left( \frac{-2 F_0}{r \cdot 2^{w^*}}\right)  ~~~(\mbox{using } (\ref{eq:exlower}))\\
		 & = &  2H(l)  \cdot \frac{2^l}{2^{w^* + 2}} \cdot \exp\left( 8 - \frac{2 F_0}{r \cdot 2^{w^*}}\right)\\
		 & \ge &  2H(l)  \cdot \frac{2^l}{2^{w^* + 2}} \cdot e^{20/3}   ~~~(\mbox{since } F_0/2^{w^*} \le 2r/3)\\
		 & \ge &  2H(l)  \cdot \frac{1}{192} \cdot e^{20/3}   ~~~(\mbox{since } 2^l \ge F_0/16r \mbox{ and } 2^{w^*} \le 3F_0/r)\\
		 & \ge &  8H(l) 
	\end{eqnarray*}

	Using Chebyshev inequality, we get
	\begin{eqnarray} 
	\Pr[X_{w^*} \le 2H(l)] &\le& \Pr[|X_{w^*} - E(X_{w^*})| \ge E(X_{w^*}) - 2H(l)] \nonumber\\
			&\le& Var(X_{w^*})/(E(X_{w^*}) - 2H(l))^2 \nonumber\\
			&\le& H(w^*)/(E(X_{w^*}) - 2H(l))^2 \nonumber\\
			&\le& H(w^*)/36 H^2(l)  \nonumber\\
	\label{eq:maxub4}
			&\le& 7/H(l)
	\end{eqnarray}
	where the last inequality is obtained using the fact that $2^l/2^{w^*} \le 1/12$. 
	We lower bound $H(l)$ as $H(l) = e^{-8} \cdot F_k/2^{l-1} \ge 16 \cdot e^{-8} \cdot r/\lambda$, where last inequality follows from (\ref{eq:lower})  and $F_k \ge F_0/\lambda$.  Combining with  (\ref{eq:maxub4}) and (\ref{eq:maxub2}), the result follows.
	\end{proof}

}

First we show lower and upper bounds on $E(X_w)$. Let $\theta = r/(r-1)$. We assume $r \ge 2$.  Let
$$
L(w) =  \frac{f_k} {2^{w}} \exp\left( \frac{-\theta F_0}{r \cdot 2^{w}}\right) \mbox{ ~~and~~ }  H(w) =  \frac{f_k} {2^{w}} \cdot \theta \cdot \exp\left( \frac{-F_0}{r \cdot 2^{w}}\right)
$$

Using Claim \ref{cl:expvar}, we obtain
\begin{eqnarray} 
E(X_{w}) & \ge &  \frac{f_k} {2^{w}} \left( 1 - \frac{1}{r \cdot 2^{w}}\right)^{F_0}  \nonumber \\
	 & \ge &  \frac{f_k} {2^{w}} \exp\left( \frac{-\theta F_0}{r \cdot 2^{w}}\right) ~=~ L(w) \nonumber
\end{eqnarray}
The last inequality follows from the inequality 
\begin{eqnarray}
\label{eq:alpha}
1-x \ge e^{-t x}   ~~\mbox{   for   }~~  x \in [0, 1] \mbox{~and~} x \le 2(t -1)/t^2
\end{eqnarray}

Similarly, we obtain
\begin{eqnarray} 
E(X_{w}) & = &  \frac{f_k} {2^{w}} \left( 1 - \frac{1}{r \cdot 2^{w}}\right)^{F_0-1}  \nonumber \\
	 & \le &  \frac{f_k} {2^{w}} \cdot \theta  \cdot \exp\left( \frac{-F_0}{r \cdot 2^{w}}\right) ~=~ H(w) \nonumber
\end{eqnarray}
where the last inequality follows from Claim (\ref{cl:expvar}), the definition of $\theta$, and from the inequality 
\begin{eqnarray}
\label{eq:exp}
1+x \le \exp(x) ~~\mbox{ for any real } x
\end{eqnarray}

It follows that 
\begin{eqnarray}
\label{eq:exbounds}
L(w) ~\le~ E(X_w) ~\le~ H(w)
\end{eqnarray}

Consider the interval $[l, u]$, where $l$ is the smallest level with $F_0/2^l \le 4r$ and $u$ is the largest level with $F_0/2^u \ge r/8$. That is,
\begin{eqnarray}
\label{eq:lbound}
2r ~~<~~ F_0/2^l ~\le~ 4r \\
\label{eq:ubound}
r/8 ~~\le~~ F_0/2^{u}  ~<~  r/4
\end{eqnarray}

From the above two inequalities, it follows that 
\begin{equation}
\label{eq:u-l}
u - l \le 5
\end{equation}

\begin{claim}
\label{cl:geo}
For $w \in [1, l-2]$, $H(w) \le H(w+1)/2$ and $\sum_{w=1}^{l-1} H(w) \le \frac{27}{26}H(l-1)$.  Similarly, for $w \ge u+1$, $H(w+1) \le 2H(w)/3$ and $\sum_{w \ge u+1} H(w) \le \frac{7}{3}H(u+1)$.
\end{claim}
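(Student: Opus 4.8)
The plan is to reduce the entire claim to understanding a single quantity, the consecutive ratio $H(w+1)/H(w)$. Since the common factors $\theta$ and $f_k$ cancel, a direct computation from the definition of $H$ gives
\[
\frac{H(w+1)}{H(w)} \;=\; \frac{1}{2}\exp\!\left(\frac{F_0}{r\,2^{w+1}}\right),
\]
so the whole behaviour of the sequence is governed by the single decreasing quantity $F_0/(r\,2^{w})$, whose size in the two regimes is pinned down by the defining inequalities (\ref{eq:lbound}) and (\ref{eq:ubound}). I would prove the four assertions in the order monotonicity-then-sum on the left, and monotonicity-then-sum on the right.

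For the increasing regime, take $w\in[1,l-2]$, so $w+1\le l-1$. Because $F_0/(r\,2^{w})$ is decreasing in $w$, I get $\tfrac{F_0}{r\,2^{w+1}} \ge \tfrac{F_0}{r\,2^{l-1}} = \tfrac{2}{r}\cdot\tfrac{F_0}{2^{l}} > 4$ using (\ref{eq:lbound}). Hence the ratio exceeds $\tfrac12 e^{4}\ge 2$, which is exactly $H(w)\le H(w+1)/2$. For the sum I read this backwards: the reciprocal ratio $H(w)/H(w+1)=2\exp\!\left(-\tfrac{F_0}{r\,2^{w+1}}\right)$ is at most $2e^{-4}$ for every $w\le l-2$, so telescoping gives $H(w)\le (2e^{-4})^{\,l-1-w}H(l-1)$, and summing the geometric series yields $\sum_{w=1}^{l-1}H(w)\le H(l-1)/(1-2e^{-4})$. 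A numerical check that $2e^{-4}\le 1/27$ then delivers the stated factor $\tfrac{27}{26}$.

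For the decreasing regime, take $w\ge u+1$, so $w+1\ge u+2$ and $\tfrac{F_0}{r\,2^{w+1}} \le \tfrac{F_0}{r\,2^{u+2}} = \tfrac14\cdot\tfrac1r\cdot\tfrac{F_0}{2^{u}} < \tfrac{1}{16}$ by (\ref{eq:ubound}). Hence the ratio is at most $\tfrac12 e^{1/16} < \tfrac23$, which is $H(w+1)\le 2H(w)/3$. Writing $q=\tfrac12 e^{1/16}$ as a uniform upper bound on the ratio throughout $w\ge u+1$, the geometric sum gives $\sum_{w\ge u+1}H(w)\le H(u+1)/(1-q)$, and since $q<4/7$ this is at most $\tfrac73 H(u+1)$.

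The only real subtlety is that the target constants $\tfrac{27}{26}$ and $\tfrac73$ are tight enough that the crude ratio bounds used for the monotonicity inequalities (the factors $2$ and $2/3$) are \emph{not} sufficient for the summation steps; one must keep the sharper exponential values $\tfrac12 e^{4}$ and $\tfrac12 e^{1/16}$ so that the common ratios of the two geometric series are small (respectively near $2e^{-4}\approx 0.037$ and near $0.53$), making $1/(1-q)$ fall below the required thresholds. Everything else is routine bookkeeping with (\ref{eq:lbound})--(\ref{eq:ubound}) together with monotonicity of $F_0/(r\,2^w)$.
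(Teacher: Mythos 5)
Your proof is correct and follows essentially the same route as the paper: both compute the consecutive ratio $H(w+1)/H(w) = \tfrac{1}{2}\exp\bigl(F_0/(r\,2^{w+1})\bigr)$, bound it via (\ref{eq:lbound}) and (\ref{eq:ubound}) by $\tfrac{1}{2}e^{4}$ and $\tfrac{1}{2}e^{1/16}$ respectively, and sum the resulting geometric series (the paper phrases the left regime as ratio $\ge 27$ giving $\sum_{w=1}^{l-1}H(w)\le \tfrac{27}{26}H(l-1)$, which matches your $2e^{-4}\le 1/27$ telescoping exactly). Your closing observation that the crude factors $2$ and $2/3$ would not suffice for the constants $\tfrac{27}{26}$ and $\tfrac{7}{3}$ is accurate and consistent with how the paper keeps the sharper exponential bounds for the summation step.
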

\begin{proof}
For any $w \le l-2$, using (\ref{eq:lbound}), we obtain 
$$
\frac{H(w+1)}{H(w)} = \frac{1}{2} \cdot \exp\left(\frac{F_0}{r \cdot 2^{w+1}}\right) \ge \frac{1}{2} \cdot \exp\left(\frac{F_0}{r \cdot 2^{l-1}}\right) \ge e^4/2  \ge 27 
$$
Hence, $\sum_{w=1}^{l-1} H(w) ~\le~ H(l-1) \sum_{w\ge 0} (1/27)^{w} ~\le~ 27H(l-1)/26$.

Similarly, for any $w \ge u+1$, using (\ref{eq:ubound}), we obtain
$$
\frac{H(w)}{H(w+1)} = 2 \cdot \exp\left(-\frac{F_0}{r \cdot 2^{w+1}}\right) \ge 2 \cdot \exp\left(-\frac{F_0}{r \cdot 2^{u+2}}\right) \ge 2e^{-1/16}  \ge  7/4
$$
Hence, $\sum_{w \ge u+1} H(w) ~\le~ H(u+1) \sum_{w \ge 0} (4/7)^w ~\le~ 7H(u+1)/3$.
\end{proof}

We assume that $\epsilon < 0.1$.

\begin{lemma}
\label{lem:good}
When $r \ge 1400 \cdot \lambda/\epsilon^2$, probability that there exists a level $w \in [l, u]$ with $|X_w - E(X_w)| \ge \epsilon E(X_w)$ is at most $1/8$.
\end{lemma}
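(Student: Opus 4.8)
The plan is to establish concentration for each level separately using Chebyshev's inequality and then take a union bound over the at most six levels contained in $[l,u]$. The key facts I would use are the variance bound $Var(X_w) \le E(X_w)$ from Claim \ref{cl:expvar}, the lower bound $E(X_w) \ge L(w)$ from (\ref{eq:exbounds}), the hypothesis $f_k \ge F_0/\lambda$, the interval bounds (\ref{eq:lbound})--(\ref{eq:ubound}), and the bound $u - l \le 5$ from (\ref{eq:u-l}).

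First, fixing any level $w$ and applying Chebyshev's inequality together with $Var(X_w) \le E(X_w)$ gives
\begin{equation*}
\Pr\bigl[\,|X_w - E(X_w)| \ge \epsilon E(X_w)\,\bigr] \;\le\; \frac{Var(X_w)}{\epsilon^2 E(X_w)^2} \;\le\; \frac{1}{\epsilon^2 E(X_w)}.
\end{equation*}
So the task reduces to showing that $E(X_w)$ is uniformly large over $[l,u]$. Writing $y = F_0/(r\,2^w)$, the interval bounds (\ref{eq:lbound})--(\ref{eq:ubound}) give $y \in [1/8, 4]$ for every $w \in [l,u]$, while $f_k/2^w = (f_k/F_0)(F_0/2^w) \ge ry/\lambda$. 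Combining with $E(X_w) \ge L(w) = (f_k/2^w)\exp(-\theta y)$ yields $E(X_w) \ge (ry/\lambda)\,e^{-\theta y}$. Minimizing $y\,e^{-\theta y}$ over $y \in [1/8,4]$, and noting that $\theta = r/(r-1)$ is essentially $1$ for the large $r$ at hand, the minimum is attained at $y = 4$, so $E(X_w) \ge (r/\lambda)\cdot 4e^{-4\theta}$, which is at least a constant (a shade below $4e^{-4}\approx 0.073$) times $r/\lambda$. Substituting $r \ge 1400\,\lambda/\epsilon^2$ then forces $E(X_w) \ge 48/\epsilon^2$, so the per-level failure probability above is at most $1/48$.

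Finally, since (\ref{eq:u-l}) shows $[l,u]$ contains at most $u - l + 1 \le 6$ levels, a union bound over them bounds the probability that some level deviates by more than $\epsilon E(X_w)$ by $6\cdot(1/48) = 1/8$, as claimed. I expect the main obstacle to be pinning down the constant in the lower bound on $E(X_w)$: one must correctly identify the worst-case $y=4$ in $[1/8,4]$, confirm that the $\theta$ factor is negligible for $r$ this large, and verify that the chosen constant $1400$ leaves enough slack to absorb both the constant $48/\epsilon^2$ needed for the Chebyshev step and the factor $6$ from the union bound.
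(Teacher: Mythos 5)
Your proposal is correct and follows essentially the same route as the paper's proof: Chebyshev's inequality with $Var(X_w) \le E(X_w)$ from Claim \ref{cl:expvar}, a lower bound on $E(X_w)$ via $L(w)$ and $f_k \ge F_0/\lambda$, and a union bound over the at most six levels in $[l,u]$ given by (\ref{eq:u-l}). Your only departure is cosmetic---you minimize $y\,e^{-\theta y}$ over $y = F_0/(r\,2^w) \in [1/8,4]$ in one shot where the paper verifies $2^{w-l}\exp(4\theta/2^{w-l}) \le \exp(4\theta)$ case by case for $w-l \in \{0,\ldots,5\}$---and your constants (worst case at $y=4$ giving $E(X_w) \ge 4e^{-4\theta}\,r/\lambda \ge 48/\epsilon^2$, hence per-level failure at most $1/48$ and total at most $6/48 = 1/8$) check out.
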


\begin{proof}

First we show that for any given level $w \in [l, u]$, $|X_w - E(X_w)| \ge \epsilon E(X_w)$ with probability at most $1/48$. 
The result then follows from (\ref{eq:u-l}) by noting that there are at most $6$ levels in $[l, u]$.

For $w \in [l, u]$, using (\ref{eq:exbounds}) and (\ref{eq:lbound}),  we can write 
\begin{eqnarray}
\frac{1}{E(X_w)} & \le & \frac{1}{L(w)} \nonumber \\ 
		& = & \left( \frac{2^w}{f_k}\right)\exp\left(\frac{\theta F_0}{r \cdot 2^w}\right) \nonumber \\
		& \le & \left( \frac{2^w}{f_k}\right)\exp(4 \theta/2^{w-l}) \nonumber \\
		& \le & \left( \frac{\lambda}{2r}\right)2^{w-l} \cdot \exp(4 \theta/2^{w-l}) \nonumber \\
		\label{eq:expinv}
		& \le & \left( \frac{\lambda}{2r}\right) \exp(4 \theta) 
\end{eqnarray}

The second last inequality follows from the assumption that $f_k \ge F_0 / \lambda$. The last inequality can be easily verified for all possible values for $w-l$
which by (\ref{eq:u-l}) is given by the set $\{0, \ldots, 5\}$ and also noting that $\theta \ge 1$.

We apply Chebyshev inequality and use the above upper bound for $1/E(X_w)$ and the fact that $Var(X_w) \le E(X_w)$ due to Claim \ref{cl:expvar} to obtain
\begin{eqnarray*}
\Pr[|X_w - E(X_w)| \ge \epsilon E(X_w)] & \le & Var(X_w)/\epsilon^2 E^2(X_w) \\
& \le & \frac{1}{\epsilon^2 E(X_w)}\\
& \le & \frac{1}{\epsilon^2} \left(\frac{\lambda}{2r}\right) \exp(4\theta)\\
& \le & 1/48
\end{eqnarray*}
 where the last inequality follows because $r \ge 1400 \cdot \lambda /\epsilon^2$. 
\end{proof}

Consider level $w' \in [l, u]$ such that $r/2 ~\le~ F_0/2^{w'} ~\le~ r$. From (\ref{eq:lbound}) and (\ref{eq:ubound}), it is clear that such a $w'$ exists.

\begin{claim}
\label{cl:lexpclaim}
$(1+\epsilon) H(l-1) < (1-\epsilon) E(X_{w'})$.
\end{claim}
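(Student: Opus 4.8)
The plan is to reduce the claim to a single ratio estimate and then bound that ratio by a constant strictly below $(1-\epsilon)/(1+\epsilon)$. Since (\ref{eq:exbounds}) gives $E(X_{w'}) \ge L(w')$, it suffices to prove
\[
\frac{H(l-1)}{L(w')} < \frac{1-\epsilon}{1+\epsilon},
\]
because this immediately yields $(1+\epsilon)H(l-1) < (1-\epsilon)L(w') \le (1-\epsilon)E(X_{w'})$. So the whole proof is about controlling $H(l-1)/L(w')$.

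First I would write this ratio explicitly. From the definitions of $H$ and $L$,
\[
\frac{H(l-1)}{L(w')} = 2^{\,w'-l+1}\,\theta\,\exp\!\left(\frac{\theta F_0}{r\,2^{w'}} - \frac{F_0}{r\,2^{l-1}}\right),
\]
so the ratio factors into a power-of-two prefactor $2^{w'-l+1}\theta$ and an exponential whose argument depends only on the normalized loads $F_0/(r2^{w})$ at levels $l-1$ and $w'$.

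Next I would bound the two factors separately using the defining ranges of $l$ and $w'$. From (\ref{eq:lbound}) we have $F_0/(r2^{l-1}) = 2\,F_0/(r2^{l}) > 4$, while the choice of $w'$ gives $F_0/(r2^{w'}) \le 1$, hence $\theta F_0/(r2^{w'}) \le \theta$; together these bound the exponential factor by $\exp(\theta-4)$. For the prefactor, dividing $F_0/2^l \le 4r$ by $F_0/2^{w'} \ge r/2$ gives $2^{w'-l} \le 8$, so $2^{w'-l+1} \le 16$. Combining, $H(l-1)/L(w') \le 16\,\theta\,e^{\theta-4}$.

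Finally I would check the numerical inequality. Since $r \ge 1400\lambda/\epsilon^2$ with $\lambda \ge 1$ and $\epsilon < 0.1$, the value $\theta = r/(r-1)$ is extremely close to $1$, so $16\theta e^{\theta-4}$ is below $16e^{-3} \approx 0.80$, whereas $(1-\epsilon)/(1+\epsilon) \ge 9/11 > 0.81$ for $\epsilon < 0.1$. The main obstacle is that this margin is tight, so I would track explicitly how close $\theta$ is to $1$ (via the lower bound on $r$) rather than simply setting $\theta = 1$. A cleaner route that removes the tightness entirely is to split into the only two admissible cases $w'-l=2$ and $w'-l=3$: in the first the prefactor is merely $8$, and in the second the prefactor $16$ is forced to pair with the much smaller exponential $e^{\theta/2-8}$, so in both cases the ratio is comfortably below $1/2$.
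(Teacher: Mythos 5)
Your proposal is correct and takes essentially the same route as the paper: lower-bounding $E(X_{w'})$ by $L(w')$ via (\ref{eq:exbounds}) and combining (\ref{eq:lbound}) with the defining range $r/2 \le F_0/2^{w'} \le r$, both arguments reduce to the identical numerical comparison $16\,\theta\, e^{\theta-4} < (1-\epsilon)/(1+\epsilon)$, which is exactly what the paper's ``straightforward to verify'' step (for $\epsilon < 0.1$ and $r \ge 101$) amounts to. Your explicit tracking of $\theta$ near $1$ (and the optional case split $w'-l \in \{2,3\}$) merely makes rigorous the tight margin that the paper leaves implicit.
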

\begin{proof}
From (\ref{eq:exbounds}), we obtain
\begin{eqnarray}
\label{eq:wstar}
E(X_{w'}) ~\ge~ L(w') ~=~ f_k/2^{w'} \exp\left( \frac{-\theta F_0}{r \cdot 2^{w'}}\right) \ge \left(\frac{r f_k}{2 F_0} \right)  \exp(-\theta)
\end{eqnarray}
Using (\ref{eq:lbound}), we obtain
$$
H(l-1) ~=~  \frac{f_k} {2^{l-1}} \cdot \theta \cdot \exp\left( \frac{-F_0}{r \cdot 2^{l-1}}\right) ~\le~  \frac{8r f_k} {F_0} \cdot \theta \cdot e^{-4}
$$
It is straightforward to verify that the result follows from the above two bounds, for $\epsilon < 0.1$ and $r \ge 101$.
\end{proof}

\begin{claim}
\label{cl:uexpclaim}
$(1+\epsilon) H(u+1) < (1-\epsilon) E(X_{w'})$.
\end{claim}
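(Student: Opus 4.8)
The plan is to mirror the argument of Claim \ref{cl:lexpclaim} almost verbatim, since the two claims are symmetric: there we controlled the tail level $l-1$ just below the estimation window $[l,u]$, and here we must control the tail level $u+1$ just above it. The strategy is to produce a clean upper bound on $H(u+1)$ from the defining inequality (\ref{eq:ubound}) for $u$, pair it with the already-established lower bound $E(X_{w'}) \ge \left(\frac{r f_k}{2 F_0}\right)\exp(-\theta)$ from (\ref{eq:wstar}), cancel the common factor $r f_k / F_0$, and reduce the claim to a single numerical inequality in $\theta$ and $\epsilon$ that holds comfortably in the operating regime $\epsilon < 0.1$, $r \ge 101$.

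First I would bound $H(u+1)$. By definition $H(u+1) = \frac{f_k}{2^{u+1}}\,\theta\,\exp\!\left(\frac{-F_0}{r\cdot 2^{u+1}}\right)$. From the right half of (\ref{eq:ubound}), namely $F_0/2^{u} < r/4$, I get $2^{u+1} > 8 F_0/r$, hence $1/2^{u+1} < r/(8F_0)$. Since the exponential factor is at most $1$ (its argument is negative), this yields the bound $H(u+1) < \frac{\theta r f_k}{8 F_0}$. If a little slack is needed, the left half of (\ref{eq:ubound}) gives $F_0/(r\cdot 2^{u+1}) \ge 1/16$, so the exponential can be sharpened to a factor $e^{-1/16}$, but I expect the crude bound to suffice.

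It then remains to combine the two bounds. Substituting into the desired inequality $(1+\epsilon)H(u+1) < (1-\epsilon)E(X_{w'})$ and cancelling $r f_k/F_0$ reduces everything to showing $(1+\epsilon)\,\theta < 4(1-\epsilon)\,\exp(-\theta)$. Here the key observation is that $4\exp(-1) \approx 1.47$, which leaves a wide margin over the left-hand side: for $\epsilon < 0.1$ the left side is below $1.1\,\theta$, and for $r \ge 101$ we have $\theta = r/(r-1)$ close to $1$ so that $\exp(-\theta)$ stays near $\exp(-1)$. I would verify the inequality at $\theta$ near $1$ (e.g. $\theta = 101/100$) and note monotonicity in $\theta$ to cover the whole admissible range.

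The main (and only real) obstacle is the same as in Claim \ref{cl:lexpclaim}: the final step is a numerical verification that is sensitive to how close $\theta$ is to $1$, and hence to the lower bound imposed on $r$. The comforting point is that this upper-tail version is strictly easier than the lower-tail Claim \ref{cl:lexpclaim}, because the geometric decay pushes $H(u+1)$ down by the favorable factor $1/8$ (rather than up by $8$), so the margin against $E(X_{w'})$ is larger and the constant $r \ge 101$ already used for Claim \ref{cl:lexpclaim} carries over without strengthening.
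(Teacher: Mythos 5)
Your proposal is correct and follows essentially the same route as the paper's proof: you bound $H(u+1)$ via (\ref{eq:ubound}) by $\frac{\theta r f_k}{8 F_0}$ (the paper keeps the extra sharpening factor $e^{-1/16}$, which is not needed), pair it with the lower bound $E(X_{w'}) \ge \frac{r f_k}{2 F_0}\exp(-\theta)$ from (\ref{eq:wstar}), and reduce to the numerical inequality $(1+\epsilon)\theta < 4(1-\epsilon)\exp(-\theta)$, which indeed holds with room to spare at the worst case $\theta = 101/100$, $\epsilon = 0.1$ (about $1.11$ versus $1.31$), with monotonicity in $\theta$ covering the rest. Your closing observation is also accurate: this upper-tail claim carries a wider margin than Claim \ref{cl:lexpclaim}, so the constants $\epsilon < 0.1$, $r \ge 101$ transfer without change.
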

\begin{proof}
From (\ref{eq:wstar}), we have
$$
E(X_{w'}) ~\ge~ \left(\frac{r f_k}{2 F_0} \right) \exp(-\theta)
$$
Using (\ref{eq:ubound}), we obtain
$$
H(u+1) ~=~  \frac{f_k} {2^{u+1}} \cdot \theta \cdot \exp\left( \frac{-F_0}{r \cdot 2^{u+1}}\right) ~\le~  \frac{r f_k} {8 F_0} \cdot \theta \cdot e^{-1/16}.
$$
Again, it is straightforward to verify that the result follows from the above two bounds, for $\epsilon < 0.1$ and $r \ge 101$.
\end{proof}

\begin{lemma}
\label{lem:badrows}
Probability that there exists a level $w$ outside $[u, l]$ with $X_w \ge (1-\epsilon)E(X_{w'})$ is at most $1/2$ when $r \ge 2000  \lambda/\epsilon^2$.
\end{lemma}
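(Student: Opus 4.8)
The plan is to take the threshold $\tau = (1-\epsilon)E(X_{w'})$ fixed by the level $w'$ and to union-bound, over all levels $w$ outside $[l,u]$, the probability that $X_w$ reaches $\tau$. The guiding observation is that for every such level the mean $E(X_w)$ sits strictly below $\tau$ with a quantifiable gap, so $\{X_w \ge \tau\}$ is a genuine upward deviation and Chebyshev applies cleanly. Indeed, for $w \le l-1$ the geometric growth in Claim~\ref{cl:geo} gives $H(w) \le H(l-1)$, and Claim~\ref{cl:lexpclaim} gives $(1+\epsilon)H(l-1) < \tau$, hence $\tau > (1+\epsilon)H(w) \ge H(w) \ge E(X_w)$ by (\ref{eq:exbounds}); symmetrically for $w \ge u+1$ via the decay in Claim~\ref{cl:geo} and Claim~\ref{cl:uexpclaim}.

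For a fixed outside level $w$ I would write $\Pr[X_w \ge \tau] \le Var(X_w)/(\tau - E(X_w))^2 \le E(X_w)/(\tau - E(X_w))^2 \le H(w)/(\tau - H(w))^2$, using $Var(X_w) \le E(X_w)$ from Claim~\ref{cl:expvar}, the bound $E(X_w) \le H(w)$ from (\ref{eq:exbounds}), and the positivity of $\tau - H(w)$ just established. Summing over the lower tail and replacing each denominator by its smallest value $\tau - H(l-1)$ yields $\sum_{w \le l-1}\Pr[X_w \ge \tau] \le \frac{27}{26}\,H(l-1)/(\tau - H(l-1))^2$, where the numerator sum comes directly from Claim~\ref{cl:geo}; the upper tail gives $\frac{7}{3}\,H(u+1)/(\tau - H(u+1))^2$ in the same way.

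The final estimate would feed Claims~\ref{cl:lexpclaim} and~\ref{cl:uexpclaim} into these two fractions twice: in the numerator they bound $H(l-1), H(u+1) < \tau$, and in the denominator they give $\tau - H(l-1) > \epsilon\tau/(1+\epsilon)$ and likewise for $u+1$. This is precisely the step that injects the factor $\epsilon^2$, since the squared gap is $\Theta(\epsilon^2\tau^2)$, and both tail sums collapse to $O(1/(\epsilon^2\tau))$. It then remains to lower-bound $\tau$: from (\ref{eq:wstar}) and the hypothesis $f_k \ge F_0/\lambda$ one gets $\tau = (1-\epsilon)E(X_{w'}) \ge (1-\epsilon)(r/2\lambda)e^{-\theta} = \Omega(r/\lambda)$, using $\theta = r/(r-1) \le 1.01$ for $r \ge 101$ and $\epsilon < 0.1$. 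Hence the whole union bound is $O(\lambda/(\epsilon^2 r))$, and choosing $r \ge 2000\,\lambda/\epsilon^2$ drives it below $1/2$ with ample slack in the constants.

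I expect the main obstacle to be bookkeeping rather than any single deep step: one must verify that $H$ is genuinely monotone on each tail so that the infinite sums over $w \le l-1$ and $w \ge u+1$ converge through the geometric bounds of Claim~\ref{cl:geo}, confirm that $\tau - H(w) > 0$ uniformly (so the Chebyshev denominators are well defined), and track the numeric constants $27/26$, $7/3$, the $e^{-\theta}$ factor, and the $\epsilon/(1+\epsilon)$ gap carefully enough to certify the final total is at most $1/2$ once $r \ge 2000\,\lambda/\epsilon^2$. The only genuinely delicate point is recognizing that the $\epsilon$-dependence of the memory bound originates solely from the denominator gap $(\tau - H(l-1))^2 = \Theta(\epsilon^2 \tau^2)$ supplied by Claims~\ref{cl:lexpclaim} and~\ref{cl:uexpclaim}.
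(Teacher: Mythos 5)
Your proposal is correct and follows essentially the same route as the paper's proof: the same union bound over levels outside $[l,u]$, Chebyshev with $Var(X_w)\le E(X_w)$ from Claim~\ref{cl:expvar}, the geometric tail sums $\frac{27}{26}H(l-1)$ and $\frac{7}{3}H(u+1)$ from Claim~\ref{cl:geo}, and the mean-to-threshold gap supplied by Claims~\ref{cl:lexpclaim} and~\ref{cl:uexpclaim}. The only deviation is bookkeeping: you normalize the Chebyshev gap through $\tau=(1-\epsilon)E(X_{w'})$, writing $\tau-H(l-1)>\epsilon\tau/(1+\epsilon)$ and then lower-bounding $\tau$ via (\ref{eq:wstar}), whereas the paper lower-bounds the gaps by $\epsilon H(l-1)$ and $\epsilon H(u+1)$ and then bounds $1/H(l-1)$ and $1/H(u+1)$ directly from (\ref{eq:lbound}) and (\ref{eq:ubound}), arriving at the explicit constant $817\lambda/(\epsilon^2 r)$ --- both versions clear $1/2$ with ample slack at $r\ge 2000\lambda/\epsilon^2$.
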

\begin{proof}
Consider any level $w \in [1, l-1]$. From Claim \ref{cl:geo} we have $H(w) \le H(l-1)$ and from Claim \ref{cl:lexpclaim} we have $H(l-1) < (1-\epsilon) E(X_{w'})$. Using (\ref{eq:exbounds}), it follows that $E(X_w)  \le H(w) \le H(l-1) < (1 - \epsilon) E(X_{w'})$. Using this and using Chebyshev inequality we obtain

\begin{eqnarray*}
\Pr[X_w \ge (1-\epsilon) E(X_{w'})] & \le & \Pr[|X_w - E(X_w)| \ge (1 -\epsilon)E(X_{w'}) - E(X_w)] \\
	& \le & Var(X_w) / ((1 -\epsilon)E(X_{w'}) - E(X_w))^2 \\
	& \le & Var(X_w) / ((1 -\epsilon)E(X_{w'}) - H(l-1))^2  \\ 
	& \le & Var(X_w) / (\epsilon H(l-1))^2 ~~(\mbox{by Claim \ref{cl:lexpclaim}})\\
	& \le & E(X_w) / (\epsilon H(l-1))^2 ~~(\mbox{by Claim \ref{cl:expvar}})\\
	& \le & \frac{H(w)}{(\epsilon H(l-1))^2}
\end{eqnarray*} 

Similarly, for any level $w \ge u+1$, we know from Claim \ref{cl:geo} and  Claim \ref{cl:uexpclaim} that $E(X_w) \le H(w) \le H(u+1) < (1 - \epsilon) E(X_{w'})$. Hence by Chebyshev inequality,
\begin{eqnarray*}
\Pr[X_w \ge (1-\epsilon) E(X_{w'})] & \le & \Pr[|X_w - E(X_w)| \ge (1 -\epsilon)E(X_{w'}) - E(X_w)] \\
	& \le & Var(X_w) / ((1 -\epsilon)E(X_{w'}) - E(X_w))^2 \\  
	& \le & Var(X_w) / ((1 -\epsilon)E(X_{w'}) - H(u+1))^2 \\ 
	& \le & Var(X_w) / (\epsilon H(u+1))^2 ~~(\mbox{by Claim \ref{cl:uexpclaim}})\\
	& \le & E(X_w) / (\epsilon H(u+1))^2\\
	& \le & \frac{H(w)}{(\epsilon  H(u+1))^2}
\end{eqnarray*} 

Now using Claim \ref{cl:geo}, we obtain that the  probability that there exists a level $w$ outside $[u, l]$ with $X_w \ge (1-\epsilon)E(X_{w'})$ is at most 
\begin{eqnarray*}
& & \sum_{i=1}^{l-1} \frac{H_w}{(\epsilon H(l-1))^2}  + \sum_{i\ge u+1} \frac{H_w}{(\epsilon H(u+1))^2} \\
 & \le & \frac{1}{\epsilon^2} \left(\frac{27}{26} \cdot \frac{1}{H(l-1)} + \frac{7}{3}\cdot \frac{1}{H(u+1)}\right) \\ 
 & \le & \frac{1}{\epsilon^2} \left(\frac{27}{26} \cdot \frac{2^{l-1}}{f_k} \cdot \exp\left(\frac{F_0}{r 2^{l-1}}\right)  + \frac{7}{3}\cdot \frac{2^{u+1}}{f_k} \cdot \exp\left(\frac{F_0}{r 2^{u+1}}\right)\right) \\
 & \le & \frac{\lambda}{\epsilon^2} \left(\frac{27}{26} \cdot \frac{2^{l-1}}{F_0} \cdot \exp\left(\frac{F_0}{r 2^{l-1}}\right)  + \frac{7}{3}\cdot \frac{2^{u+1}}{F_0} \cdot \exp\left(\frac{F_0}{r 2^{u+1}}\right)\right) \\
 & \le & \frac{\lambda}{\epsilon^2 r} \left(\frac{27}{104} \exp(8) + \frac{112}{3}  \exp(1/8)\right)  \mbox{~(by (\ref{eq:lbound}) and (\ref{eq:ubound}))}\\
& \le & \frac{817 \lambda}{\epsilon^2 r} \le 1/2
\end{eqnarray*}
for $r \ge 2000 \cdot \lambda/\epsilon^2$.
\end{proof}

\begin{lemma}
\label{lem:finalnofp}
The level $w^*$ computed by the algorithm is such that $|X_{w^*} - E(X_{w^*})| \le \epsilon E(X_{w^*})$ with probability at least $3/8$ when $r \ge 2000 \lambda/\epsilon^2$.
\end{lemma}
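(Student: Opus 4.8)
The plan is to obtain Lemma~\ref{lem:finalnofp} as an essentially immediate consequence of the two preceding lemmas, combined by a union bound, together with a short deterministic argument locating the maximizing level inside $[l,u]$. I would introduce two events: event $A$, that every level $w \in [l,u]$ satisfies $|X_w - E(X_w)| \le \epsilon E(X_w)$; and event $B$, that no level $w$ outside $[l,u]$ has $X_w \ge (1-\epsilon) E(X_{w'})$, where $w'$ is the distinguished level of $[l,u]$ with $r/2 \le F_0/2^{w'} \le r$. Lemma~\ref{lem:good} gives $\Pr[\overline{A}] \le 1/8$ and Lemma~\ref{lem:badrows} gives $\Pr[\overline{B}] \le 1/2$, both under $r \ge 2000\lambda/\epsilon^2$. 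Hence $\Pr[A \cap B] \ge 1 - 1/8 - 1/2 = 3/8$, which already matches the target probability, so no further probabilistic estimate is needed.

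Next I would argue deterministically that, on the event $A \cap B$, the selected level $w^*$ lies in $[l,u]$. Since $w' \in [l,u]$, event $A$ yields $X_{w'} \ge (1-\epsilon) E(X_{w'})$, so the global maximum $\max_w X_w$ is at least $(1-\epsilon)E(X_{w'})$. On the other hand, event $B$ guarantees that every level outside $[l,u]$ has $X_w < (1-\epsilon)E(X_{w'})$, and hence no such level can attain the maximum. Therefore $w^* = \arg\max_w X_w \in [l,u]$, and a second application of event $A$ at $w^*$ gives $|X_{w^*} - E(X_{w^*})| \le \epsilon E(X_{w^*})$, as claimed.

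The step I expect to require the most care is reconciling this true-positive argument with the fact that the algorithm actually selects $w^*$ by maximizing the observed count $|\{c : T_w[c].v = i\}|$, which counts both true-positive and false-positive counters rather than $X_w$ alone. Since false positives can only inflate the observed count, ruling out an outside level as the observed maximizer requires controlling false positives at those levels as well; this is supplied by the separate false-positive bound referenced earlier (``we will later bound the estimation error due to false positive counters''). Under that bound the false-positive contribution is negligible compared with $(1-\epsilon)E(X_{w'})$, so the observed argmax coincides with the true-positive argmax inside $[l,u]$, and Claims~\ref{cl:lexpclaim} and~\ref{cl:uexpclaim} (the strict separations $(1+\epsilon)H(l-1) < (1-\epsilon)E(X_{w'})$ and $(1+\epsilon)H(u+1) < (1-\epsilon)E(X_{w'})$) provide exactly the slack needed to absorb that contribution. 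The remainder is the routine union-bound bookkeeping described above.
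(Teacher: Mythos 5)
Your proof is correct and follows essentially the same route as the paper: a union bound over Lemma~\ref{lem:good} and Lemma~\ref{lem:badrows} gives the event of probability at least $3/8$, on which $X_{w'} \ge (1-\epsilon)E(X_{w'})$ while every level outside $[l,u]$ falls strictly below that threshold, forcing $\arg\max_w X_w$ into $[l,u]$. Your caveat about the algorithm maximizing observed (false-positive-inflated) counts is exactly the point the paper handles by temporarily assuming $w^* = \arg\max_w X_w$ in this lemma and removing the assumption in the subsequent false-positives section (Lemma~\ref{lem:finalwithfp}), though there the slack comes from Markov bounds on $\Delta_w$ with the parameter $u$ rather than from Claims~\ref{cl:lexpclaim} and~\ref{cl:uexpclaim} alone.
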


\begin{proof}
From Lemma \ref{lem:good}, and Lemma \ref{lem:badrows}, it follows that with probability at least $3/8$, the following properties holds.  For any $w\in [l, u]$,  $|X_w - E(X_w)| < \epsilon E(X_w)$. In particular, it holds for $w' \in [l, u]$. That is, $(1 - \epsilon) E(X_{w'}) < X_{w'} < (1+ \epsilon) E(X_{w'})$. Furthermore, for any $w$ outside $[l, u]$, $X_w < (1- \epsilon) E(X_{w'})$. 
We recall that the level $w^*$ computed by the algorithm is the level which maximizes the total number of non dirty counters in that level. Non dirty counters could contain both false positive and true positive counters. We assume for now that the level $w^*$ computed by the algorithm maximizes the number of true positive counters, which is $X_w$. We will remove this assumption in the next section. That is, $w^* = \arg\!\max_w \{X_w\}$. 
As a consequence, it follows that with probability at least $3/8$, the level $w^*$ lies in the range $[l, u]$ and 
it satisfies the property that $|X_{w^*} - E(X_{w^*})| \le \epsilon E(X_{w^*})$.
\end{proof}

\newpage
\vspace*{0.2cm}
\noindent
\textbf {Dealing with false positives}
\vspace*{0.2cm}

Consider any level $w$ and a fixed array location $T_w[c]$ of array $T_w$ in level $w$. Let $f_p$ denote the probability that the counter value stored in $T_w[c]$ is false positive.  Hence $T_w[c].v$ value is erroneously considered towards $f_k$ estimation for $k \ge 2$. We recall that false positives do not arise in $f_1$ estimation. 

\begin{lemma}
\label{lem:fp}
$f_p \le 1/u$ for $r \ge 2$ and $u \ge 6$.
\end{lemma}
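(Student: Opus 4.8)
The plan is to first pin down exactly when the counter $T_w[c]$ becomes a false positive, and then reduce the probability of that event to a trivial binomial tail. By the update rule, the field $T_w[c].p$ is frozen to the auxiliary value $j$ of the first occurrence that reaches the counter, and the counter is flagged dirty precisely when some later occurrence arrives with a different $j$. Since every occurrence of a fixed $k$-mer carries the same pair $(c,j)$, the counter stays non dirty if and only if all distinct $k$-mers mapping to $(w,c)$ share one common value of $j$. It is a false positive exactly when this holds and, in addition, at least two distinct $k$-mers map to $(w,c)$ (a single distinct $k$-mer gives a correct, true positive, count). So the first step is to record this equivalence.

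Next I would set up the probability under the truly-random-hash idealization. Each of the $F_0$ distinct $k$-mers maps independently to level $w$ and counter $c$ with probability $p = 1/(2^w r)$, and, conditioned on mapping there, its auxiliary value is uniform on $\{0,\dots,u-1\}$ and independent across $k$-mers. Let $n$ be the number of distinct $k$-mers that map to $(w,c)$, so $n \sim \mathrm{Bin}(F_0,p)$. Given $n$, the conditional probability that all of them carry the same $j$ is $\sum_{v=0}^{u-1}(1/u)^n = u^{-(n-1)}$. Summing over the collision cases $n \ge 2$ gives
$$
f_p = \sum_{n\ge 2}\Pr[\mathrm{Bin}(F_0,p)=n]\,u^{-(n-1)}.
$$

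The final step is the bound itself, and the key observation is that the geometric factor $u^{-(n-1)}$ is at most $u^{-1}$ for every $n \ge 2$ (since $u \ge 1$). Pulling this out of the sum yields
$$
f_p \;\le\; \frac{1}{u}\sum_{n\ge 2}\Pr[\mathrm{Bin}(F_0,p)=n] \;=\; \frac{1}{u}\,\Pr[\mathrm{Bin}(F_0,p)\ge 2] \;\le\; \frac{1}{u},
$$
which is the claim. The standing assumptions $r \ge 2$ and $u \ge 6$ are not actually required to push this particular inequality through (only $u \ge 1$ is used); they are the global parameter assumptions of the analysis, guaranteeing $p \le 1/4$ and that $1/u$ is small enough for the downstream error estimates that invoke this lemma.

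I expect the only genuine subtlety to be the characterization in the first step: correctly arguing that ``non dirty in spite of a collision'' is equivalent to ``at least two distinct $k$-mers, all sharing one $j$'', and extracting the clean factor $u^{-(n-1)}$ from the independence of the auxiliary values. Once that is in place the tail bound is immediate. A sharper estimate (replacing $\mathrm{Bin}(F_0,p)$ by a Poisson bound on $\mu = F_0 p$ and optimizing over $\mu$) would give a constant strictly below $1$, but such tightening is unnecessary for the stated bound.
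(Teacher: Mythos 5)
Your proof is correct, and it reaches the bound by a genuinely cleaner route than the paper. Both arguments work in the same idealized model (each of the $F_0$ distinct $k$-mers lands at $(w,c)$ independently with probability $p = 1/(r\cdot 2^w)$ and then draws an independent uniform auxiliary value), and both rest on the same characterization of a false positive as ``at least two distinct $k$-mers at the counter, all sharing one $j$,'' which you justify correctly from the update rule. The paper, however, fixes one auxiliary value $j$, writes $f_p \le u\cdot\Pr(E_1\wedge E_2)$ where $E_1$ is the event that two or more elements hash to $(c,j)$ and $E_2$ the event that no elements hash to $(c,i)$ for $i\ne j$, and then bounds the binomial sum $\sum_{k\ge 2}\binom{n}{k}(p/u)^k(1-p)^{n-k}$ through a chain of exponential estimates (using $1-p\le e^{-p}$, $\binom{n}{k}\le n^k/k!$, and $x^2\le e^{3x/4}$) to conclude $\Pr(E_1\wedge E_2)\le 1/u^2$; that chain invokes $p\le 1/(2r)$, $r\ge 2$, and, in its final step, $u\ge 8$ --- strictly more than the $u\ge 6$ announced in the lemma statement. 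Your conditioning on the occupancy $N\sim\mathrm{Bin}(F_0,p)$ instead yields the exact identity $f_p=\sum_{n\ge 2}\Pr[N=n]\,u^{1-n}$, from which $f_p\le u^{-1}\Pr[N\ge 2]\le 1/u$ is immediate and needs only $u\ge 1$. What your version buys: it eliminates the series estimation and the parameter hypotheses entirely (in particular closing the $u\ge 6$ versus $u\ge 8$ mismatch in the paper's own proof), it recovers the paper's intermediate per-$j$ bound for free (since $u^{1-n}/u \le u^{-2}$ for $n\ge 2$), and the exact expression keeps your suggested Poisson tightening available. The only information the paper's longer calculation exposes that yours suppresses is the explicit decay in $np=F_0/(r\cdot 2^w)$, which could in principle give sharper level-dependent bounds, but nothing downstream of the lemma uses that.
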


\begin{proof}
As earlier, we assume that hash functions are truly random.
Fix $j \in \{0, \ldots, u-1\}$.
Consider the following two events:

\noindent
Event $E_1$: Two or more elements are hashed to  the pair $(c, j)$ in level $w$

\noindent
Event $E_2$: No elements are hashed to pairs $(c, i)$ for any $i \not= j$ in level $w$.

We observe that $f_p \le u \cdot \Pr( E_1 \wedge E_2 )$. In the following, we show that $\Pr (E_1 \wedge E_2) \le 1/u^2$, which would then imply that $f_p \le 1/u$. We can analyze $\Pr( E_1 \wedge E_2)$ by considering the following balls and bins experiment where $F_0 = n$ balls are thrown into $u$ bins in the following fashion. For each ball, a coin is tossed with success probability $p = 1/(r \cdot 2^w)$. With probability $1-p$ the ball is discarded. Upon success, the ball is thrown into any one  of the $u$ bins $\{0, \ldots, u-1\}$  with equal probability. It is straightforward to verify that event $E_1 \wedge E_2$ is equivalent to the event that two or more balls are present in bin $j$ and remaining $u-1$ bins are empty. The probability of this event is given by:

\begin{eqnarray*}
	\Pr(E_1 \wedge E_2) & = &\sum_{k=2}^n {n \choose k} (p/u)^k (1-p)^{n-k}\\
	& = &(1-p)^n \sum_{k=2}^n {n \choose k} \left(\frac{p}{u(1-p)}\right)^k \\
	& \le & \exp(-np) \sum_{k=2}^n {n \choose k} \left(\frac{p}{u(1-p)}\right)^k \\
	& \le &  \exp(-np) \sum_{k=2}^n \frac{1}{k!} \left(\frac{np}{u(1-p)}\right)^k \\
	& \le & \left(\frac{1}{2}\right)  \exp(-np) \left(\frac{np}{u(1-p)}\right)^2 \sum_{k\ge 0} \frac{1}{k!} \left(\frac{np}{u(1-p)}\right)^k \\
	& \le & \left(\frac{1}{2}\right) \exp(-np) \left(\frac{np}{u(1-p)}\right)^2 \exp\left(\frac{np}{u(1-p)}\right) \\
	& = & \left(\frac{1}{2}\right) \left(\frac{np}{u(1-p)}\right)^2 \exp\left\{-np\left(1 - \frac{1}{u(1-p)}\right)\right\} \\
\end{eqnarray*}

We use the facts that $p \le \frac{1}{2r}$ and $r \ge 2$ to obtain that $1/(1-p)^2 \le 2$. Using this, we simplify the above expression as
\begin{eqnarray*}
	\Pr(E_1 \wedge E_2) &  \le  & \left(\frac{np}{u}\right)^2 \exp\left\{-np\left(1 - \frac{2}{u}\right)\right\} \\
	&  =  & \frac{1}{u^2} (np)^2 \exp\left\{-np\left(1 - \frac{2}{u}\right)\right\} \\
	&  \le  & \frac{1}{u^2} (np)^2 \exp(-3np/4) ~~(\mbox{for~} u \ge 8)\\
	& \le & 1/u^2,
\end{eqnarray*}
for $u \ge 8$. The last inequality follows because $x^2 \le \exp(3x/4)$ for all $x\ge 0$.
\end{proof}

For the fixed $k$, let $Y_w$ denote the number of non dirty locations in the array $T_w$  each having value $k$ at level $w$. Let $Y_w = X_w + \Delta_w$, where $X_w$ as defined earlier is the number true positive locations and $\Delta_w$ is the number of false positive locations.

\ignore{ 
	Let $\Delta_w$ denote the number of false positive array locations in row $w$. From Lemma \ref{lem:fp}, we have $E(\Delta_w) \le r/u$. Hence by Markov inequality,
	\begin{eqnarray}
	\Pr[\Delta_w \ge b] \le \frac{r}{ub}
	\end{eqnarray}
}


From Lemma \ref{lem:fp}, for any level $w$, we have, 
$$
E(\Delta_w) \le r/u.
$$

Let $\epsilon' = \epsilon - \epsilon^2$. 

 We recall the definition of level $w'$. By Markov inequality, it follows that for any fixed $w$ outside $[l, u]$,
\begin{eqnarray*}
\Pr[\Delta_{w}  \ge  (\epsilon - \epsilon') E(X_{w'})] & \le &  \frac{r}{u (\epsilon - \epsilon') E(X_{w'})}\\
   & \le & \frac{2 \lambda e^{\theta}}{u \epsilon^2 } ~~~(\mbox{using (\ref{eq:wstar}}))\\
   & \le & \frac{1}{16 \log(F_0)}
\end{eqnarray*}
for $u =O(\lambda \log(F_0) /\epsilon^2)$.  Since there are at most $\log(F_0)$ levels, it follows that
\begin{equation}\label{eq:prime1}
\Pr\left[\Delta_{w}  \ge  (\epsilon - \epsilon') E(X_{w'}) \mbox{~for any~} w \mbox{~outside~} [l, u]\right]   \le  1/16
\end{equation}
when $u = O(\lambda \log(F_0)/\epsilon^2)$.

\ignore{
	Similarly for any $w \in [l, u]$, 
	\begin{eqnarray*}
	\Pr[\Delta_{w}  \ge  (\epsilon - \epsilon') E(X_{w})] & \le &  \frac{r}{u (\epsilon - \epsilon') E(X_{w})}\\
	\end{eqnarray*}
}


Similarly for any fixed $w \in [l, u]$, by Markov inequality,  we obtain that
\begin{eqnarray*}
\Pr[\Delta_{w}  \ge  (\epsilon - \epsilon') E(X_{w})] & \le &  \frac{r}{u (\epsilon - \epsilon') E(X_w)} \\
& \le &  \frac{r}{u (\epsilon - \epsilon')} \cdot \frac{\lambda}{2r} \cdot e^{4\theta} ~~\mbox{(using (\ref{eq:expinv}))}\\
& = &  \frac{\lambda e^{4\theta}}{2u \epsilon^2}  \le  1/100
\end{eqnarray*}
for $u = O(\lambda/\epsilon^2)$. Recalling that there are at most 6 intervals in $[l, u]$, it follows that 
\begin{equation}\label{eq:prime2}
\Pr\left[\Delta_{w}  \ge  (\epsilon - \epsilon') E(X_{w}) \mbox{~for any~} w \in [l, u]\right]  \le 1/16
\end{equation}
 when  $u = O(\lambda /\epsilon^2)$.

Now from Lemma \ref{lem:good} we have 
\begin{equation}\label{eq:prime3}
\Pr\left[|X_{w} - E(X_w)| \ge \epsilon' E(X_w) \mbox{~for any~} w \in [l, u]\right] \le 1/8
\end{equation}
where  $r = O(\lambda /\epsilon'^2) = O(\lambda/\epsilon^2)$ when $\epsilon < 0.5$.

Similarly from Lemma \ref{lem:badrows} we have 
\begin{equation}\label{eq:prime4}
\Pr\left[X_{w} \ge (1- \epsilon - \epsilon^2) E(X_{w'}) \mbox{~for any~} w \mbox{~outside~} [l, u]\right] \le 1/2
\end{equation}
when  $r = O(\lambda /\epsilon^2)$.

From (\ref{eq:prime1}) (\ref{eq:prime2})  (\ref{eq:prime3}) and (\ref{eq:prime4}), we conclude the following:
with probability at least $1 - 1/8 - 1/2 - 1/16/ - 1/16 = 1/4$, 
$$
(1 - \epsilon) E(X_w) \le Y_w \le (1+\epsilon) E(X_w) \mbox{~ for all ~} w \in [l , u],
$$
and
$$
Y_w < (1 - \epsilon) E(X_{w'}) \mbox{~ for all ~} w \mbox{~outside~} [l , u],
$$
when $r = O(\lambda/\epsilon^2)$ and $u = O(\lambda \log(F_0)/\epsilon^2)$.
Recalling that $w' \in [l, u]$, it follows that the level $w^*$ computed by the algorithm lies in $[l, u]$. As a consequence, we have the following Lemma.

\begin{lemma}
\label{lem:finalwithfp}
The level $w^*$ computed by the algorithm is such that $|Y_{w^*} - E(X_{w^*})| \le \epsilon E(X_{w^*})$ with probability at least $1/4$ when $r = O(\lambda/\epsilon^2)$ and $u= O(\lambda\log(F_0)/\epsilon^2)$.
\end{lemma}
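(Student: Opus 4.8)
The plan is to reduce this lemma to the three concentration facts already assembled for the true-positive counts $X_w$ (Lemmas \ref{lem:good} and \ref{lem:badrows}) together with the false-positive bound (Lemma \ref{lem:fp}), and to combine them by a single union bound. The algorithm selects $w^*$ as the level maximizing the number of non-dirty counters holding value $k$, i.e. maximizing $Y_w = X_w + \Delta_w$. So I must establish two things on a common good event: first, that $Y_w$ tracks $E(X_w)$ to within a $(1\pm\epsilon)$ factor throughout the window $[l,u]$; and second, that every level outside $[l,u]$ has $Y_w$ strictly below $(1-\epsilon)E(X_{w'})$, where $w'\in[l,u]$ is the reference level with $r/2 \le F_0/2^{w'} \le r$. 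The second property forces the maximizer $w^*$ to land inside $[l,u]$, after which the first property immediately yields the claimed concentration of $Y_{w^*}$.

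First I would split the error budget by setting $\epsilon' = \epsilon - \epsilon^2$, reserving $\epsilon'$ for the true-positive fluctuation and the remaining $\epsilon^2$ slack for the false positives. For the false positives, Lemma \ref{lem:fp} gives $E(\Delta_w) \le r/u$, so Markov's inequality controls $\Delta_w$ against a threshold proportional to $E(X_{w'})$ (resp. $E(X_w)$) using the lower bounds on $E(X_{w'})$ and $1/E(X_w)$ from (\ref{eq:wstar}) and (\ref{eq:expinv}). The delicate point is the choice of $u$: outside $[l,u]$ there can be up to $\log(F_0)$ levels, so to union-bound $\Delta_w < \epsilon^2 E(X_{w'})$ over all of them I need $u = O(\lambda\log(F_0)/\epsilon^2)$, whereas inside $[l,u]$ only the six levels of (\ref{eq:u-l}) occur and $u = O(\lambda/\epsilon^2)$ already suffices. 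These give the two false-positive bounds (\ref{eq:prime1}) and (\ref{eq:prime2}).

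For the true positives I would invoke Lemma \ref{lem:good} (applied with $\epsilon'$ in place of $\epsilon$, which keeps $r = O(\lambda/\epsilon^2)$ since $\epsilon<0.5$) for the in-window concentration (\ref{eq:prime3}), and Lemma \ref{lem:badrows} with the slightly sharpened threshold $(1-\epsilon-\epsilon^2)E(X_{w'})$ for the out-of-window bound (\ref{eq:prime4}); the sharpening by $\epsilon^2$ is exactly what leaves room to absorb the false-positive slack. A union bound over the four failure events gives overall failure probability at most $1/8 + 1/2 + 1/16 + 1/16 = 3/4$, so with probability at least $1/4$ all four hold simultaneously. On that event, for $w\in[l,u]$ I add $X_w \in (1\pm\epsilon')E(X_w)$ to $\Delta_w < \epsilon^2 E(X_w)$ and use $\epsilon' + \epsilon^2 = \epsilon$ to get $(1-\epsilon)E(X_w) \le Y_w \le (1+\epsilon)E(X_w)$; for $w$ outside $[l,u]$ I add $X_w < (1-\epsilon-\epsilon^2)E(X_{w'})$ to $\Delta_w < \epsilon^2 E(X_{w'})$ to get $Y_w < (1-\epsilon)E(X_{w'})$. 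Since $w'\in[l,u]$ gives $Y_{w'} \ge (1-\epsilon)E(X_{w'})$, no outside level can maximize $Y_w$, so $w^* \in [l,u]$ and the in-window bound yields $|Y_{w^*} - E(X_{w^*})| \le \epsilon E(X_{w^*})$.

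The main obstacle I anticipate is the false-positive accounting rather than any fresh probabilistic argument: getting the two-scale choice of $u$ right (the extra $\log(F_0)$ factor is unavoidable because the maximizer could in principle be chased by a spurious spike at any of the $\log(F_0)$ levels, not only those inside the window), and threading the $\epsilon'=\epsilon-\epsilon^2$ split so that the true-positive and false-positive budgets recombine to exactly $\epsilon$ on both sides of the window. Everything else is a bookkeeping union bound over results already in hand.
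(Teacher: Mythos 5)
Your proposal reproduces the paper's own proof essentially step for step: the same budget split $\epsilon' = \epsilon - \epsilon^2$ between true-positive fluctuation and false positives, the same Markov bounds on $\Delta_w$ from $E(\Delta_w) \le r/u$ with the two-scale choice of $u$ (an extra $\log(F_0)$ factor to union-bound over the levels outside $[l,u]$, only the six levels of (\ref{eq:u-l}) inside), the same invocations of Lemmas \ref{lem:good} and \ref{lem:badrows} at the shifted thresholds $\epsilon'$ and $(1-\epsilon-\epsilon^2)E(X_{w'})$, and the same $1/8 + 1/2 + 1/16 + 1/16 = 3/4$ union bound, with the reference level $w'$ forcing the maximizer $w^*$ into $[l,u]$. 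The argument is correct as written and requires no changes.
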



Suppose we have an estimate $\hat{F}_0$ with $|\hat{F}_0 - F_0| \le   \alpha F_0$ and we have $Y_{w^*}$ with $|Y_{w^*} - E(X_{w^*})| \le \beta E(X_{w^*})$. 
Using $\hat{F}_0$ and $Y_{w^*}$, we compute estimate $\hat{f}_k$ for $f_k$ as follows. 
We  recall from Claim \ref{cl:expvar} that $E(X_{w^*}) = \frac{f_k}{2^{w^*}} \left( 1 - \frac{1}{r \cdot 2^{w^*} } \right)^{F_0 - 1}$.
We recall that the estimate $\hat{f}_k$ is given by
$$
\hat{f}_k ~=~  2^{w^*} \cdot Y_{w^*} \left( 1 - \frac{1}{r } \right)^{1 - \frac{\hat{F}_0}{2^{w^*}}}
$$

To bound error for $\hat{f}_k$, we have
$$
\hat{f}_k ~\in~  2^{w^*} \cdot (1 \pm \beta) E(X_{w^*}) \left( 1 - \frac{1}{r  } \right)^{1 - (1 \pm \alpha) \frac{F_0}{2^{w^*}}}
$$

This can be restated as
$$
(1 - \beta) \rho f_k  \left( 1 - \frac{1}{r \cdot 2^{w^*} } \right)^{\alpha F_0} ~\le~ \hat{f}_k ~\le~ (1 + \beta) \rho f_k  \left( 1 - \frac{1}{r \cdot 2^{w^*} } \right)^{-\alpha F_0}
$$
where $\rho = \left( 1 - \frac{1}{r}\right) ^{1 - \frac{F_0}{2^{w^*}}} \left( 1 - \frac{1}{r 2^{w^*}} \right) ^{F_0 - 1}$.

We recall $\theta = r/(r-1)$ and use (\ref{eq:alpha}), (\ref{eq:exp}) and the fact that $F_0/2^{w^*} \le F_0/2^l \le 4r$ from (\ref{eq:lbound}), to bound $\hat{f}_k$ as 

$$
(1-\beta) \rho (1-4\alpha \theta) f_k ~\le~ (1-\beta) \rho \exp\left(\frac{-\theta\alpha F_0}{r2^{w^*}}\right) f_k ~\le~ \hat{f}_k ~\le~  (1+\beta) \rho \exp\left(\frac{\alpha F_0}{r2^{w^*}}\right) f_k ~\le~ \frac{(1+\beta)}{(1 - 4\alpha)} \rho  f_k
$$

We can also bound $\rho$ as

$$
(1 - 1/r) e^{-4/(r-1)} \le (1 - 1/r) \exp\left(-(\theta - 1)\frac{F_0}{r 2^{w^*}}\right) ~\le~ \rho ~\le~ \exp\left((\theta - 1)\frac{F_0}{r 2^{w^*}}\right) \le e^{4/(r-1)},
$$
which using (\ref{eq:exp}) and the fact $r = \Omega(1/\epsilon^2)$, can be simplified as 
$$
1- \epsilon^2 \le \rho \le 1 +  \epsilon^2
$$

It is straightforward  to verify using (\ref{eq:exp}) that for suitable constants  $c_1$ and $c_2$ with $\alpha = c_1 \epsilon$ and $\beta = c_2 \epsilon$, we have $(1-\epsilon) \le (1-\beta) \rho (1-4 \alpha \theta)$ and $(1+\beta) \rho /(1 - 4\alpha) \le (1+\epsilon)$. 
Thus we obtain $(1-\epsilon) f_k \le \hat{f}_k \le (1+ \epsilon) f_k$. 

\ignore{
	\begin{lemma}
	\label{lem:finalnofp}
	Using an estimate $\hat{F}_0$ with $|\hat{F}_0 - F_0| \le \epsilon$
	The level $w^*$ computed by the algorithm is such that $|Y_{w^*} - E(X_{w^*})| \le \epsilon E(X_{w^*})$ with probability at least $1/4$ when $r = O(\lambda/\epsilon^2)$ and $u= O(\lambda\log(F_0)/\epsilon^2)$.
	\end{lemma}
}


\vspace*{0.2cm}
\noindent
\textbf{$F_0$ computation}
\vspace*{0.2cm}

Estimate for $F_0$ can be computed by any of the existing approaches. Our algorithm uses the same approach of \cite{kmerstream} to compute $F_0$. It is straightforward to verify from the proof of Theorem $1$ in \cite{kmerstream} that for $r = O(1/\epsilon^2)$, the estimate $\hat{p}_0$ of  $p_0 = (1 - 1/r)^{F_0/2^{w^*}}$ computed in our $F_0$ estimation procedure, where $w^{*}$ is the level chosen by the algorithm, has the property that
$|\hat{p}_0 - p_0| > \frac{\epsilon}{3} p_0$ with probability at most $1/2$. 
Furthermore, we recall from  the proof of Theorem 1 \cite{kmerstream} that $p_0 \ge 1/3$.

It follows that to estimate $\hat{F}_0$ such that  $\hat{F}_0 \in (1 \pm \alpha)F_0$, $\alpha$ and $\epsilon$ are related as 
$$
1 - \epsilon/3 ~\le~  p_0^\alpha ~~\mbox{and}~~  1 + \epsilon/3  ~\ge~  p_0^{-\alpha}
$$
It is straightforward to verify that the above conditions are satisfied when $\epsilon/3 \le \alpha \le \epsilon$.
It follows that for $r = O(1/\alpha^2)$, estimate $\hat{F}_0$ of $F_0$ can be computed such that  $|\hat{F}_0 - F_0| > \alpha F_0$ with probability at most  $1/8$.
Using standard Chernoff bound, it follows that for $s = O(\log(1/\delta))$, more than half of $s$ independent estimates of $F_0$ deviate by more than $\alpha F_0$ from $F_0$ is at most $\delta/2$. In other words, the median of $s$ independent estimates deviates by more than $\alpha F_0$ from $F_0$ with probability at most $\delta/2$.

\vspace*{0.2cm}
\noindent
\textbf {Probability amplification through median computation}
\vspace*{0.2cm}

Using again the median argument given above, it follows that by using median of $O(\log(\lambda/\delta))$
independent estimates as the final estimate, for any fixed $f_k$ for $k \ge 1$, an estimate $\hat{f}_{k}$ can be computed such that $|\hat{f}_k - f_k| \ge \epsilon f_k$ with probability at most $\delta/(2\lambda)$. Noting that there are at most $\lambda$ such $k$ with $f_k \ge F_0 / \lambda$, it follows by union bound that the probability of obtaining a bad estimate for either $F_0$ or for any $f_k$  where $f_k \ge F_0 / \lambda$, is at most $\delta$.

\vspace*{0.2cm}
\noindent
\textbf {Space and Time Complexities}
\vspace*{0.2cm}

We recall that each sketch instance has $O(\log(F_0))$ levels where each level contain an array of $r$ counters. 
Using  $O(\frac{\lambda}{\epsilon^2} \log(1/\delta) \log(F_0))$ memory  and $O(\log(1/\delta))$ update time,
the algorithm computes estimate $\hat{f}_k$ for any fixed  $k \ge 1$ and with $f_k \ge F_0/\lambda$, such that 
$|\hat{f}_k - f_k| \ge \epsilon f_k$ with probability at most $\delta$. 
Each memory word has $O(\log(k) + \log(\lambda F_0/\epsilon^2))$ bits.
Using $O(\frac{\lambda}{\epsilon^2} \log(\lambda/\delta) \log(F_0))$ memory and $O(\log(\lambda/\delta))$ update time,
with probability at least $1 - \delta$, estimate $\hat{F}_0$ for $F_0$ and estimates $\hat{f}_i$ for every $f_i$ with $f_i \ge F_0/\lambda$ can be computed 
such that $(1-\epsilon) F_0 \le \hat{F}_0 \le (1+\epsilon) F_0$ and  $(1-\epsilon) f_i \le \hat{f}_i \le (1+\epsilon) f_i$.
Each memory word in this case require $O(\log(k) + \log(\lambda F_0/\epsilon^2))$ bits where $k = \arg\max_i\{f_i \ge F_0/\lambda\}$.

The proof presented here does not attempt to obtain tight constants in the asymptotic space bounds. A more rigorous proof could possibly provide space bounds with tight constants that matches the experimental findings.

\ignore{

	\vspace*{2cm}
	\noindent
	\textbf{Old material}

	Fix any $F_j$ such that $F_j \ge F_0 / \lambda$.  In order to compute an estimate $\hat{F}_j$ of $F_j$, we will use a good estimate $\hat{F}_0$ of $F_0$ as given by the following claim whose proof is given later.

	\begin{claim}
	An estimate $\hat{F}_0$ of $F$ known where $|\hat{F}_0 - F_0| \le \epsilon F_0$ with probability at least $1/\delta_1$.
	\end{claim}

	We will also use the following claim whose proof is given later. We will consider an interval of rows $[l, u]$ in order to estimate $F_j$. Values of $l$ and $r$ will be fixed later.
	Let the interval $[l, u]$ be chosen such that any $w \in [l, u]$ satisfies the property that $r/3 \le F_0/2^w \le 2r/3$. 

	\begin{claim}
	Fix any row $w \in [l, u]$. Let $N_0$ denote the number of distinct elements hashed to row $w$ and let $N_j$ denote the number of distinct elements each with frequency $j$ that are hashed to row $w$. Then
	\begin{eqnarray*}
	\Pr[|N_0 - F_0/2^w| \ge r/12 ] \le \beta_2\\
	\Pr[|2^w \cdot N_j - F_j| \ge \epsilon F_j]  \le 1/\beta_3
	\end{eqnarray*}
	\end{claim}

	\begin{lemma}
	Consider any fixed row $w \in [l, u]$. Let random variable $X$ denote the number of non-dirty array locations with value $j$ in row $w$.  Then $|X - E(X)| \le \epsilon E(X)$ with probability at least $1- \beta_4$ for $r = O(\lambda^2/\epsilon^2)$.
	\end{lemma}

	\begin{proof}
	We will assume the range of $N_0$ and $N_j$ as given in the above claim. Combining with the assumption that $r/3 \le F_0/2^w \le 2r/3$, this implies
	\begin{eqnarray}
	\label{one}
	r/4 \le N_0 \le 3r/4\\
	\label{two}
	\left(1 - \frac{1}{r}\right)^{N_0} \ge 1/3\\
	\label{three}
	 \frac{r(1-\epsilon)}{3\lambda} \le N_j \le \frac{2r(1+\epsilon)}{3\lambda}
	\end{eqnarray}

	 In our algorithm, after the first level hashing into a row, there is a second level hashing that assigns the elements into array locations. This hashing is independent of the first level hashing.

	Let $p_j$ denote the probability that a fixed array location $k$ in row $w$ is occupied by exactly one element from the $N_j$ elements and no other elements. This is given by
	$$
	p_j = \frac{N_j}{r} \left( 1 - \frac{1}{r}\right)^{N_0 - 1}
	$$
	Hence, $E(X) = rp_j$. 
	Also, we note that $X$ is 1-Lipschitsz as changing the location of any one of the element $N_0$ elements can alter the value of $X$ by at most one. Hence by Azuma-H\"offding inequality, we get

	\begin{eqnarray*}
		\Pr[|X - E(X)| \ge \epsilon E(X)] & \le  & 2\cdot\exp\left(-\frac{(\epsilon E(X))^2}{2N_0} \right)\\
						& =  & 2\cdot\exp\left(-\frac{(\epsilon rp_j)^2}{2N_0} \right)
	\end{eqnarray*}

	We use the facts $F_j \ge F_0/\lambda$ and (\ref{one}), (\ref{two}) and (\ref{three}), we obtain that
	\begin{eqnarray*}
		\Pr[|X - E(X)| \ge \epsilon E(X)] & \le  & 2\cdot\exp\left(-\frac{(\epsilon N_j ( 1 - 1/r)^{N_0-1})^2}{2N_0} \right)\\
		& =  & 2\cdot\exp\left(-\frac{(\epsilon N_j ( 1 - 1/r)^{N_0})^2}{2N_0(1 - 1/r)} \right)\\
		& =  & 2\cdot\exp\left(-\frac{\epsilon^2 r (1-\epsilon)^2}{243 \cdot \lambda^2 (1- 1/r)}\right) \\
		& \le \beta_4
	\end{eqnarray*}
	for $r = O(\lambda^2/\epsilon^2)$.

	\end{proof}

}



\ignore{

}
\end{document}